\date{\today}
\newtheorem{theorem}{Theorem}[section]
\newtheorem{remark}{Remark}[section]
\newtheorem{lemma}[theorem]{Lemma}
\newtheorem{assumption}{Assumption}[section]
\newcommand{\MF}{\mathcal{F}}
\newcommand{\MJ}{\mathcal{J}}
\newcommand{\MG}{\mathcal{G}}
\newcommand{\MX}{\mathcal{X}}
\newcommand{\MY}{\mathcal{Y}}
\newcommand{\MB}{\mathcal{B}}
\newcommand{\Bt}{\mathbf{t}}
\newcommand{\By}{\mathbf{y}}
\newcommand{\Real}{\mathbb{R}}
  \renewcommand{\@biblabel}[1]{}
\def\spacingset#1{\renewcommand{\baselinestretch}%
{#1}\small\normalsize} \spacingset{1}
\numberwithin{equation}{section}
\begin{document}
\setstretch{1.2}

\title{Partial Identification and Inference in Duration Models with Endogenous Censoring\thanks{I would like to thank the managing editor Elie Tamer, an anonymous associate editor, and an anonymous referee for their constructive suggestions and comments. This paper is based on a chapter of my dissertation at Kyoto University. I am grateful to Yoshihiko Nishiyama and Ryo Okui for their guidance and helpful suggestions. Parts of this paper were written while I was visiting Penn State as a visiting student. I would like to thank Keisuke Hirano for his hospitality and helpful advice. I would also like to thank Hidehiko Ichimura, Toru Kitagawa, Andrew Chesher, Francesca Molinari, Daniel Wilhelm, Jeff Rowley, and participants at various seminars and conferences for their comments and suggestions. I acknowledge financial support from JSPS KAKENHI Grant (number 18J00173) and ERC Grant (number
715940). Supplementary material is available \href{https://drive.google.com/file/d/16JKj67KfELo0hjuRnOKm55-yWpdavl_H/view?usp=sharing}{here}} 
}
\author{Shosei Sakaguchi\thanks{
Department of Economics, University College London, Gower Street, London WC1E 6BT, United Kingdom.  Email: s.sakaguchi@ucl.ac.uk.}
}
\date{\today}
\maketitle
\vspace{-0.6cm}
\begin{abstract}
This paper studies identification and inference in transformation models with endogenous censoring. Many kinds of duration models, such as the accelerated failure time model, proportional hazard model, and mixed proportional hazard model, can be viewed as transformation models. We allow the censoring of a duration outcome to be arbitrarily correlated with observed covariates and unobserved heterogeneity. We impose no parametric restrictions on either the transformation function or the distribution function of the unobserved heterogeneity. In this setting, we develop bounds on the regression parameters and the transformation function, which are characterized by conditional moment inequalities involving U-statistics. We provide inference methods for them by constructing an inference approach for conditional moment inequality models in which the sample analogs of moments are U-statistics. 
We apply the proposed inference methods to evaluate the effect of heart transplants on patients' survival time using data from the Stanford Heart Transplant Study.
\vspace{1.0em}
\begin{spacing}{1.1}
\noindent
{\bf Keywords:} Partial identification, duration models, transformation models, censoring, conditional moment inequality.\\
\noindent
{\bf JEL codes:} C14, C24, C41.
\end{spacing}
\end{abstract}

\newpage

\begin{spacing}{1.5}
\section{Introduction \label{sec:introduction}}
Duration models are widely used in various empirical studies in economics and biomedical sciences, where outcomes of interest are durations up to the occurrence of some event. Durations of interest in economics include unemployment duration, strike duration, insurance claim duration, and the duration until the purchase of a durable good.\footnote{\cite{van_den_Berg_2001} surveys many applications of duration models.} 
\par
In practice, duration data are often censored. For example, unemployment duration is likely to be censored due to some individuals dropping out of the survey, due to attrition say. Dealing with censoring has been a substantial challenge in duration analysis, and various methods have been proposed. A standard approach is to assume that censoring is independent of unobserved heterogeneity (conditional or unconditional on observed characteristics). Studies employing this approach include \cite{Cox_1972}, \cite{Powell_1984}, \cite{Ying_et_al_1995}, \cite{Yang_1999}, \cite{Honore_et_al_2002}, \cite{Hong_Tamer_2003}, and \cite{Khan_Tamer_2007}, among others. However, in many cases, justifying this independence assumption is difficult. For example, in unemployment duration analysis, unemployed individuals with low motivation to find a job may tend to drop out of the survey at an early stage. \cite{Szydlowski_2019} presents a number of examples where censoring is correlated with unobserved heterogeneity (i.e., censoring is endogenous). 
\par
In this paper, we study identification and inference in transformation models in the presence of endogenous censoring. The transformation model is expressed as
\begin{align}
Y^{\ast}=\Lambda(X^{\prime}\beta_{0}+U), \label{eq:transformation function}
\end{align}
where $\Lambda$ is a non-degenerate monotone function; $Y^{\ast}$ is a dependent variable, which represents a duration outcome in this paper; $X$ is a $k$-dimensional vector of observed covariates, where $k\geq 2$; $\beta_{0}$ denotes a $k$-vector of regression parameters; and $U$ is unobserved heterogeneity that is independent of $X$. Many kinds of duration models, such as the accelerated failure time model, proportional hazard model, and mixed proportional hazard (MPH) model, can be viewed as transformation models.\footnote{Aside from duration models, a class of transformation models contains other important kinds of models, for example, the linear index model and Box-Cox transformation model.} In this paper, we consider a nonparametric transformation model in which neither the transformation function nor the distribution function of the unobserved heterogeneity is parametrically specified. One important model represented by the nonparametric transformation model is the nonparametric MPH model in which neither a baseline hazard function nor the distribution function of the unobserved heterogeneity is parametrically specified. 
\par
Allowing for endogenous censoring, we develop bounds on the regression parameters $\beta_{0}$ and transformation function $\Lambda$ in model (\ref{eq:transformation function}). 
To the best of our knowledge, this is the first work to derive bounds on $\beta_{0}$ and $\Lambda$ in the nonparametric transformation model with endogenous censoring.
The construction of the bounds is built on the rank properties of the nonparametric transformation model studied by \cite{Han_1987} and \cite{Chen_2002}. \cite{Han_1987} shows that if there is no censoring, at least one element of $X$ has full-support on the real line, and $X$ is full-rank, the regression parameters are point-identified up to scale by looking at the rank correlation between the outcomes and covariates. In the presence of endogenous censoring, we develop bounds on $\beta_0$ by supposing that, in the rank property studied by \cite{Han_1987}, each censored outcome takes an infinitely large value or is equal to the censoring time. This reflects the fact that concerning each censored outcome, all we know is that it may take any value larger than censoring time. When $\Lambda$ is strictly increasing, a lower bound on $\Lambda$ is also attained by incorporating endogenous censoring into the rank property studied by \cite{Chen_2002}. Once the bounds on the regression parameters and the transformation function are obtained, we can also derive bounds on the distribution function of the unobserved heterogeneity $U$.
\par
The bounds on $\beta_0$ and $\Lambda$ are characterized by conditional moment inequalities whose sample moments are U-statistics. Based on these conditional moment inequalities, we construct inference methods for these parameters by extending the inference approach of \cite{Andrews_Shi_2013} for conditional moment inequality models to the case of U-statistics. The proposed inference approach can be applied not only to this work but also to other works involving conditional moment inequalities and U-statistics. In this sense, this paper also contributes to the literature on inference for conditional moment inequality models.\footnote{Various inference methods for conditional moment inequality models have been proposed, for example, by \cite{Andrews_Shi_2013, Andrews_Shi_2014, Andrews_Shi_2017}, \cite{Chernozhukov_et_al_2013}, \cite{Armstrong_2014, Armstrong_2015}, \cite{Menzel_2014}, \cite{Chernozhukov_et_al_2019}, and so on. But none of them can be applied to sample moment functions of U-statistics.}
\par
The bounded sets of the parameters proposed in this paper are not necessarily sharp identified sets. On the other hand, using concepts from random set theory (e.g., \cite{Beresteanu_et_al_2011, Beresteanu_et_al_2012}), we also characterize the sharp identified set of the regression parameters. However, constructing a feasible inference method based on it is difficult, whereas the proposed sets are tractable to construct feasible inference methods. In the paper, we also discuss conditions under which the proposed set of $\beta_0$ approaches the sharp identified set.
%\par
%As an extension, we also study identification and inference for the transformation function, $T(\cdot)$, in the presence of endogenous censoring. In the case of no censoring or covariate dependent censoring, some works have studied identification and inference for the transformation function without parametric specifications (e.g., \cite{Horowitz_1996}, \cite{Ye_Duan_1997}, and \cite{Chen_2002}). We derive a bounded set of the transformation function by incorporating endogenous censoring into \citeauthor{Chen_2002}'s (\citeyear{Chen_2002}) rank approach as well as provide its inference procedure. Once the bounded set of the regression parameters and the transformation function is obtained, we can also obtain a bounded set of the distribution function of the unobserved error $U$.

%\subsection{Related Literature}

This paper is mostly related to works that study endogenous censoring. \cite{Khan_Tamer_2009}, \cite{Khan_et_al_2011, Khan_et_al_2016}, \cite{Li_Oka_2015}, and \cite{Fan_Liu_2018} study identification and estimation of parameters in quantile regression models with endogenous censoring. For cross-sectional linear quantile regression models, \cite{Khan_Tamer_2009} provide a point identification result for the linear coefficients under a certain support condition, while \cite{Khan_et_al_2011} provide a partial identification result without this support condition. Under censoring characterized by a certain copula, \cite{Fan_Liu_2018} partially identify the linear coefficients of the same model. \cite{Li_Oka_2015} and \cite{Khan_et_al_2016} consider panel quantile regression models with endogenous censoring and provide partial identification results. In contrast to these works, the identification result in this paper does not rely on quantile modeling, copula characterization of censoring, or panel data. Aside from quantile models, \cite{Szydlowski_2019} considers the parametric MPH model and proposes a sharp identified set and inference method for its parameters. While \cite{Szydlowski_2019} considers the parametric MPH model, we consider the nonparametric one, which is robust to misspecification of the hazard function or the distribution function of unobserved heterogeneity. 

\par
For competing risks models, \cite{Honore_Lleras_Muney_2006} partially identify the parameters in the accelerated failure time model, and \cite{Kim_2018} derives computationally tractable bounds on distributions of latent durations by exploiting the discreteness of observed durations. In this paper, we allow for continuous observed durations and do not specify competing risks. In a sample selection model, \cite{Honore_Hu_2020} obtain the sharp identified set of linear regression coefficients by imposing a particular structure on the sample selection.

\par
The remainder of this paper is structured as follows. Section \ref{sec:model and identification} describes the setup and assumptions and then provides the main results to develop bounds on the regression parameters and the transformation function. We also characterize the sharp identified set of the regression parameters and compare it to our proposed superset. Section \ref{sec:inference} provides an inference method for the regression parameters and derives its asymptotic properties. A joint inference method for the regression parameters and the transformation function is presented in Appendix \ref{app:joint_inference}. Section \ref{sec:simulation study} presents numerical examples and Monte Carlo simulation results. The numerical examples show how the bounds on the regression parameters and transformation function vary depending on the degree of censoring and the support of covariates. The Monte Carlo simulation results show the finite sample properties of the proposed inference method for the regression parameters. Section \ref{sec:empirical illustration} presents an empirical illustration, where we apply our proposed inference methods to evaluate the effect of heart transplants on patients' survival duration using data from the Stanford Heart Transplant Study. 
We conclude this paper with some remarks in Section \ref{sec:conclusion}. All proofs are presented in Appendices \ref{app:proof_1} and \ref{app:proof_2}. 
Some additional numerical examples and Monte Carlo simulation results are presented in the supplementary material to this paper.

\section{Model and Identification \label{sec:model and identification}}

We first describe the setting of the paper and provide conditions to develop bounds on the regression parameters in Section \ref{sec:model}. Subsequently, in Section \ref{sec:partial identification}, we present the main result to construct bounds on the regression parameters. In Section \ref{sec:sharp identified set}, we characterize the sharp identified set of the regression parameters using concepts from random set theory, and compare this with our proposed superset.
Section \ref{sec:transformation function} derives bounds on the transformation function and the distribution of the unobserved heterogeneity. 

\subsection{Model \label{sec:model}}

We consider the transformation model in the form of (\ref{eq:transformation function}). In the model, we do not specify the transformation function $\Lambda$ or the distribution function of the unobserved heterogeneity, which we denote by $F_{U}$. Because of this, we impose location and scale normalizations. For the location normalization, we suppose that the constant term is equal to zero (i.e., $X$ does not contain a constant term). For the scale normalization, we suppose that the absolute value of the first component of $\beta_{0}$ is equal to one (i.e., $\left|\beta_{0,1}\right|=1$), where $\beta_{0,j}$ denotes the $j$-th component of $\beta_{0}$. Later, in Section \ref{sec:transformation function}, we impose an additional location normalization to fix $\Lambda$. Let $B\equiv \{-1,1\}\times \Real^{k-1}$ denote the normalized regression parameter space. Our first main focus is on the identification of and inference on the normalized regression parameters $\beta_{0}$ in $B$.
\par
The transformation model contains many kinds of duration models as its special cases: the accelerated failure time model, Cox's proportional hazard model, and the MPH model.\footnote{If $\Lambda(Y^{\ast})=\exp( Y^{\ast})$, the transformation model corresponds to the accelerated failure time model; if $\Lambda(Y^{\ast})=\exp( \Delta(Y^{\ast}))$, where $\Delta(\cdot)$ is the integrated baseline hazard function and $U$ has the CDF $F(u)=1-\exp(-e^{u})$, the transformation model corresponds to Cox's proportional hazard model; if  $\Lambda(Y^{\ast})=\exp(\Delta(Y^{\ast}))$ and $U=\epsilon + \nu$ where $\nu$ is unobserved heterogeneity and $\epsilon$ has the CDF $F(\epsilon)=1-\exp(-e^{\epsilon})$, the transformation model corresponds to the MPH model. For more details, see \citeauthor{Horowitz_2009} (\citeyear{Horowitz_2009}, Ch. 6).} In particular, the nonparametric MPH model is an important duration model represented by a nonparametric transformation model. The MPH model extends Cox's proportional hazard model by incorporating individual unobserved heterogeneity. Since introduced in \cite{Lancaster_1979}, the MPH model has been widely used in various empirical studies in economics. In the nonparametric MPH model, the normalized regression parameters $\beta_{0}$ can be interpreted as the logs of the scale-normalized hazard ratios (see, e.g., \cite{Lancaster_1990}).
\par
When data are subject to censoring, the duration outcome $Y^{\ast}$ cannot always be observed. Instead, for unit $i=1,\ldots,n$, we observe $W_{i}=\left(Y_{0i},D_{i},X_{i}\right)$ such that $Y_{0i}=\mbox{min}\left\{ Y_{i}^{\ast},C_{i}\right\} $ and $D_{i}=I\left[Y_{i}^{\ast}\leq C_{i}\right]$, where $C_{i}$ is a random censoring variable and $I\left[\cdot\right]$ denotes the indicator function. $D_{i}$ is a censoring indicator that takes the value zero if $Y_{i}^{\ast}$ is censored and the value one if $Y_{i}^{\ast}$ is observed. Note that we consider right censoring in the paper, but all the results presented below are easily extendable to left and interval censoring. Using $D_{i}$, $Y_{0i}$ can be expressed as $Y_{0i}=D_{i}Y_{i}^{\ast}+(1-D_{i})C_{i}$. Let $P$ denote the distribution function of a vector of random variables $\left(X,U,C\right)$ and $\MX \subseteq \Real^{k}$ denote the support of $X$.
\par
Throughout this paper, we suppose that the following assumptions hold.
\bigskip
\begin{assumption}\label{asm:monotonicity}
$\Lambda$ is a non-degenerate monotonically increasing function.
\end{assumption}

\begin{assumption}\label{asm:iid}
The vectors $\left(Y_{i}^{\ast},C_{i},X_{i}\right)$, $i=1,\ldots,n$, are independent and identically distributed (i.i.d) as $(Y^\ast, X,C)$ where $Y^\ast$ is distributed according to the latent transformation model (\ref{eq:transformation function}) and the support of $X$ is $\MX$.
\end{assumption}

\begin{assumption} \label{asm:independence}
$U$ is independent of $X$.
\end{assumption}
\bigskip
%\begin{assumption} \label{asm:censoring}
%Let $D=I\left[Y^{\ast}\leq C\right]$ and $\mathcal{X}_{uc}=\left\{ x\in\mathcal{X}:P\left(D=1\mid X=x\right)>0\right\} $. Then $P(\mathcal{X}_{uc})>0$.
%\end{assumption}

%\begin{assumption} \label{asm:support condition}
%$\mathcal{X}_{uc}$ contains at least two distinct points.
%There exist at least two district points $x_1$ and $x_2$ in $\MX$ such that $P()$
%\bigskip
%\end{assumption}

Note that Assumption \ref{asm:independence} does not restrict the relationship between $U$ and $C$, allowing for endogenous censoring.\footnote{\cite{Chiappori_et_al_2015} study identification and estimation of the nonparametric transformation model when some covariates are endogenous and there is no censoring.} %Assumption \ref{asm:censoring} requires that the probability of censoring is not equal to one for all $x$. In contrast to many works in the semiparametric literature, Assumption \ref{asm:support condition} does not impose a full-support conditona on one of the covariate or full-rank condition on the covariates, in line with \cite{Magnac_Maurin_2008}, \cite{Blevins_2011}, and \cite{Komarova_2013}, as we no longer focus on point identification. We allow all the covariates to be discrete random variables, to not have large support, or to be arbitrarily correlated with each other. \cite{Magnac_Maurin_2008}, \cite{Blevins_2011}, and \cite{Komarova_2013} discuss the difficulties of justifying the full-support and full-rank conditions in a number of cases, and provide partial identification results for some other semiparametric models in the absence of these conditions.

\subsection{Identification of the Regression Parameters \label{sec:partial identification}}

This section constructs bounds on the regression parameters $\beta_{0}$.
The construction is based on a rank property of the latent outcome $Y^{*}$ and the regression part $X^\prime \beta_{0}$ in the transformation model (\ref{eq:transformation function}).  
For explanatory purposes, we first introduce the point identification result of \cite{Han_1987} in the absence of censoring. %I then provide the main identification result.
\par
\cite{Han_1987} supposes that there is no censoring (i.e., $Y^{\ast}$ is always observed). In this case, under Assumptions \ref{asm:monotonicity}--\ref{asm:independence}, a full-support condition on an element of $X$, a full-rank condition on $X$, and a continuous distribution of $U$, he shows that $\beta_{0}$ uniquely satisfies the following rank property,
\begin{equation}
x_{i}^{\prime}\beta_{0}\geq x_{j}^{\prime}\beta_{0}\Rightarrow P(Y_{i}^{\ast}\geq Y_{j}^{\ast}\mid x_{i},x_{j})\geq \frac{1}{2} \label{eq:rank inequality}
\end{equation}
for all $(x_{i},x_{j})\in \mathcal{X}^{2}$, where $P(\cdot \mid x_{i},x_{j})$ denotes the conditional probability given $(X_i,X_j)=(x_i,x_j)$.\footnote{\cite{Han_1987} actually considers a slightly different rank property. Theorem H.1 in the supplementary material to this paper shows that $\beta_0$ uniquely satisfies (\ref{eq:rank inequality}) for all $(x_{i},x_{j})\in \mathcal{X}^{2}$ under the same assumptions as in his theorem.}
This rank property means that, for any given pair of $(x_{i},x_{j})$, the probability that $Y_{i}^{\ast}$ is larger than or equal to $Y_{j}^{\ast}$ is greater than or equal to 1/2 if and only if $x_{i}^{\prime}\beta_{0}$ is larger than or equal to $x_{j}^{\prime}\beta_{0}$. Then $\beta_{0}$ is the unique value in $B$ that satisfies this rank relation for any pair of $(x_{i},x_{j}) \in \MX^{2}$. In other words, for any $\beta\neq\beta_{0}$, there exists at least one pair $(x_{i},x_{j}) \in \MX^{2}$ that violates the rank relation (\ref{eq:rank inequality}). 
\par
In this paper, we suppose that censoring exists and it may be endogenous. Hence, we cannot always observe $Y_{i}^{\ast}$ and do not have any information about the censoring mechanism. The censoring variable $C_{i}$ may be arbitrarily correlated with the observed covariates $X_{i}$ and unobserved heterogeneity $U_{i}$. 
\par
In this situation, we can still construct bounds on the regression parameters $\beta_{0}$. Let $Y_{1i}\equiv D_{i}Y_{i}^{\ast}+(1-D_{i})(+\infty)$, which is an outcome variable that takes an arbitrary large value when the primary outcome is censored. Recall that $Y_{0i}=D_{i}Y_{i}+(1-D_{i})C_{i}$. Then because $P(Y_{1i}\geq Y_{0j}\mid x_{i},x_{j})\geq P(Y_{i}^{\ast}\geq Y_{j}^{\ast}\mid x_{i},x_{j})$
holds for all $(x_{i},x_{j}) \in \MX^{2}$, the following rank property holds
from (\ref{eq:rank inequality}):
\begin{equation}
x_{i}^{\prime}\beta_{0}\geq x_{j}^{\prime}\beta_{0}\Rightarrow P(Y_{1i}\geq Y_{0j}\mid x_{i},x_{j})\geq\frac{1}{2} \label{eq:rank inequality with censoring}
\end{equation}
for all $(x_{i},x_{j}) \in \mathcal{X}^{2}$. Therefore, defining
\begin{equation*}
B_{I}\equiv \{\beta\in B\mid \ \mbox{$x_{i}^{\prime}\beta\geq x_{j}^{\prime}\beta\Rightarrow P\left(Y_{1i}\geq Y_{0j}\mid x_{i},x_{j}\right)\geq\frac{1}{2}$ for all \mbox{\ensuremath{(x_{i},x_{j})\in\mathcal{X}^{2}}}}\}, 
\end{equation*}
$\beta_{0}$ is contained in $B_{I}$. This set is derived from a worst-case analysis where we suppose that censored outcomes may take extreme values, $C$ or $+\infty$, for any given value of $x$. This reflects the fact that concerning each censored outcome, all we know is that it may take any value at least larger than its censored time. 

%We are now concerned whether $B_I$ is an informative set or not. For example, if $P(Y_{1i}\geq Y_{0j}\mid x_{i},x_{j})\geq\frac{1}{2}$ holds for all $(x_i,x_j) \in \MX^2$, any $\beta \in B(=\{-1,1\}\times \Real^{k-1}$ is contained by $B_I$. The following assumption ensures that $B_I$ is a proper subset of $B$.
The following assumption ensures that $B_I$ is a proper subset of $B(=\{-1,1\}\times \Real^{k-1})$.
\bigskip

\begin{assumption}\label{asm:support condition}
Let $\widetilde{\MX}^{2} \equiv \{(x_i,x_j)\in \MX^2 \mid P(Y_{1i}\geq Y_{0j}\mid x_{i},x_{j})<\frac{1}{2}\}$. Then $P((X_i,X_j) \in \widetilde{\MX}^{2})>0$.
\end{assumption}

\bigskip

The following theorem summarizes the main result in this section.\bigskip

\begin{theorem} \label{thm:partial identification} 
Under Assumptions \ref{asm:monotonicity}--\ref{asm:support condition}, $\beta_{0}\in B_{I} \subset B$ a.s. %Moreover, letting
%\begin{align*}
%    \widetilde{B}=\{\beta \in B \mid x_{i}^{\prime}\beta \geq x_{j}^{\prime}\beta \mbox{ for some }(x_i,x_j)\in \widetilde{\MX}^2\},
%\end{align*}
%$B_I = B \setminus \widetilde{B}$ holds.
\end{theorem}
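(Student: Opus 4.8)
The plan is to prove the two inclusions separately: $\beta_{0}\in B_{I}$ and $B_{I}\subsetneq B$. For the first, the key point is that only the ``forward'' half of the rank property \eqref{eq:rank inequality} is needed, and this half holds under Assumptions \ref{asm:monotonicity}--\ref{asm:independence} alone, so the full-support, full-rank, and continuity conditions that \cite{Han_1987} uses (for uniqueness) can be dispensed with. Fix $(x_{i},x_{j})\in\MX^{2}$ with $x_{i}'\beta_{0}\ge x_{j}'\beta_{0}$ and write $Y_{i}^{\ast}=\Lambda(x_{i}'\beta_{0}+U_{i})$, $Y_{j}^{\ast}=\Lambda(x_{j}'\beta_{0}+U_{j})$. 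Since $(U_{i},U_{j})$ are i.i.d.\ and jointly independent of $(X_{i},X_{j})$ (Assumptions \ref{asm:iid}--\ref{asm:independence}), conditioning on $(X_{i},X_{j})=(x_{i},x_{j})$ leaves the exchangeable product law of $(U_{i},U_{j})$ intact, so $U_{i}-U_{j}$ is symmetric about $0$ and hence $P(U_{i}-U_{j}\ge 0)\ge\frac{1}{2}$ (because $P(U_{i}-U_{j}\ge 0)=P(U_{i}-U_{j}\le 0)$ and these probabilities sum to at least $1$). Using monotonicity of $\Lambda$ (Assumption \ref{asm:monotonicity}) and $x_{j}'\beta_{0}-x_{i}'\beta_{0}\le 0$,
\begin{align*}
P(Y_{i}^{\ast}\ge Y_{j}^{\ast}\mid x_{i},x_{j})
&\ \ge\ P(x_{i}'\beta_{0}+U_{i}\ge x_{j}'\beta_{0}+U_{j}\mid x_{i},x_{j})\\
&=\ P(U_{i}-U_{j}\ge x_{j}'\beta_{0}-x_{i}'\beta_{0})\ \ge\ P(U_{i}-U_{j}\ge 0)\ \ge\ \frac{1}{2},
\end{align*}
which is the forward direction of \eqref{eq:rank inequality}.

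Next I would pass to the censored variables. Pointwise one has $Y_{1i}\ge Y_{i}^{\ast}$ ($Y_{1i}=Y_{i}^{\ast}$ on $\{D_{i}=1\}$, $Y_{1i}=+\infty$ on $\{D_{i}=0\}$) and $Y_{0j}\le Y_{j}^{\ast}$ ($Y_{0j}=Y_{j}^{\ast}$ on $\{D_{j}=1\}$, $Y_{0j}=C_{j}<Y_{j}^{\ast}$ on $\{D_{j}=0\}$), so $\{Y_{i}^{\ast}\ge Y_{j}^{\ast}\}\subseteq\{Y_{1i}\ge Y_{0j}\}$. Combining this with the display, $x_{i}'\beta_{0}\ge x_{j}'\beta_{0}$ implies $P(Y_{1i}\ge Y_{0j}\mid x_{i},x_{j})\ge\frac{1}{2}$; since also $\beta_{0}\in B$ by the scale normalization, $\beta_{0}\in B_{I}$. (As conditional probabilities are defined only up to $P_{(X_{i},X_{j})}$-null sets, this inclusion, and \eqref{eq:rank inequality with censoring}, are to be read in the almost-sure sense.)

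For $B_{I}\subsetneq B$, the plan is to use Assumption \ref{asm:support condition} to exhibit an explicit element of $B\setminus B_{I}$. Since $P((X_{i},X_{j})\in\widetilde{\MX}^{2})>0$, the set $\widetilde{\MX}^{2}$ is nonempty; pick $(x_{i}^{\circ},x_{j}^{\circ})\in\widetilde{\MX}^{2}$, so $P(Y_{1i}\ge Y_{0j}\mid x_{i}^{\circ},x_{j}^{\circ})<\frac{1}{2}$. First, $x_{i}^{\circ}\ne x_{j}^{\circ}$: if they were equal, running the argument above (the forward rank property) with equal regressor values, using exchangeability of $(U_{i},U_{j})$, would give $P(Y_{i}^{\ast}\ge Y_{j}^{\ast}\mid x_{i}^{\circ},x_{j}^{\circ})\ge\frac{1}{2}$, hence $P(Y_{1i}\ge Y_{0j}\mid x_{i}^{\circ},x_{j}^{\circ})\ge\frac{1}{2}$, a contradiction. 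Write $d\equiv x_{i}^{\circ}-x_{j}^{\circ}=(d_{1},\tilde d)\ne 0$ with $d_{1}\in\Real$ and $\tilde d\in\Real^{k-1}$, and choose $\beta=(\beta_{1},\tilde\beta)\in B=\{-1,1\}\times\Real^{k-1}$ as follows: if $\tilde d\ne 0$, set $\beta_{1}=1$ and $\tilde\beta=t\tilde d$ with $t>0$ large enough that $d_{1}+t\|\tilde d\|^{2}\ge 0$; if $\tilde d=0$ (so $d_{1}\ne 0$), set $\beta_{1}=\sign(d_{1})$ and $\tilde\beta=0$, so that $d'\beta=|d_{1}|>0$. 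In either case $(x_{i}^{\circ})'\beta\ge(x_{j}^{\circ})'\beta$ while $P(Y_{1i}\ge Y_{0j}\mid x_{i}^{\circ},x_{j}^{\circ})<\frac{1}{2}$, so $\beta$ violates the defining condition of $B_{I}$; hence $\beta\in B\setminus B_{I}$ and $B_{I}\subsetneq B$.

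There is no deep obstacle; the work is entirely in getting the details right. The two places that need care are: (a) justifying $P(U_{i}-U_{j}\ge 0)\ge\frac{1}{2}$ without assuming $\Lambda$ strictly increasing or $U$ continuous, which the weak-monotonicity event inclusion together with the exchangeability argument (allowing an atom at $U_{i}=U_{j}$) handles; and (b) ruling out $x_{i}^{\circ}=x_{j}^{\circ}$ for pairs in $\widetilde{\MX}^{2}$, which is exactly what keeps the perturbation used to show $B_{I}\subsetneq B$ inside the normalized parameter space $B$.
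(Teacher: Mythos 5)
Your proof is correct and follows essentially the same route as the paper's: the event inclusion $\{Y_i^\ast\geq Y_j^\ast\}\subseteq\{Y_{1i}\geq Y_{0j}\}$, monotonicity of $\Lambda$, symmetry of $U_i-U_j$ about zero to get the $1/2$ bound, and Assumption \ref{asm:support condition} to exhibit a $\beta\in B\setminus B_I$. Your added details (handling a possible atom at $U_i=U_j$, the explicit construction of the violating $\beta$, and ruling out $x_i^\circ=x_j^\circ$) are just careful elaborations of steps the paper states more tersely.
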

\bigskip
A proof of this theorem is given in Appendix \ref{app:proof_1}. We make several remarks about this theorem.
First, in this theorem, we do not impose a full-support condition on an element of $X$, in contrast to many works in the semiparametric literature, as we no longer focus on point identification.\footnote{\cite{Magnac_Maurin_2008}, \cite{Blevins_2011}, and \cite{Komarova_2013} discuss the difficulties of justifying the full-support condition in a number of cases, and provide partial identification results for different semiparametric models in the absence of the full-support condition.}
A full-rank condition on $X$ is necessary (but not sufficient) for $B_I$ to be bounded.
Second, as we will see in the following subsection, $B_{I}$ is not a sharp identified set. However, this set is easy to compute and to construct a feasible inference method, as we will see in Section \ref{sec:inference}. The following subsection shows how the sharp identified set can be characterized, and its computational difficulty.

%The following remark gives discussion about boundedness of $B_I$.
%\begin{remark}
%Whether $B_I$ (or its projection) is bounded set depends on $\widetilde{\MX}^2$. Let $\ell \in %\{1,\ldots,k\}$. If there exist $(x_i,x_j) \in \widetilde{\MX}^2$ such that $x_i$
%\end{remark}

%First, $B_{I}$ is not a sharp identified set regardless of whether the full-support an full-rank conditions are imposed. However, this set is easy to compute and tractable to construct a feasible inference method, as we will see in Section \ref{sec:inference}. The following section shows how the sharp identified set can be characterized and its computational difficulty.

%First, in this theorem, we do not impose the full-support or full-rank condition on the covariates, in contrast to many works in the semiparametric literature. Second, as we will see in the following subsection, $B_{I}$ is not a sharp identified set regardless of whether the full-support an full-rank conditions are imposed. However, this set is easy to compute and tractable to construct a feasible inference method, as we will see in Section \ref{sec:inference}. The following section shows how the sharp identified set can be characterized and its computational difficulty.

\subsection{Characterization of the Sharp Identified Set \label{sec:sharp identified set}}

In this section, we illustrate a way to characterize the sharp identified set using concepts from random set theory. Subsequently, we compare the sharp identified set with $B_{I}$. This comparison clarifies why $B_{I}$ is not a sharp set and in which situations $B_I$ approaches the sharp set. For the definitions and notations for random set theory used in this section, see, for example, \cite{Molchanov_2005} or \citeauthor{Beresteanu_et_al_2012} (\citeyear{Beresteanu_et_al_2012}, Appendix A). Throughout this section, for any variable $A$, we denote by $\tilde{A}$ an independent copy of $A$.
\par
Using concepts from random set theory, we can characterize the incomplete information for the latent outcome variable $Y^{\ast}$. For any random variable $A$, let $A_{x}$ denote a random variable that has the conditional distribution of $A$ given $X=x$. Then, for a given $x\in\mathcal{X}$, what we observe for the latent outcome variable in the presence of endogenous censoring can be expressed as the random set $\mathcal{Y}_{x}$ defined as
\begin{eqnarray*}
\mathcal{Y}_{x} & = & \begin{cases}
\begin{array}{c}
\{Y_{x}^{\ast}\}\\
(C_{x},+\infty)
\end{array} & \begin{array}{c}
\mbox{if}\ D_{x}=1\\
\mbox{otherwise}
\end{array}\end{cases},
\end{eqnarray*}
where $Y_{x}^{\ast}$ and $C_{x}$ are, respectively, a latent outcome and a censoring variable given $X=x$;  $D_{x}=I\left[Y_{x}^{\ast}\leq C_{x}\right]$ is a censoring indicator given $X=x$. %; $(C_{x},+\infty)$ is a set of random variables each of which is larger than $C_{x}$ with probability one.
Hence, all the information for the latent outcome variable can be expressed by stating that $Y_{x}^{\ast}\in\mbox{Sel}(\mathcal{Y}_{x})$.\footnote{For any random set ${\cal Y}$, a random variable $Y$ is called a measurable selection of ${\cal Y}$ if $Y\in{\cal Y}$ a.s., and $\mbox{Sel}\left({\cal Y}\right)$ is defined to be the set of all measurable selections of ${\cal Y}$. See, for example, \citeauthor{Molchanov_2005} (\citeyear{Molchanov_2005}, Ch. 1) or \citeauthor{Beresteanu_et_al_2012} (\citeyear{Beresteanu_et_al_2012}, Appendix A).} Let $B_{0}$ denote the sharp identified set of $\beta_{0}$. Throughout this section, we suppose that the full-support condition on one element of $X$, the full-rank condition on $X$, and the continuity of $F_U$ hold to ensure the sharp identification result.\footnote{More specifically, we suppose that the following three conditions hold: 
(i) $\MX$ is not contained in any proper linear subspace of $\Real^k$;
(ii) for almost every $x_{-1} = (x_2\ \ldots, x_k)$, the distribution of $X_{(1)}$ conditional on $(X_{(2)},\ldots,X_{(k)})=x_{-1}$ has an everywhere positive density, where $X_{(m)}$ denotes the $m$-th element of $X$; 
(iii) $F_U$ is a continuous distribution function.
\label{asm:full-support and rank conditions}
}\footnote{These conditions might not be needed to derive the sharp identified set; however, to our knowledge, there is no work that derives the sharp identified set for the nonparametric transformation model in the absence of these conditions.}
\par
Combining the above random set representation with \citeauthor{Han_1987}'s (\citeyear{Han_1987}) point identification result that $\beta_0$ uniquely satisfies the the rank property (\ref{eq:rank inequality}) for all $(x_i,x_j) \in \MX^2$, the sharp identified set $B_{0}$ is characterized as the set of $\beta$ such that there exists a family of pairs of selections $(Y_{x_{i}},\tilde{Y}_{x_{j}})\in\mbox{Sel}(\mathcal{Y}_{x_{i}})\times \mbox{Sel}(\widetilde{\MY}_{x_j})$ over $(x_{i},x_{j})\in {\cal{X}}^{2}$ that satisfy the following:
\begin{equation}
x_{i}^{\prime}\beta\geq x_{j}^{\prime}\beta\Leftrightarrow P(Y_{x_{i}}\geq \tilde{Y}_{x_{j}})\geq\frac{1}{2} \label{eq:rank inequality_random set}
\end{equation}
for all $(x_{i},x_{j})\in{\cal X}^{2}$, where $\widetilde{\MY}_{x}$ is the random set of $\tilde{Y}_{x}$. Therefore, $B_{0}$ is characterized as
\begin{equation}
B_{0}=\left\{ \beta\in B\mid \exists \left\{(Y_{x_i},\tilde{Y}_{x_j})\in \mbox{Sel}(\MY_{x_{i}})\times
\mbox{Sel}(\widetilde{\MY}_{x_{j}})\right\}_{(x_{i},x_{j})\in {\cal{X}}^{2}},\forall(x_{i},x_{j})\in{\cal X}^{2}, \mbox{(\ref{eq:rank inequality_random set}) holds}\right\}. \label{eq:sharp identified set}
\end{equation}
\par
We next look at how the proposed set $B_{I}$ can be characterized by the random set. Given $x \in \MX$, by definition, $Y_{1x}$ and $Y_{0x}$ satisfy (i) $Y_{1x},Y_{0x}\in\mbox{Sel}(\mathcal{Y}_{x})$ a.s. and (ii) $Y_{0x} \leq Y_{x}\leq Y_{1x}$ a.s. for all $Y_{x}\in \mbox{Sel}(\mathcal{Y}_{x})$. 
Thus for any given pair $(x_{i},x_{j}) \in \MX^2$, the parameter set 
\begin{equation*}
\{\beta\in B\mid \mbox{$x_{i}^{\prime}\beta\geq x_{j}^{\prime}\beta\Rightarrow P\left(Y_{1i}\geq Y_{0j}\mid x_{i},x_{j}\right)\geq\frac{1}{2}$}\}
\end{equation*}
 is equivalent to 
\begin{equation*}
\left\{ \beta\in B\mid \exists\left(Y_{x_{i}},\tilde{Y}_{x_{j}}\right)\in \mbox{Sel}(\mathcal{Y}_{x_{i}})\times\mbox{Sel}(\widetilde{\MY}_{x_{j}}),\mbox{(\ref{eq:rank inequality_random set}) holds}\right\}, 
\end{equation*}
which is the set of $\beta$ such that for the fixed $(x_{i},x_{j})$, there exists a pair of selections $\left(Y_{x_{i}},\tilde{Y}_{x_{j}}\right)\in\mbox{Sel}(\mathcal{Y}_{x_{i}})\times\mbox{Sel}(\widetilde{\MY}_{x_j})$ that satisfies the inequality (\ref{eq:rank inequality_random set}). Therefore, $B_{I}$ is characterized as the set of $\beta$ such that for any pair $(x_{i},x_{j})\in {\cal{X}}^{2}$, there exists a pair of selections $\left(Y_{x_{i}},\tilde{Y}_{x_{j}}\right)\in \mbox{Sel}(\mathcal{Y}_{x_{i}})\times\mbox{Sel}(\widetilde{\MY}_{x_{j}})$ that satisfy the inequality (\ref{eq:rank inequality_random set}). Formally, $B_{I}$ is characterized as
\begin{equation}
B_{I}=\left\{ \beta\in B\mid \forall(x_{i},x_{j})\in{\cal X}^{2}, \exists\left(Y_{x_{i}},\tilde{Y}_{x_{j}}\right)\in \mbox{Sel}(\mathcal{Y}_{x_{i}})\times\mbox{Sel}(\widetilde{\MY}_{x_{j}}),\mbox{(\ref{eq:rank inequality_random set}) holds}\right\}. \label{eq:proposed identified set}
\end{equation}
\par
The difference between (\ref{eq:sharp identified set}) and (\ref{eq:proposed identified set}) (i.e., the different orders of ``$\forall(x_{i},x_{j})\in{\cal X}^{2}$'' and ``$\exists\left(Y_{x_{i}},\tilde{Y}_{x_{j}}\right)\in \mbox{Sel}(\mathcal{Y}_{x_{i}})\times\mbox{Sel}(\widetilde{\MY}_{x_{j}})$'' in (\ref{eq:sharp identified set}) and (\ref{eq:proposed identified set})) shows that $B_{0}$ is contained in $B_{I}$, but $B_0$ does not necessary contain $B_I$. Hence $B_{I}$ is not necessarily a sharp set. Some intuition for the non-sharpness of $B_{I}$ is as follows. Fix a triple $\left(x_{i},x_{j},x_{k}\right) \in \MX^3$ and $\beta \in B$ such that $x_{i}^{\prime}\beta \leq x_{j}^{\prime}\beta \leq x_{k}^{\prime}\beta$.   
When we check  whether $\beta \in B_{I}$ through the rank property (\ref{eq:rank inequality}), in comparing $x_{j}$ and $x_{k}$, we suppose that the latent outcome variable $Y_{x_{j}}^{\ast}$ takes its smallest value, $C_{x_{j}}$; whereas, in comparing $x_{j}$ and $x_{i}$, we suppose that $Y_{x_{j}}^{\ast}$ takes its largest value, $+\infty$. However, when we check whether $\beta \in B_0$ through the characterization of (\ref{eq:sharp identified set}),
we compare fixed selections $Y_{x} \in \mbox{Sel}(\mathcal{Y}_{x})$ over all $x\in \cal{X}$; that is, $Y_{x_{j}}$ does not change when comparing different $\tilde{Y}_{x_{i}}$ over $x_{i}\in \cal{X}$. This difference explains why $B_{I}$ is larger than $B_{0}$. 

\bigskip

\begin{remark}
Although we could characterize the sharp set $B_{0}$ as (\ref{eq:sharp identified set}), it is hard to compute. When examining whether a certain $\beta$ is contained in $B_{0}$, one would have to search for the existence of measurable selections $Y_{x}\in\mbox{Sel}(\mathcal{Y}_{x})$ for all $x\in \cal{X}$ that satisfy the rank inequality (\ref{eq:rank inequality_random set}) for all pairs $(x_{i},x_{j})\in\mathcal{X}^{^{2}}$. In contrast, $B_{I}$ is much easier to compute. For this reason, we focus on $B_{I}$ in this paper rather than the sharp set.\footnote{
\cite{Beresteanu_et_al_2011, Beresteanu_et_al_2012} suggest using the support function and Aumann expectation to make it easy to compute the sharp identified set. However, when we follow this approach, we still have to search for the measurable selections $Y_{x}\in\mbox{Sel}(\mathcal{Y}_{x})$, for all $x\in \cal{X}$, to satisfy a certain equality. Thus this approach does not greatly ease the computation of the sharp identified set in our setting.\bigskip%\footnote{\cite{Chesher_Rosen_2017} propose a method to characterize a sharp identified set by applying random set theory to sets of unobservable variables. Applying their method could have the same computational difficulty.}
}
\end{remark}

\begin{remark}
 There are some situations when $B_{I}$ is close to $B_{0}$. The first case is when censoring does not occur frequently. If censoring is unlikely to occur given any $x\in \mathcal{X}$, measurable selections of $\mathcal{Y}_{x}$ correspond to a single measurable selection $Y_{x}^{\ast}$ with high probability. The second case is when $Y_{x}^{\ast}$ is not censored at small values for each $x \in \MX$; that is, $C_{x}$ takes a large value when $Y_{x}^{\ast}$ is censored. In this case, for each $x \in \MX$, the random interval $(C_{x},+\infty)$ in the definition of $\MY_x$ is narrow, and hence the difference between (\ref{eq:sharp identified set}) and (\ref{eq:proposed identified set}) does not make much difference between $B_{I}$ and $B_{0}$. In the empirical example of the heart transplantation study in Section \ref{sec:empirical illustration}, this case corresponds to the case when each patient is unlikely to drop out of the study at an early stage.
\end{remark} 
\bigskip

\subsection{Identification of the Transformation Function \label{sec:transformation function}}

We next develop a lower bound on the transformation function in the presence of endogenous censoring. Knowing about the transformation function enables us to infer the type of duration model. We here suppose that $\Lambda$ is strictly monotonic. 
\bigskip

\begin{assumption}\label{asm:strict monotonicity}
$\Lambda$ is a strictly increasing function.
\end{assumption}
\bigskip

Let $T\equiv \Lambda^{-1}$. We call $T$ the transformation function when no confusion arises.  
As an additional location normalization to fix $T$, we suppose $T(\tilde{y})=0$ for some specific outcome value $\tilde{y}<\infty$.
We now construct a bound on the normalized $T(y)$ at a particular value of $y\in\mathbb{R}$.
\par
Construction of the bound is built on the rank property studied by \cite{Chen_2002}. 
In the case of no censoring, provided that Assumptions \ref{asm:iid}--\ref{asm:independence} and $\ref{asm:strict monotonicity}$ hold and that the true regression parameters $\beta_{0}$ are given,\footnote{Under the supposed conditions with the full-support and full-rank conditions on $X$ and the continuity of $F_U$, $\beta_{0}$ can be point identified by, for example, applying \citeauthor{Han_1987}'s \citeyearpar{Han_1987} maximum rank correlation approach.} \cite{Chen_2002} shows that $T\left(y\right)$ satisfies the following rank property:
\begin{equation}
x_{i}^{\prime}\beta_{0}-x_{j}^{\prime}\beta_{0}\geq T(y) \Rightarrow P\left(Y_{i}^{\ast}\geq y\mid x_{i}\right)\geq P\left(Y_{j}^{\ast}\geq\tilde{y}\mid x_{i}\right) \label{eq:chen's rank property}
\end{equation}
for all $(x_{i},x_{j})\in{\cal X}^{2}$, where we recall that $\tilde{y}$ is such that $T\left(\tilde{y}\right)=0$ for the location normalization. 
Moreover, supposing further that the full-support and full-rank conditions on $X$ and the continuity of $F_U$ hold, \cite{Chen_2002} shows that $T\left(y\right)$ can be point identified as the minimum value of $t$ that satisfies the inequality (\ref{eq:chen's rank property}) with $t$ in place of $T(y)$ for all $(x_i,x_j)\in \MX^2$. \cite{Chen_2002} also provides an inference method based on this identification result. 
\par
In the presence of endogenous censoring, we can construct a lower bound on $T\left(y\right)$ using a similar idea to that presented in Section \ref{sec:partial identification}. If $\beta_{0}$ were given, because $P(Y_{1i}\geq y\mid x_{i})\geq P(Y_{i}^{\ast}\geq y\mid x_{i})$
and $P(Y_{j}^{\ast}\geq y\mid x_{j})\geq P(Y_{0j}\geq y\mid x_{j})$ hold for all $(x_{i},x_{j}) \in \MX^2$, it follows from (\ref{eq:chen's rank property}) that $T\left(y\right)$ is contained in the following set:
\begin{equation}
\left\{ t\in\mathbb{R}:x_{i}^{\prime}\beta_{0}-x_{j}^{\prime}\beta_{0}\geq t\Rightarrow P\left(Y_{1i}\geq y\mid x_{i}\right)\geq P\left(Y_{0j}\geq\tilde{y}\mid x_{i}\right)\ \mbox{for all}\ (x_{i},x_{j})\in{\cal X}^{2}\right\} . \label{eq:identified set_transformation function}
\end{equation}
In the presence of endogenous censoring, we cannot point identify $\beta_{0}$; instead, we have $B_{I}$ which contains $\beta_{0}$. Thus, letting
\begin{equation}
T_{I,\beta}\left(y\right) \equiv \left\{ t\in\mathbb{R}:x_{i}^{\prime}\beta-x_{j}^{\prime}\beta\geq t\Rightarrow P\left(Y_{1i}\geq y\mid x_{i}\right)\geq P\left(Y_{0j}\geq\tilde{y}\mid x_{i}\right)\ \mbox{for all}\ (x_{i},x_{j})\in{\cal X}^{2}\right\}
\label{eq:definition_T_I_beta}
\end{equation}
and $T_{B_I}\left(y\right) \equiv \left\{ T_{I,\beta}\left(y\right)\mid\beta\in B_I\right\}$,
we have that $T\left(y\right)\in T_{B_{I}}\left(y\right)$ a.s. 
Note that since any $t>T(y)$ satisfies all the inequality conditions in (\ref{eq:definition_T_I_beta}), $T_{I,\beta}(y)$, as well as $T_{B_I}$, does not have a finite upper bound. Hence we can only obtain a lower bound on $T(y)$.
Note also that, due to a similar reason as that discussed in Section \ref{sec:sharp identified set}, $T_{I,\beta_{0}}\left(y\right)$ is not a sharp identified set of $T\left(y\right)$ even if $\beta_{0}$ is known. Hence, $\left\{ T_{I,\beta}\left(y\right)\mid\beta\in B_0\right\}$ is not a sharp identified set even if we had $B_{0}$.

The following theorem formalizes the main result in this subsection. 
\bigskip
\begin{theorem} \label{thm:transformation function}
Under Assumptions \ref{asm:iid}--\ref{asm:independence} and \ref{asm:strict monotonicity}, $T\left(y\right)\in T_{B_{I}}\left(y\right)$ holds a.s. for any $y\in\mathbb{R}$.
\end{theorem}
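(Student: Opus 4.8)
\textbf{Proof plan for Theorem \ref{thm:transformation function}.}

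The plan is to reduce the statement to the already-established chain of implications and to the containment $\beta_0 \in B_I$ from Theorem \ref{thm:partial identification}. First I would recall Chen's (\citeyear{Chen_2002}) rank property: under Assumptions \ref{asm:iid}--\ref{asm:independence} and \ref{asm:strict monotonicity}, with $\beta_0$ the true (normalized) regression vector and $\tilde y$ fixed by the location normalization $T(\tilde y)=0$, the transformation value $T(y)$ satisfies
\begin{equation*}
x_{i}^{\prime}\beta_{0}-x_{j}^{\prime}\beta_{0}\geq T(y) \Rightarrow P\left(Y_{i}^{\ast}\geq y\mid x_{i}\right)\geq P\left(Y_{j}^{\ast}\geq\tilde{y}\mid x_{i}\right)
\end{equation*}
for all $(x_i,x_j)\in\MX^2$. (If the excerpt's Appendix does not re-derive this, I would cite it directly; otherwise I would give the one-line argument that strict monotonicity of $\Lambda$ together with $U\indepe X$ turns $Y^\ast\geq y$ into $X^\prime\beta_0+U\geq T(y)$, and then compare across $x_i,x_j$ exactly as Chen does.)

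Second, I would bound the two conditional survival probabilities by their observable counterparts. Since $Y_{1i}\equiv D_iY_i^\ast+(1-D_i)(+\infty)\geq Y_i^\ast$ almost surely, we have $P(Y_{1i}\geq y\mid x_i)\geq P(Y_i^\ast\geq y\mid x_i)$; and since $Y_{0j}\equiv D_jY_j^\ast+(1-D_j)C_j\leq Y_j^\ast$ almost surely, we have $P(Y_j^\ast\geq \tilde y\mid x_j)\geq P(Y_{0j}\geq \tilde y\mid x_j)$. Chaining these with Chen's implication gives, for $t=T(y)$,
\begin{equation*}
x_{i}^{\prime}\beta_{0}-x_{j}^{\prime}\beta_{0}\geq t \Rightarrow P\left(Y_{1i}\geq y\mid x_{i}\right)\geq P\left(Y_{0j}\geq\tilde{y}\mid x_{i}\right)
\end{equation*}
for all $(x_i,x_j)\in\MX^2$, which is precisely the defining condition of $T_{I,\beta_0}(y)$ in \eqref{eq:definition_T_I_beta}. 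Hence $T(y)\in T_{I,\beta_0}(y)$ almost surely.

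Third, I would invoke Theorem \ref{thm:partial identification}: $\beta_0\in B_I$ almost surely. Since $T_{B_I}(y)=\{T_{I,\beta}(y)\mid \beta\in B_I\}$ is the union of the sets $T_{I,\beta}(y)$ over $\beta\in B_I$, the membership $T(y)\in T_{I,\beta_0}(y)$ together with $\beta_0\in B_I$ yields $T(y)\in T_{B_I}(y)$ almost surely, for every $y\in\mathbb{R}$. I do not expect a genuine obstacle here: the only subtlety is bookkeeping of the ``a.s.'' qualifier (it comes from Theorem \ref{thm:partial identification} and from the a.s. inequalities $Y_{0j}\leq Y_j^\ast\leq Y_{1j}$), and making sure the conditioning variable in the final inequality is stated consistently with \eqref{eq:definition_T_I_beta} (the displayed condition there conditions both probabilities on $x_i$, matching the $x_i$-conditioning inherited from Chen's property after the independence argument). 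If the paper wants the statement for all $y$ simultaneously on an a.s. event, I would note that the exceptional null sets involved do not depend on $y$, so a single null set works.
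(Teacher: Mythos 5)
Your proposal is correct and follows essentially the same route as the paper's proof: the paper likewise establishes the implication $x_i^{\prime}\beta_0-x_j^{\prime}\beta_0\geq T(y)\Rightarrow P(Y_i^{\ast}\geq y\mid x_i)\geq P(Y_j^{\ast}\geq\tilde y\mid x_j)$ by the direct computation you sketch (strict monotonicity of $\Lambda$ plus $U\indepe X$ reduces both probabilities to values of $F_U$), then bounds these by the observable counterparts via $Y_{0}\leq Y^{\ast}\leq Y_{1}$, and finally invokes Theorem \ref{thm:partial identification} to pass from $T(y)\in T_{I,\beta_0}(y)$ to $T(y)\in T_{B_I}(y)$. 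Your remark about the conditioning variable is just the bookkeeping the paper also handles (its proof conditions the $j$-th probability on $x_j$), so there is no substantive difference.
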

\bigskip

The proof is provided in Appendix \ref{app:proof_1}. 
For any fixed $y \in \Real$, $T_{B_I}(y)$ has a finite lower bound if there exists at least one pair of $(x_i,x_j) \in \MX^2$ such that $P\left(Y_{1i}\geq y\mid x_{i}\right)< P\left(Y_{0j}\geq\tilde{y}\mid x_{i}\right)$ holds
and that $x_i^{\prime}\beta - x_j^{\prime}\beta>-\infty$ holds for all $\beta \in B_I$.
The lower bound approaches the true value $T(y)$ as the likelihood of censoring diminishes.

%\bigskip

%\begin{remark}
%The set $T_{B_{I}}(y)$ is also characterized by conditional moment inequality: 
%\begin{align}
%    T_{B_{I}}(y) = \left\{t \in \Real: E_{P}\left[m^{\dagger}(W_{i},W_{j},\beta,y,t)\mid x_{i},x_{j}\right] \geq 0 \mbox{ for all }(x_{i},x_{j}) \in \MX^{2} \mbox{ and } \beta \in B_{I}\right\},    \label{eq:conditional moment inequality_T}
%\end{align}
%with a moment function $m^{\dagger}(W_{i},W_{j},\beta,y,t)$ defined as
%\begin{align*}
%m^{\dagger}\left(W_{i},W_{j},\beta,y,t\right) & \equiv  \left(I\left[Y_{1i}\geq y\right]-I\left[Y_{0j}\geq\tilde{y}\right]\right)\cdot I[X_{i}^{\prime}\beta-X_{j}^{\prime}\beta\geq t]\\
% & +\left(I\left[Y_{1j}\geq y\right]-I\left[Y_{0i}\geq\tilde{y}\right]\right)\cdot I[X_{j}^{\prime}\beta-X_{i}^{\prime}\beta\geq t].
%\end{align*}
%Based on this conditional moment inequality representation, we provide a joint inference procedure for $\beta_{0}$ and $T\left(y\right)$ in Appendix \ref{app:joint_inference}.
%\end{remark}

\bigskip

\begin{remark}
Once we obtain the bounds on $\beta_{0}$ and $T$, we can derive a bound on the error distribution $F_{U}$. Let $u \in \Real$ be fixed. Because $U=T(Y^{\ast})-X^{\prime}\beta_{0}$ holds and $U$ is independent of $X$, $F_U(u)$ is equivalent to $P(T(Y^{\ast})-X^{\prime}\beta_{0}\leq u \mid X=x)$ for all $x \in \MX$. 
 Due to the censoring, we cannot always observe $Y^{\ast}$, but we instead have 
 \begin{align*}
     P(T(Y_{1})-X^{\prime}\beta_{0}\leq u \mid X=x) \leq F_U(u) \leq P(T(Y_{0})-X^{\prime}\beta_{0}\leq u \mid X=x),
 \end{align*}
because $T$ is a non-decreasing function. In the presence of endogenous censoring, we cannot point identify $\beta_0$ and $T$, but we instead have the sets $B_I$ for $\beta_{0}$ and $T_{I,\beta}(\cdot)$ for $T(\cdot)$ with any $\beta \in B_I$. Therefore, $F_U(u)$ is contained in the following set:
 \begin{align*}
     \{ t \in [0,1]:\ & P(T(Y_{1})-X^{\prime}\beta <u \mid X=x) \leq t \leq P(T(Y_{0})-X^{\prime}\beta <u \mid X=x)\\
     &\mbox{ holds for some } (\beta,T) \in B_I \times T_{I,\beta} \mbox{ and all }x \in \MX\},
 \end{align*}
 where $T_{I,\beta}$ denotes a set of all functions $T$ that satisfy $T(y) \in T_{I,\beta}(y)$ for all $y$.
\end{remark}
\bigskip

\section{Inference \label{sec:inference}}

This section provides a statistical inference approach for the regression parameters in model (\ref{eq:transformation function}) based on the result presented in Section \ref{sec:partial identification}. We suggest a method to construct a confidence set that covers the true parameter value $\beta_{0}$ with a probability greater than or equal to $1-\alpha$ for $\alpha\in\left(0,1\right)$. Because $B_{I}$ is characterized by conditional moment inequalities involving U-statistics, we construct the inference method by extending the inference approach for conditional moment inequality models proposed by \cite{Andrews_Shi_2013} (hereafter AS) to the U-statistics case. The approach transforms conditional moment inequalities into an infinite number of unconditional ones, without information loss, to construct a test statistic. A confidence set is then constructed by inverting the test statistic and using critical values obtained via moment selection.
The inference method proposed below is applicable to either continuous or discrete covariates.
%We consider continuous regressors in this section; but, if all regressors are discrete, we can apply inference methods for unconditional moment inequality models.\footnote{Various inference methods for unconditional moment inequality models have been proposed by \cite{Imbens_Manski_2004}, \cite{Chernozhukov_et_al_2007}, \cite{Beresteanu_Molinari_2008}, \cite{Romano_Shaikh_2008, Romano_Shaikh_2010}, \cite{Stoye_2009}, \cite{Andrews_Soares_2010}, \cite{Bugni_2010}, and so on.} 
The inference method is for U-statistics of order two, but the approach is extendable to U-statistics of greater order with obvious modifications. A joint inference procedure for $\beta_{0}$ and $T(\cdot)$ is presented in Appendix \ref{app:joint_inference}.

\subsection{Test Statistic and Critical Value \label{sec:test statistics}}

We first construct a test statistic and then describe the inference procedure. 
Let 
\begin{align*}
m\left(W_{i},W_{j},\beta\right) \equiv -\frac{1}{2}+I[Y_{1i}\geq Y_{0j}]\cdot I[X_{i}^{\prime}\beta\geq X_{j}^{\prime}\beta]+I[Y_{1j}>Y_{0i}]\cdot I[X_{j}^{\prime}\beta>X_{i}^{\prime}\beta],
\end{align*}
where $W_{i} \equiv (Y_{1i},Y_{0i},X_{i})$ for $i=1,\ldots,n$.
Then $B_{I}$ is a set of parameter values $\beta$ that satisfy the following conditional moment inequalities:
\begin{equation}
E_{P}\left[m\left(W_{i},W_{j},\beta\right)\mid x_{i},x_{j}\right]\geq0\ \mbox{for a.e.}\ (x_{i},x_{j})\in\mathcal{X}^{2}, \label{eq:conditional moment inequality}
\end{equation}
where $E_P[\cdot \mid x_i,x_j]$ denotes the conditional expectation under the distribution $P$ given $(X_i,X_j)=(x_i,x_j)$.
To transform all of the conditional moment inequalities (\ref{eq:conditional moment inequality}) into unconditional ones without loss of information, we adopt AS's instrumental functions approach. 
We here suppose that the first $p$ ($p \leq k$) elements of $X$ are continuous variables or have infinite support and that the remaining elements of $X$ have finite support.\footnote{We can also allow all covariates to have finite or infinite support with modifications.}
Let $\MX_{1}$ and $\MX_{2}$ denote the supports of $(X_{i,1},\ldots,X_{i,p})$ and $(X_{i,p+1},\ldots,X_{i,k})$, respectively, where $X_{i,j}$ denotes the $j$-th element of $X_i$. Without loss of generality, we also suppose that $(X_{i,1},\ldots,X_{i,p})$ is transformed via a one-to-one mapping so that each of its elements lies in $[0,1]$ (i.e., $\MX_{1} \subseteq \left[0,1\right]^{p}$).
\footnote{Let $X_{i}^{(1)}\equiv(X_{i,1},\ldots,X_{i,p})$. Following AS, the vector of transformed covariates in $X_{i}^{(1)}$ may be $\tilde{X}_{i}^{(1)}\equiv\Phi\left(\hat{\Sigma}_{X^{(1)},n}^{-1/2}\left(X_{i}^{(1)}-\bar{X}_{n}^{(1)}\right)\right)$
where $\bar{X}_{n}^{(1)}\equiv n^{-1}\sum_{i=1}^{n}X_{i}^{(1)}$, $\hat{\Sigma}_{X^{(1)},n}\equiv n^{-1}\sum_{i=1}^{n}\left(X_{i}^{(1)}-\bar{X}_{n}^{(1)}\right)\left(X_{i}^{(1)}-\bar{X}_{n}^{(1)}\right)^{\prime}$, and $\Phi\left(x^{(1)}\right)\equiv\left(\Phi\left(x_{1}\right),\ldots,\Phi\left(x_{p}\right)\right)^{\prime}$, where $\Phi\left(\cdot\right)$ denotes the standard normal cumulative distribution function and $x^{(1)}=\left(x_{1},\ldots,x_{p}\right)^{\prime}$.}

The set of instrumental functions that we consider is of the following form:
\begin{align*}
\mathcal{G}  \equiv  \left\{ g(x_{i},x_{j})=I\left[(x_{i},x_{j}) \in J \right]\ \mbox{for}\ J\in\mathcal{J} \right\} ,
\end{align*}
where
\begin{align*}
\mathcal{J}  \equiv & \left\{ J_{(a,b),(\tilde{a},\tilde{b}),r}=\left(\vartimes_{u=1}^{p}\left(\frac{a_{u}-1}{2r},\frac{a_u}{2r}\right] \times \{b\} \right) \times \left(\vartimes_{u=1}^{p}\left(\frac{\tilde{a}_{u}-1}{2r},\frac{\tilde{a}_{u}}{2r}\right] \times \{\tilde{b}\}\right):\right.\\
   &\ \left. a=(a_{1},\ldots,a_{p}),\ \tilde{a}=(\tilde{a}_{1},\ldots,\tilde{a}_{p}),\ (a_u,\tilde{a}_u) \in\left\{ 1,2,\ldots,2r\right\}^2 \right. \\
   & \left.\mbox{ for } u=1,\ldots,p \mbox{ and } r=1,2,\ldots, \mbox{ and } (b,\tilde{b})\in \MX_2^2 \right\} .
\end{align*}
This set of instrumental functions transforms the conditional moment inequalities (\ref{eq:conditional moment inequality}) into infinitely many unconditional ones without loss of information. Accordingly, under Assumptions \ref{asm:iid}--\ref{asm:independence}, $B_{I}$ is equivalent to
\begin{equation*}
\left\{ \beta\in B:\ E_{P}\left[m(W_{i},W_{j},\beta,g)\right]\geq0\ \mbox{for all}\ g\in\mathcal{G}\right\} ,
\end{equation*}
where $m(W_{i},W_{j},\beta,g)\equiv m\left(W_{i},W_{j},\beta\right)\cdot g(x_{i},x_{j})$ for $g\in{\cal G}$. We formalize this result as Lemma \ref{lem:lemma_2} in Appendix \ref{app:proof_2}. Other kinds of instrumental functions introduced in AS are applicable with modifications. 
\par
We define the sample moment function and sample variance function of $m(W_{i},W_{j},\beta,g)$, respectively, as
\begin{align*}
\bar{m}_{n}\left(\beta,g\right)  \equiv  \frac{1}{n(n-1)}\sum_{i\neq j}m(W_{i},W_{j},\beta,g)
\end{align*}
and
\begin{align*}
\hat{\sigma}_{n}^{2}\left(\beta,g\right) \equiv & \left\{ \frac{1}{n\left(n-1\right)\left(n-2\right)}\sum_{i\neq j\neq k}m(W_{i},W_{j},\beta,g)m\left(W_{i},W_{k},\beta,g\right)\right.\\
   & \left.-\left(\frac{1}{n(n-1)}\sum_{i\neq j}m(W_{i},W_{j},\beta,g)\right)^{2}\right\} .
\end{align*}
Note that $\bar{m}_{n}\left(\beta,g\right)$ and $\hat{\sigma}_{n}^{2}\left(\beta,g\right)$ are U-statistics of orders two and three, respectively. Because $\bar{m}_{n}\left(\beta,g\right)$ is a non-degenerate U-statistic of order two, the asymptotic variance
of $\sqrt{n}\bar{m}_{n}\left(\beta,g\right)$ is $\mbox{Var}_{P}\left(E_{P}\left[m(W_{i},W_{j},\beta,g)\mid W_{i}\right]\right)$,
which is equivalent to\footnote{For the variance of U-statistics, see, for example, \citeauthor{van_der_Vaart_1998} (\citeyear{van_der_Vaart_1998}, Ch. 12).}
\begin{equation*}
E_{P}\left[m(W_{i},W_{j},\beta,g)m\left(W_{i},W_{k},\beta,g\right)\right]-\left(E_{P}\left[m(W_{i},W_{j},\beta,g)\right]\right)^{2}.
\end{equation*}
 Thus, $\hat{\sigma}_{n}^{2}\left(\beta,g\right)$ is a consistent estimator of the asymptotic variance of $\sqrt{n}\bar{m}_{n}\left(\beta,g\right)$. However, in practice, $\hat{\sigma}_{n}^{2}\left(\beta,g\right)$ could be zero for some $g \in \mathcal{G}$; as such we use the modification proposed by AS for $\hat{\sigma}_{n}^{2}\left(\beta,g\right)$. The modified version of $\hat{\sigma}_{n}^{2}\left(\beta,g\right)$ is 
\begin{align*}
\bar{\sigma}_{n}^{2}\left(\beta,g\right) \equiv \hat{\sigma}_{n}^{2}\left(\beta,g\right)+\epsilon\hat{\sigma}_{n}^{2},
\end{align*}
where $\hat{\sigma}_{n}^{2}=\hat{\sigma}_{n}^{2}\left(\beta,1\right)$, which is a consistent estimator of
\begin{align*}
\sigma_{P}^{2}\left(\beta\right)\equiv E_{P}\left[m\left(W_{i},W_{j},\beta\right)m\left(W_{i},W_{k},\beta\right)\right]-\left(E_{P}\left[m\left(W_{i},W_{j},\beta\right)\right]\right)^{2},
\end{align*}
and $\epsilon$ is a regularization parameter that takes some fixed positive value. In the simulation studies in Section \ref{sec:simulation study} and empirical application in Section \ref{sec:empirical illustration}, we use $\epsilon=0.0001$ and $\epsilon =0.001$.
%Based on some simulation experiments, I recommend taking $\epsilon=0.0001$.\footnote{This value is different from the value recommended by AS, which is $\epsilon=0.05$.} 
\par
Then, letting $g_{(a,b),(\tilde{a},\tilde{b}),r}(x_{i},x_{j})\equiv1\left[(x_{i},x_{j})\in J_{(a,b),(\tilde{a},\tilde{b}),r} \right]$,
the test statistic at $\beta$ takes the form 
\begin{align*}
T_{n}(\beta)  \equiv  \sum_{r=1}^{\infty}\left(r^{2}+100\right)^{-1}
\sum_{\left(a,\tilde{a}\right)\in \left\{ 1,\cdots,2r\right\}^{2p}}
\sum_{(b,\tilde{b})\in \MX_{2}^{2}}
\frac{1}{\left((2r)^{p} \cdot |\MX_{2}|\right)^2} \left[\frac{n^{\frac{1}{2}}\bar{m}_{n}\left(\beta,g_{(a,b),(\tilde{a},\tilde{b}),r}\right)}{\bar{\sigma}_{n}\left(\beta,g_{(a,b),(\tilde{a},\tilde{b}),r}\right)}\right]_{-}^{2},
\end{align*}
where $\left[x\right]_{-}=-x$ if $x<0$ and $\left[x\right]_{-}=0$ if $x\geq0$. 
$\left((2r)^{p} \cdot (|\MX|)^{k-p}\right)^{2}$ corresponds to the number of instrumental functions $g_{(a,b),(\tilde{a},\tilde{b}),r}$ for a fixed $r$.
This test statistic is a version of AS's test statistic that is extended to U-statistics of order two. Here the inner summation is taken over two set of indices, $(a,b)$ and $(\tilde{a},\tilde{b})$. In the implementation, we instead use an approximate test statistic at $\beta$:
\begin{align*}
T_{n,R}(\beta)  \equiv  \sum_{r=1}^{R}\left(r^{2}+100\right)^{-1}
\sum_{\left(a,\tilde{a}\right)\in \left\{ 1,\cdots,2r\right\}^{2p}}
\sum_{(b,\tilde{b})\in \MX_{2}^{2}}
\frac{1}{\left((2r)^{p} \cdot |\MX_{2}|\right)^2} \left[\frac{n^{\frac{1}{2}}\bar{m}_{n}\left(\beta,g_{(a,b),(\tilde{a},\tilde{b}),r}\right)}{\bar{\sigma}_{n}\left(\beta,g_{(a,b),(\tilde{a},\tilde{b}),r}\right)}\right]_{-}^{2},
\end{align*}
where $R$ is some truncation integer chosen by the researcher. 
\par
To compute a critical value for $T_{n,R}(\beta)$, we propose using an asymptotic approximation version of the critical value. This is a simulated quantile of 
\begin{align*}
T_{n,R}^{Asy}(\beta)  \equiv & \sum_{r=1}^{R}\left(r^{2}+100\right)^{-1}
\sum_{(a,\tilde{a})\in \{ 1,\cdots,2r\}^{2p}}
\sum_{(b,\tilde{b})\in \MX_{2}^{2}}
\left\{ \frac{1}{\left((2r)^{p} \cdot |\MX_{2}|\right)^2} \right. \\ 
& \left. \times
\left[\frac{v_{n}\left(\beta,g_{(a,b),(\tilde{a},\tilde{b}),r}\right)+\varphi_{n}\left(\beta,g_{(a,b),(\tilde{a},\tilde{b}),r}\right)}{\bar{\sigma}_{n}\left(\beta,g_{(a,b),(\tilde{a},\tilde{b}),r}\right)}\right]_{-}^{2} \right\},
\end{align*}
where $\left(v_{n}\left(\beta,g\right)\right)_{g\in{\cal G}}$ is a zero mean Gaussian process with a covariance kernel evaluated by
\begin{align*}
\hat{h}_{2}\left(\beta,g,g^{\ast}\right)  \equiv & \left\{ \frac{1}{n\left(n-1\right)\left(n-2\right)}\sum_{i\neq j\neq k}m(W_{i},W_{j},\beta,g)m(W_{i},W_{k},\beta,g^{\ast})\right.\\
   & \left.-\left[\frac{1}{n\left(n-1\right)}\sum_{i\neq j}m(W_{i},W_{j},\beta,g)\right]\cdot\left[\frac{1}{n\left(n-1\right)}\sum_{i\neq j}m\left(W_{i},W_{j},\beta,g^{\ast}\right)\right]\right\} .
\end{align*}
In the expression of $T_{n,R}^{Asy}(\beta)$, $\left(v_{n}\left(\beta,g_{(a,b),(\tilde{a},\tilde{b}),r}\right)\right)_{(a,b),(\tilde{a},\tilde{b}),r}$
approximates the asymptotic distribution of 
\begin{equation*}
\left(n^{\frac{1}{2}}\left[\bar{m}_{n}\left(\beta,g_{(a,b),(\tilde{a},\tilde{b}),r}\right)-E_{P}\left[m\left(W_{i},W_{j},\beta,g_{(a,b),(\tilde{a},\tilde{b}),r}\right)\right]\right]\right)_{(a,b),(\tilde{a},\tilde{b}),r}.
\end{equation*}
$\varphi_{n}\left(\beta,g_{(a,b),(\tilde{a},\tilde{b}),r}\right)$ is a generalized moment selection (GMS) function to select binding moment restrictions and is given by
\begin{equation*}
\varphi_{n}\left(\beta,g_{(a,b),(\tilde{a},\tilde{b}),r}\right) \equiv \hat{\sigma}_{n}^{2}B_{n}I\left[\kappa_{n}^{-1}n^{\frac{1}{2}}\bar{m}_{n}\left(\beta,g_{(a,b),(\tilde{a},\tilde{b}),r}\right)/\bar{\sigma}_{n}\left(\beta,g_{(a,b),(\tilde{a},\tilde{b}),r}\right)>1\right],
\end{equation*}
where $B_{n}$ and $\kappa_{n}$ are two tuning parameters that should satisfy $B_{n}\rightarrow\infty$, $\kappa_{n}\rightarrow\infty$, and $\kappa_{n}/n^{1/2}\rightarrow 0$ as $n\rightarrow\infty$ a.s. In Sections \ref{sec:simulation study} and \ref{sec:empirical illustration}, we use $B_{n}=\left(0.8\ln\left(n\right)/\ln\ln\left(n\right)\right)^{\frac{1}{2}}$ and $\kappa_{n}=\left(\left(1-\hat{p}_{1-D}^{1/3}\right)^{2/5}\times0.6\ln(n)\right)^{\frac{1}{2}}$, where $\hat{p}_{1-D}\equiv n^{-1}\sum_{i}^{n}\left(1-D_{i}\right)$ is the sample censoring rate.\footnote{The values for $\kappa_{n}$ and $B_{n}$ are different from the values recommend
by AS, which are $\kappa_{n}=\left(0.3\ln(n)\right)^{\frac{1}{2}}$ and $B_{n}=\left(0.4\ln\left(n\right)/\ln\ln\left(n\right)\right)^{\frac{1}{2}}$, respectively.} This value of $\kappa_{n}$ decreases with the sample censoring rate. The following assumption summarizes the requirements for the tuning parameters in the GMS function.
\bigskip
\begin{assumption} \label{asm:tuning parameters}
The tuning parameters $\left(B_{n},\kappa_{n}\right)$ satisfy $B_{n}\rightarrow\infty$, $\kappa_{n}\rightarrow\infty$, and $\kappa_{n}/n^{1/2}\rightarrow0$ as $n\rightarrow\infty$ a.s.
\end{assumption}
\bigskip
For a significance level of $\alpha<1/2$, the critical value is set to be the $1-\alpha+\eta$ simulated quantile of $T_{n,R}^{Asy}(\beta)$, where $\eta$ is an arbitrarily small positive value (e.g., $10^{-6}$ following AS). Letting $\hat{c}_{n,\eta,1-\alpha}(\beta)$ be the $1-\alpha+\eta$ quantile of $T_{n,R}^{Asy}(\beta)$, a nominal level $1-\alpha$ confidence set is computed as 
\begin{equation*}
\widehat{CS}_{n,\eta,1-\alpha}\equiv \left\{ \beta\in B:T_{n,R}(\beta)\leq\hat{c}_{n,\eta,1-\alpha}(\beta)\right\} .
\end{equation*}

\bigskip

\begin{remark}
When all the covariates have finite support, the above procedure is applicable using the test statistic
\begin{align*}
T_{n}(\beta)  =  
\sum_{(b,\tilde{b})\in \MX^{2}}
\frac{1}{|\MX|^2}
\left[\frac{n^{\frac{1}{2}}\bar{m}_{n}\left(\beta,g_{b,\tilde{b}}\right)}{\bar{\sigma}_{n}\left(\beta,g_{b,\tilde{b}}\right)}\right]_{-}^{2}
\end{align*}
and a simulated quantile of 
\begin{align*}
T_{n}^{Asy}(\beta)  =  
\sum_{(b,\tilde{b})\in \MX^{2}}
\frac{1}{|\MX|^2}
\left[\frac{v_{n}\left(\beta,g_{b,\tilde{b}}\right)+\varphi_{n}\left(\beta,g_{b,\tilde{b}}\right)}{\bar{\sigma}_{n}\left(\beta,g_{b,\tilde{b}}\right)}\right]_{-}^{2},
\end{align*}
where $g_{b,\tilde{b}}(x_i,x_j)=I[(x_i,x_j)=(b,\tilde{b})]$. In this case, inference is free from the tuning parameter $R$.

Regardless of whether some covariates have finite support or not, the above procedure with $\mathcal{J}$ replaced by
\begin{align*}
\mathcal{J}^{\prime}  \equiv & \left\{ J_{a,\tilde{a},r}=\left(\vartimes_{u=1}^{k}\left(\frac{a_{u}-1}{2r},\frac{a_u}{2r}\right] \right) \times \left(\vartimes_{u=1}^{k}\left(\frac{\tilde{a}_{u}-1}{2r},\frac{\tilde{a}_{u}}{2r}\right] \right):\right.\\
   &\ \left. a=(a_{1},\ldots,a_{k}),\ \tilde{a}=(\tilde{a}_{1},\ldots,\tilde{a}_{k}),\ (a_u,\tilde{a}_u) \in\left\{ 1,2,\ldots,2r\right\}^2 \right. \\
   & \left.\mbox{ for } u=1,\ldots,k \mbox{ and } r=1,2,\ldots \right\} .
\end{align*}
is also applicable when $\MX \subseteq [0,1]^{k}$. %Asymptotic size and power properties of this inference procedure is shown in \cite{Sakaguchi_2020}. 
When $\MX_2$ does not contain too many points, the original inference procedure using $\mathcal{J}$ is computationally more efficient. In contrast, when $\MX_2$ contains many points, the inference procedure using $\mathcal{J}^{\prime}$ with a moderate number of $R$ incurs a lower computational cost.
\end{remark}

\bigskip

Note again that the inference approach presented above is for U-statistics of order two, but can be extended to U-statistics of a greater order with some modification. Namely, the class of instrumental functions, the moment function, and the sample variance function need to be modified for higher order; the inner double summations in the test statistics need to be replaced by more summations.
%\par
%The following subsection provides asymptotic properties of the confidence set.

\subsection{Asymptotic Size and Power Properties \label{sec:asymptotic property}}

This subsection provides the uniform asymptotic size and power properties of the inference method. Let ${\cal Q}$ be the collection of all pairs of the regression parameters in $B$ and distributions, $(\beta,P)$, such that the conditional moment inequality condition (\ref{eq:conditional moment inequality}) holds, $\{W_i:i \geq 1\}$ are i.i.d. under $P$, and the support of $X$ is contained in $\MX(=\MX_{1}\times\MX_{2})$. Define 
\begin{align}
h_{2,P}(\beta,g,g^{\ast}) \equiv & E_{P}\left[E_{P}\left[m(W_{i},W_{j},\beta,g)m(W_{i},W_{k},\beta,g^{\ast})\mid W_{i}\right]\right] \notag \\
  & -E_{P}\left[m(W_{i},W_{j},\beta,g)\right]E_{P}\left[m\left(W_{i},W_{j},\beta,g^{\ast}\right)\right], \label{eq:covariance kernel}
\end{align}
which is the covariance kernel between $m(W_{i},W_{j},\beta,g )$ and $m(W_{i},W_{j},\beta,g^{\ast} )$ under the distribution $P$. Let ${\cal H}_{2}\equiv \{h_{2,P}(\beta,\cdot,\cdot):(\beta,P)\in \mathcal{Q}\}$ be the collection of all possible covariance kernel functions on ${\cal G}\times{\cal G}$. For any $\beta \in B$ and the true distribution $P$, let 
\begin{align*}
    \MX_{P}^{2}(\beta) \equiv \{(x_i,x_j) \in \MX^{2}: x_{i}^{\prime}\beta\geq x_{j}^{\prime}\beta \Rightarrow P(Y_{1i}\geq Y_{0j}) < 1/2\}.
\end{align*}
The following theorem gives the uniform size and power properties of the proposed inference method.
\bigskip

\begin{theorem} \label{thm:asymptotic property}
Suppose that $R=\infty$ and $\alpha<1/2$. \\
(a) For every compact subset ${\cal H}_{2,cpt}$ of ${\cal H}_{2}$, the confidence set $\widehat{CS}_{n,\eta,1-\alpha}$ satisfies 
\begin{equation*}
\underset{\eta\rightarrow0}{\lim}\liminf_{n\rightarrow\infty}\inf_{\left\{ \left(\beta,P\right)\in {\cal Q}:h_{2,P}(\beta,\cdot,\cdot)\in{\cal H}_{2,cpt}\right\} }P\left(\beta\in\widehat{CS}_{n,\eta,1-\alpha}\right)=1-\alpha.
\end{equation*}
(b) Suppose further that Assumptions \ref{asm:monotonicity}--\ref{asm:independence} and \ref{asm:tuning parameters} hold for the true distribution $P$, and let $\tilde{\beta}\in B$ be a vector of regression parameters such that $P((X_i,X_j) \in \MX_{P}^{2}(\tilde{\beta}))>0$. Then $\lim_{n\rightarrow\infty}P(\tilde{\beta}\in\widehat{CS}_{n,\eta,1-\alpha})=0$.
\end{theorem}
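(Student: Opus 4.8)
The plan is to follow the general strategy behind Theorem~2 of \cite{Andrews_Shi_2013}, adapting it to the U-statistic setting. Part~(a) is a uniform size result: I would verify that the ingredients AS require for their Theorem~2 hold here, the only genuinely new feature being that $\bar{m}_n(\beta,g)$ is a non-degenerate U-statistic of order two rather than a sample mean. The key structural fact is the Hoeffding decomposition: writing $m_1(W_i,\beta,g)\equiv E_P[m(W_i,W_j,\beta,g)\mid W_i]-E_P[m(W_i,W_j,\beta,g)]$, one has $\sqrt{n}(\bar{m}_n(\beta,g)-E_P[m(\cdot,\beta,g)])=\frac{2}{\sqrt{n}}\sum_{i=1}^n m_1(W_i,\beta,g)+\text{(degenerate remainder)}$, where the remainder is $o_p(1)$ uniformly over $\mathcal{G}$ because the class of functions indexed by the boxes in $\mathcal{J}$ is a VC-type (Donsker) class and the kernel $m$ is uniformly bounded by a constant. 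Hence the empirical process $\{\sqrt{n}(\bar{m}_n(\beta,g)-E_P[m])\}_{g\in\mathcal{G}}$ is asymptotically equivalent to the mean-zero Gaussian process with covariance kernel $4\,h_{2,P}(\beta,\cdot,\cdot)/4=h_{2,P}(\beta,\cdot,\cdot)$ (the factor-of-two from the linear term and the factor $1/2$ absorbed in the definition of $h_{2,P}$ cancel in the studentized statistic), which is exactly the kernel estimated by $\hat{h}_2(\beta,g,g^\ast)$. Given this, the asymptotic distribution of $T_{n,R}(\beta)$ under sequences $(\beta_n,P_n)$ with $h_{2,P_n}(\beta_n,\cdot,\cdot)$ ranging over a compact ${\cal H}_{2,cpt}$ matches the distribution simulated by $T_{n,R}^{Asy}(\beta)$ up to the GMS correction, and the standard AS argument—the GMS function $\varphi_n$ is asymptotically conservative for slack moments and zero for binding ones, $\bar\sigma_n^2(\beta,g)$ is uniformly consistent and bounded away from zero thanks to the $\epsilon\hat\sigma_n^2$ regularization, and the $\eta$-inflation of the quantile handles the discontinuity of the $[\cdot]_-^2$ map—delivers the $\liminf\inf = 1-\alpha$ conclusion after sending $\eta\to0$. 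The uniform consistency of the U-statistic variance estimators $\hat\sigma_n^2(\beta,g)$ and $\hat h_2(\beta,g,g^\ast)$ over $\mathcal{G}$ follows from a uniform law of large numbers for degree-three U-processes over the same VC-type class, which I would cite from the U-process literature (e.g.\ \citeauthor{van_der_Vaart_1998}, Ch.~12) rather than prove.

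For part~(b), the power statement, the argument is: since $P((X_i,X_j)\in\MX_P^2(\tilde\beta))>0$, there is a box $J_{(a,b),(\tilde a,\tilde b),r}\in\mathcal{J}$ of positive probability on which $E_P[m(W_i,W_j,\tilde\beta,g_{(a,b),(\tilde a,\tilde b),r})]<0$ strictly—this is where the definition of $\MX_P^2(\tilde\beta)$ and the one-to-one-ness of the box partition are used to localize the violation. For that fixed $g$, $\bar m_n(\tilde\beta,g)\to E_P[m(\cdot,\tilde\beta,g)]<0$ a.s.\ by the U-statistic SLLN, while $\bar\sigma_n(\tilde\beta,g)$ converges to a finite limit, so $n^{1/2}\bar m_n(\tilde\beta,g)/\bar\sigma_n(\tilde\beta,g)\to-\infty$ and that single term of $T_{n,R}(\tilde\beta)$ diverges to $+\infty$ at rate $n$; a fortiori $T_{n,R}(\tilde\beta)\to\infty$. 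On the other side, the critical value $\hat c_{n,\eta,1-\alpha}(\tilde\beta)$ is stochastically bounded: the GMS term $\varphi_n(\tilde\beta,g)$ is either zero or $\hat\sigma_n^2 B_n=O_p(B_n)$ with $B_n=O(\sqrt{\ln n/\ln\ln n})$, the Gaussian part $v_n$ is $O_p(1)$ uniformly, and the weights $(r^2+100)^{-1}((2r)^p|\MX_2|)^{-2}$ are summable, so $T_{n,R}^{Asy}(\tilde\beta)=O_p(B_n^2)=o_p(n)$; hence its $1-\alpha+\eta$ quantile is $o(n)$. Therefore $P(T_{n,R}(\tilde\beta)\le\hat c_{n,\eta,1-\alpha}(\tilde\beta))\to0$, i.e.\ $P(\tilde\beta\in\widehat{CS}_{n,\eta,1-\alpha})\to0$.

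The main obstacle, and the place I would spend the most care, is making the empirical-process and GMS machinery in part~(a) genuinely \emph{uniform} over $(\beta,P)$ while the sample moments are U-statistics: one needs a uniform (over $P$ and over $g\in\mathcal{G}$) Hoeffding linearization with negligible degenerate remainder, a uniform functional CLT for the linearized process, and uniform consistency of the degree-two and degree-three U-process variance estimators—together with showing that the covariance kernels of the linearized processes are precisely parametrized by ${\cal H}_2$ so that restricting to a compact ${\cal H}_{2,cpt}$ controls the limiting distribution. Once the U-statistic is replaced by its Hájek projection uniformly, the remainder of the argument is a transcription of AS's proof, so I would structure the write-up as: (i) state and prove (or cite) the uniform Hoeffding linearization lemma; (ii) state the uniform U-process ULLN for the variance estimators; (iii) invoke AS's Theorem~2 proof verbatim with $\sqrt{n}\bar m_n$ replaced by its linearization; (iv) give the short direct argument for (b) above.
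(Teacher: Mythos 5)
Your overall strategy coincides with the paper's: both reduce the theorem to Theorems 2(b) and 3 of \cite{Andrews_Shi_2013} by verifying their assumptions, with the U-statistic feature handled by empirical-process-type arguments, and both prove part (b) by exhibiting an instrumental function on which the population moment is strictly negative, so that the test statistic diverges while the critical value remains stochastically bounded. The substantive difference is in execution of the U-statistic step: you propose a hand-rolled uniform Hoeffding/H\'ajek linearization plus a Donsker argument for the projected class, and you candidly flag the uniformity over $(\beta,P)$ as the main unresolved obstacle. The paper sidesteps exactly that obstacle: it first shows the relevant function classes are Euclidean with constant envelopes (Lemma \ref{lem:lemma_3}, via \cite{van_der_vaart_Wellner_1996} and \cite{Nolan_Pollard_1987}) and then invokes the U-process functional CLT of \cite{Nolan_Pollard_1988} (Theorem 5) and the uniform convergence result of \cite{Sherman_1994} (Corollary 7) along arbitrary subsequences $(\beta_{a_n},P_{a_n})$ with converging covariance kernels (Lemma \ref{lem:lemma_1}), which is precisely the format of Assumption EP in AS, so no separate uniform linearization lemma is needed. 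Note also that your ``factor of two versus factor one-half cancel'' bookkeeping does not match the definition of $h_{2,P}$ in (\ref{eq:covariance kernel}), which carries no factor $1/2$; moreover the kernel $m$ (and a fortiori $m\cdot g$) is not symmetric in $(W_i,W_j)$, so a by-hand linearization must retain both projections $E_P[m(w,W_j,\beta,g)]$ and $E_P[m(W_i,w,\beta,g)]$ rather than $2m_1$.

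Two verifications the paper needs are missing or too quick in your plan. First, the step in (b) where you ``localize the violation to a box'' does not follow just from $P((X_i,X_j)\in\MX_{P}^{2}(\tilde\beta))>0$ and the one-to-one box indexing: within a coarse cell the conditional moment can be positive elsewhere and offset the violation, so one needs the semiring/countable-additivity argument of Lemma C1 in AS to produce some $g\in\MG$ with $E_P[m(W_i,W_j,\tilde\beta,g)]<0$; this is the paper's Lemma \ref{lem:lemma_2}, the analogue of Assumption CI, and it is used (in contrapositive form) exactly where you assert the localization. Second, the discrete weighting measure placed on $\MG$ by the weights $(r^{2}+100)^{-1}((2r)^{p}|\MX_{2}|)^{-2}$ must be shown to have full support under the pseudometric $\rho_X$ (Assumption Q of AS); the paper does this in Lemma \ref{lem:Assumption Q}. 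Both steps are standard and citable, but they are genuine ingredients of the proof rather than a transcription of AS, so your write-up would need to add them alongside the uniform U-process lemmas.
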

\bigskip

A proof of this theorem is provided in Appendix \ref{app:proof_2}. Theorem \ref{thm:asymptotic property} (a) states that the proposed confidence set is asymptotically conservative, which corresponds to Theorem 2(b) of AS. The uniformity in the statement enables the asymptotic result to provide a good finite sample approximation, which is well discussed in AS. Theorem \ref{thm:asymptotic property} (b) states that the test is consistent against a fixed alternative. 

%%%%%%%%%%%%%%%%%%%%%%%%%%%%%%%%%%%%%%%%%%%%%%%%%%%%%%%%%%

\section{Simulation Studies \label{sec:simulation study}}

This section presents numerical examples and Monte Carlo simulation results. The numerical examples show how $B_{I}$ and $T_{B_I}$ vary with the degree of censoring and the support of covariates. The Monte Carlo simulations show the finite sample performance of the proposed inference method for the regression parameters and demonstrate how it varies with the choice of the tuning parameters. Some additional numerical example and Monte Carlo simulation studies are presented in the supplementary material to this paper.

\subsection{Numerical Examples \label{sec:numerical example}}

This subsection illustrates how $B_{I}$ and $T_{B_{I}}$ vary depending on the degree of censoring and the support of covariates. We consider three MPH models (Models 1--3) with endogenous censoring that have the following form:
\begin{eqnarray*}
\log Y^{\ast} & = & \beta_{1}X_{1}+\beta_{2}X_{2}+\log U+\log V,\\
\log C & = & \alpha_{0}+\left(\gamma_{0}+\gamma_{1}X_{1}+\gamma_{2}X_{2}\right)\times\log U+\log W.
\end{eqnarray*}
In all the models, we set $\left(\beta_{1},\beta_{2}\right)=\left(0.5,1.5\right)$ and $\left(\gamma_{0},\gamma_{1},\gamma_{2}\right)=\left(-0.5,0.5,-1\right)$.
In Models 1--3, we set $\alpha_{0}=+\infty$, $3$, and $1.6$, respectively. In Model 1, there is no censoring; in Models 2 and 3, there is censoring correlated with the covariates and unobserved heterogeneity. The outcome is more likely to be censored in Model 3 than in Model 2. In all the models, $U$, $V$, and $W$ have independent unit exponential distributions, and $X_{2}$ takes values in $\left\{ 0,1\right\} $. Regarding $X_{1}$, we consider three cases; $X_{1}$ takes values in (i) $\left\{ -2.5,-2.0,\ldots,2.5\right\} $, (ii) $\left\{ -5,-2.5,\ldots,5\right\} $, or (iii) $\left\{ -5,-4.8,\ldots,5\right\} $. 
%Support (ii) is wider than support (i), while they have the same fineness; support (iii) is finer than support (ii), while they have the same width. 
In these data generating processes (DGPs), the censoring is endogenous because $Y^\ast$ is correlated with $C$ even conditional on $X_1$ and $X_2$, which occurs due to the presence of $U$ in both the equations for $\log Y^\ast$ and $\log C$. 
\par
In the computation of $B_I$ and $T_{B_I}$, we set $|\beta_1|=1$ for the scale normalization and $T(\tilde{y})=0$ for the location normalization, where we set $\tilde{y}=0.77$, which is the median of $Y^\ast$ when $X_{1}$ is independently distributed as $N(0,2)$ and $X_{2}$ is independently and uniformly distributed over $\{0,1\}$. This joint distribution will be used in the Monte Carlo simulation in the following subsection.
After the scale and location normalizations, the true normalized values of $\beta_2$ and $T(y)$ are $3$ and $T_{0}(y):=2(\log y - \log 0.77)$, respectively. %Our interest is in identification and inference for $\tilde{\beta}_2$ and $\tilde{T}(y)$.
\par
Given the DGPs described above, $B_{I}$ is the set of regression parameter values that satisfy the conditional moment inequality (\ref{eq:conditional moment inequality}) for all pair of values of $\left(X_{1},X_{2}\right)$, and $T_{B_{I}}(y)$ for a fixed $y \in \Real$ is the set of values $t$ that satisfy the inequality in (\ref{eq:definition_T_I_beta}) for some $\beta \in B_{I}$ and all pair of values of $(X_{1},X_{2})$.
We numerically obtain $B_{I}$ and $T_{B_I}(y)$ by simulating the distributions of $\log Y^\ast$ and $\log C$ for each pair of values of $\left(X_{1},X_{2}\right)$ using 20,000 random draws from each DGP. 
\par
Table \ref{tab:numerical example} presents the numerical results for $B_{I}$. Each cell in the table shows the interval of $\beta_2$ obtained by the projection of the computed $B_I$
for each model and each support of $X_{1}$. As expected, the interval shrinks as the censoring rate decreases or the support becomes wider or finer. In each model, expanding the support from (i) to (ii) does not shrink the lower or upper bounds on $\beta_2$. 

%%%%%%%%%%%%%%%%%%%%%%%%%%%%%%%%%%%%%%%%%%%%
\bigskip
\begin{table}[h] 
\caption{Computed $B_{I}$ for Models 1--3 and Supports (i)--(iii)}
\vspace*{-0.5cm}
\begin{center}
\begin{tabular}{cccc} 
\hline 
Model / Support of $X_{1}$ & Support (i) & Support (ii) & Support (iii)\tabularnewline
\hline 
Model 1 & $[2.51,3.49]$ & $[2.51,3.49]$ & $[2.81,3.19]$\tabularnewline
Model 2 & $[2.00,4.00]$ & $[2.00,3.49]$ & $[2.21,3.49]$\tabularnewline
Model 3 & $[1.50,5.00]$ & $[1.50,3.99]$ & $[1.80,3.79]$\tabularnewline
\hline 
\end{tabular}
\begin{tablenotes}
\ \footnotesize Note: Each cell shows the interval of $\beta_2$ obtained by the projection of the computed $B_I$. 
\end{tablenotes}
\label{tab:numerical example}
\end{center}
\end{table}
%%%%%%%%%%%%%%%%%%%%%%%%%%%%%%%%%%%%%%%%%%%%

Figure \ref{fig:id set for T} illustrates the computed $T_{B_{I}}$ for each model and each support of $X_{1}$, along with the true normalized transformation function $T_0$. As mentioned in Section \ref{sec:transformation function}, $T_{B_{I}}$ is not bounded from above. The result shows that as the censoring rate decreases or the support of $X_1$ becomes wider or finer, the lower bound on $T_{B_{I}}$ becomes closer to the true transformation function. The lower bound also becomes smoother as the support of $X_1$ becomes finer.
\par
The supplementary material to this paper gives two additional numerical examples for $B_{I}$. In the first example, we compute $B_{I}$ in models with three covariates, where the results in Table \ref{tab:numerical example} are extended to two-dimensional figures.
In the second example, using slightly different models and supposing that the transformation model is parametrically specified (i.e., the functional forms of $\Lambda$ and $F_{U}$ are known), we compute the sharp identified set of $\beta_0$, based on the result in \cite{Szydlowski_2019}, and compare it with $B_{I}$.

%%%%%%%%%%%%%%%%%%%%%%%%%%%%%%%%%%%%%%%%%%%%
\bigskip
\begin{figure}[h!]
\caption{Computed $T_{B_I}$ for Models 1-3 and Supports (i)-(iii)}
\begin{tabular}{ccc}
\begin{minipage}[t]{0.33\hsize}
\begin{center}
{\small (a-1) Model 1 \& Support (i)}
\includegraphics[scale=0.27]{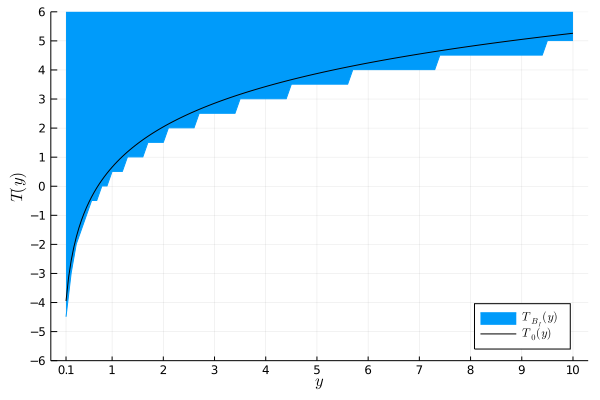}
\end{center}
\end{minipage}
\begin{minipage}[t]{0.33\hsize}
\begin{center}
{\small (a-2) Model 1 \& Support (ii)}
\includegraphics[scale=0.27]{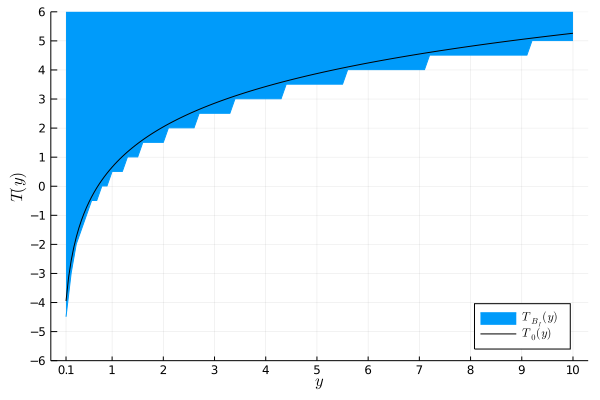}
\end{center}
\end{minipage}
\begin{minipage}[t]{0.33\hsize}
\begin{center}
{\small (a-3) Model 1 \& Support (iii)}
\includegraphics[scale=0.27]{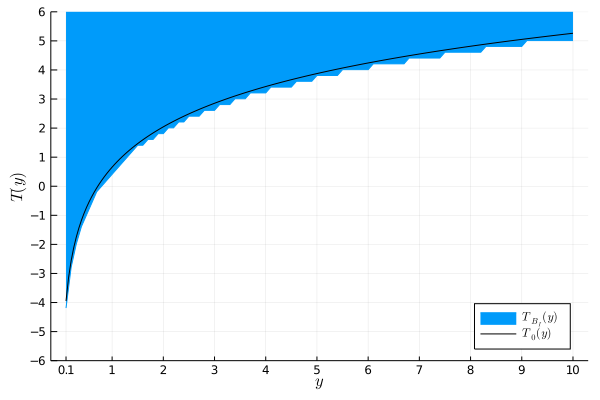}
\end{center}
\end{minipage}
\bigskip
\\
\begin{minipage}[t]{0.33\hsize}
\begin{center}
{\small (b-1) Model 2 \& Support (i)}
\includegraphics[scale=0.27]{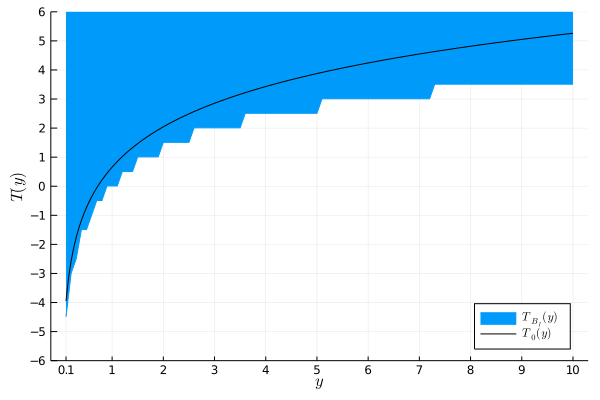}
\end{center}
\end{minipage}
\begin{minipage}[t]{0.33\hsize}
\begin{center}
{\small (b-2) Model 2 \& Support (ii)}
\includegraphics[scale=0.27]{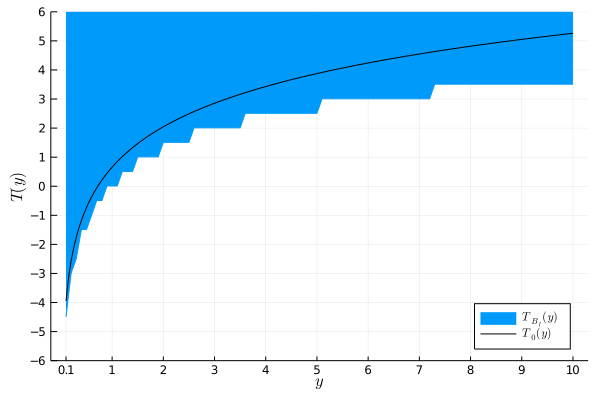}
\end{center}
\end{minipage}
\begin{minipage}[t]{0.33\hsize}
\begin{center}
{\small (b-3) Model 3 \& Support (iii)}
\includegraphics[scale=0.27]{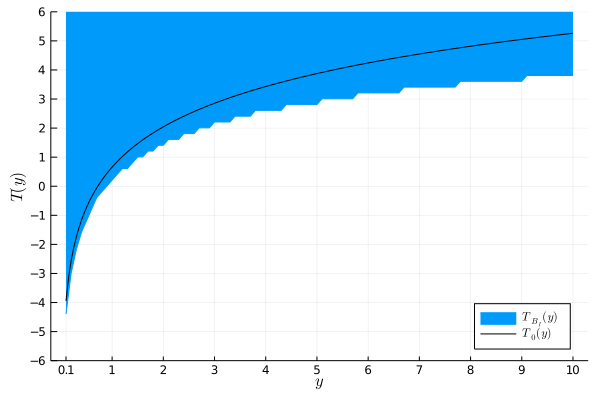}
\end{center}
\end{minipage}

\bigskip
\\
\begin{minipage}[t]{0.33\hsize}
\begin{center}
{\small (c-1) Model 3 \& Support (i)}
\includegraphics[scale=0.27]{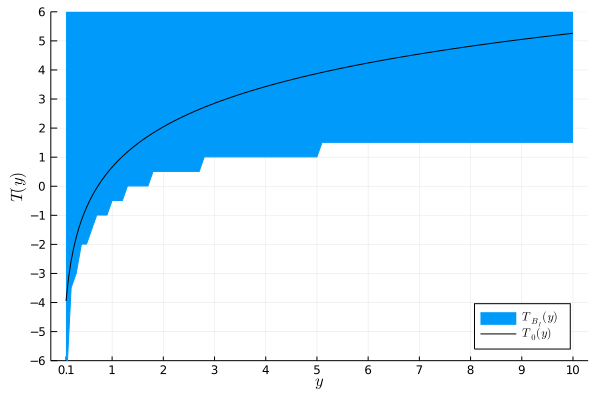}
\end{center}
\end{minipage}
\begin{minipage}[t]{0.33\hsize}
\begin{center}
{\small (c-2) Model 3 \& Support (ii)}
\includegraphics[scale=0.27]{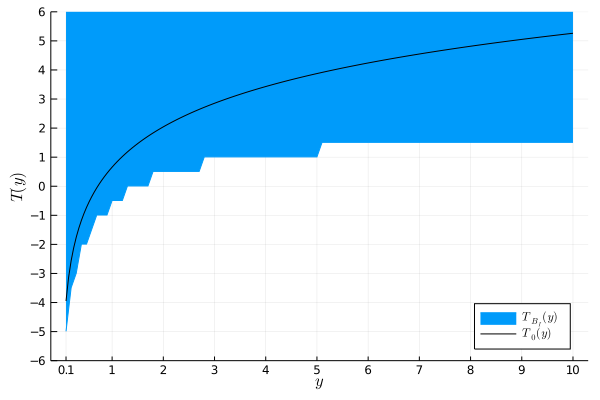}
\end{center}
\end{minipage}
\begin{minipage}[t]{0.33\hsize}
\begin{center}
{\small (c-3) Model 3 \& Support (iii)}
\includegraphics[scale=0.27]{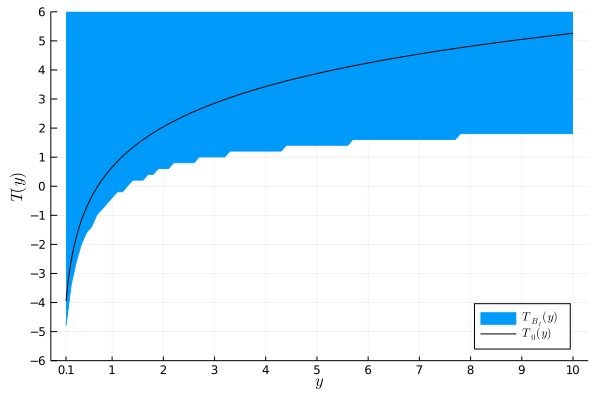}
\end{center}
\end{minipage}
\end{tabular}
\begin{tablenotes}\footnotesize
\item[] Notes: The shaded area in each figure is the computed $T_{B_I}(y)$ for $0.1 \leq y \leq 10$. The solid line in each figure is the graph of the true normalized transformation function $T_0(y)=2(\log y - \log 0.77)$.
\end{tablenotes} \label{fig:id set for T}
\end{figure}

\bigskip

%%%%%%%%%%%%%%%%%%%%%%%%%%%%%%%%%%%%%
%%%%%%%%%%%%%%%%%%%%%%%%%%%%%%%%%%%%%

\subsection{Monte Carlo Experiments \label{sec:monte_carlo}}
This subsection presents Monte Carlo simulation results to evaluate the finite sample size and power properties of the inference method for the regression parameters presented in Section \ref{sec:inference}. %Appendix \ref{app:joint_inference} gives additional simulation results for the joint inference method for the regression parameters and transformation function.
We here use two DGPs (DGP1 and DGP2). In DGP1 and DGP2, the data are derived from Models 2 and 3, respectively, where $X_{1}$ is distributed as $N(0,2)$; $X_{2}$ is uniformly distributed over $\left\{ 0,1\right\}$; $U$, $V$, and $W$ have independent unit exponential distributions. The censoring rates in DGP1 and DGP2 are approximately $16\%$ and $30\%$, respectively.
\par
For the Monte Carlo experiments, 500 samples are randomly drawn with sample sizes $n=100$, $250$, and $500$. Based on the inference method, we conduct a test of $H_{0}:$ (\ref{eq:conditional moment inequality}) holds against $H_{1}:$ (\ref{eq:conditional moment inequality}) is violated at each value of $\left(\beta_{1},\beta_{2}\right)\in\left\{ 1\right\} \times\left\{ -1,-0.5,\ldots,9\right\} $. The true  value of the normalized ${\beta}_{2}$ is 3. 
Critical values are simulated using 1,000 repetitions for the significance level $\alpha=0.05$.
As a baseline case, we set the tuning parameters in the test statistics and GMS function to $R=5$, $\epsilon = 0.0001$, $B_{n}=B_{n}^{bc}\equiv\left(0.8\ln\left(n\right)/\ln\ln\left(n\right)\right)^{\frac{1}{2}}$, and $\kappa_{n}=\kappa_{n}^{bc}\equiv\left(\left(1-\hat{p}_{1-D}^{1/3}\right)^{2/5}\times0.6\ln(n)\right)^{\frac{1}{2}}$.
We set $\eta=10^{-6}$ throughout this and the following sections.
We do not assume that the researcher knows the exact distribution of $X_{1}$; hence, we transform $X_{1}$ into $\tilde{X}_{1}$ as described in Section \ref{sec:test statistics} and use $\tilde{X}_{1}$ instead of $X_{1}$. We assume that the researcher knows the support of $X_2$. Thus we apply the inference method presented in Section \ref{sec:inference} with $\MX=\MX_{1} \times \MX_{2} = [0,1]\times \{0,1\}$. 
\par
Figure \ref{fig:rejection frequency} shows the graphs of rejection frequencies for DGP1 and DGP2 given the baseline case of the tuning parameters with $\epsilon=0.0001$ or $\epsilon = 0.001$. All the rejection frequencies at the true value of normalized $\beta_2$ are close to or lower than the nominal size $\alpha=0.05$ (though the rejection frequency substantially exceeds 0.05 in DGP1 with $n=100$ and $\epsilon=0.0001$). The rejection frequencies are also close to or lower than 0.05 in the corresponding intervals computed in Table \ref{tab:numerical example} with Support (iii). When $\epsilon=0.001$, the test has less power but is more likely to be conservative even in small sample. The power of the test is higher at small values of $\beta_{2}$ than at large values.
\par
Table \ref{tab:robustness check} shows the rejection frequencies at the true parameter value and at $\left(\beta_{1},\beta_{2}\right)=\left(1,0\right)$ for several choices of the tuning parameters in DGP1 and DGP2. The point $\left(\beta_{1},\beta_{2}\right)=\left(1,0\right)$ is not contained in $B_{I}$, as seen from the results of the numerical examples. Table \ref{tab:robustness check} shows the degree of sensitivity of the test to variation in the sample size $n$, the choice of the truncation integer $R$ in the approximate test statistic, the value of $\epsilon$ for the modified variance estimator $\bar{\sigma}_{n}^{2}\left(\beta,g\right)$, and the choice of $B_n$ and $\kappa_n$ in the GMS function. In the case of the last row of Table \ref{tab:robustness check}, $\kappa_{n}$ does not depend on the sample censoring rate $\hat{p}_{1-D}$. The results show that there is some sensitivity to the sample size, the choices of $\kappa_n$ and $R$, and the value of $\epsilon$. In particular, the sensitivity to the choice of $\kappa_{n}$ is high. A small value of $\epsilon$ leads to the test having high power, but can lead to the test having incorrect size in small samples. 
\par
The supplementary material contains two additional Monte Carlo simulation results. One examines inference on the regression parameters in models with three covariates. The other evaluates the performance of a joint inference method for the regression parameters and transformation function introduced in Appendix \ref{app:joint_inference}.

%Appendix ? provides simulation results for the inference for the transformation function. We there show finite sample properties of (i) joint inference for $\beta_{0}$ and $T(y)$ and (ii) joint inference for $T(y)$ at multiple points of $y$. We also show finite sample properties of the inference for the regression parameters in the case that $X$ is three dimensional vector (i.e., $B_{I}$ is a two-dimensional set).

%\par
%%%%%%%%%%%%%%%%%%%%%%%%%%%%%%%%%%%%%%%%%%%%
\bigskip
\begin{figure}[h]
\caption{Rejection Frequencies in DGP1 and DGP2}
\begin{tabular}{cc}
\begin{minipage}[t]{0.5\hsize}
\begin{center}
{\small (a) DGP1 \& $\epsilon = 0.0001$}
\includegraphics[scale=0.39]{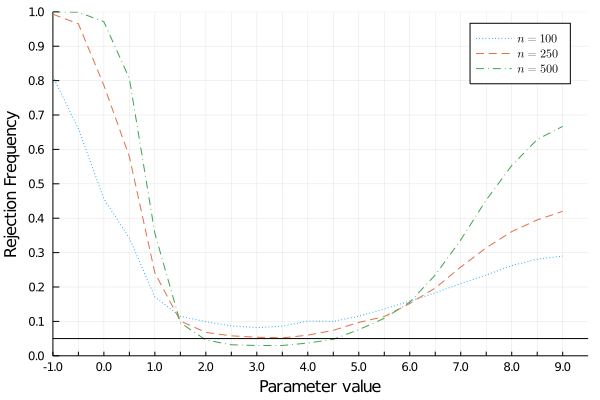}
\end{center}
\end{minipage}
\begin{minipage}[t]{0.5\hsize}
\begin{center}
{\small (b) DGP1 \& $\epsilon = 0.001$}
\includegraphics[scale=0.39]{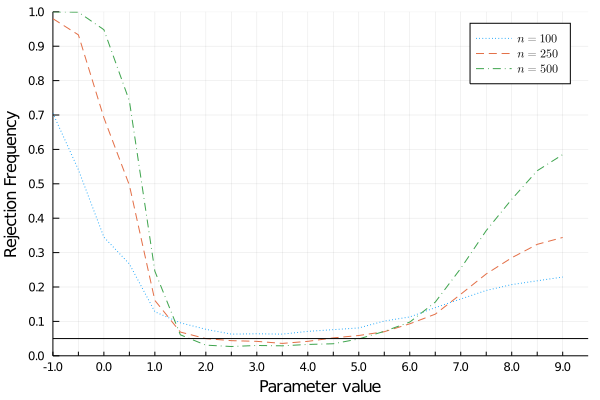}
\end{center}
\end{minipage}
\bigskip
\\
\begin{minipage}[c]{0.5\hsize}
\begin{center}
{\small (c) DGP2 \& $\epsilon = 0.0001$}
\includegraphics[scale=0.39]{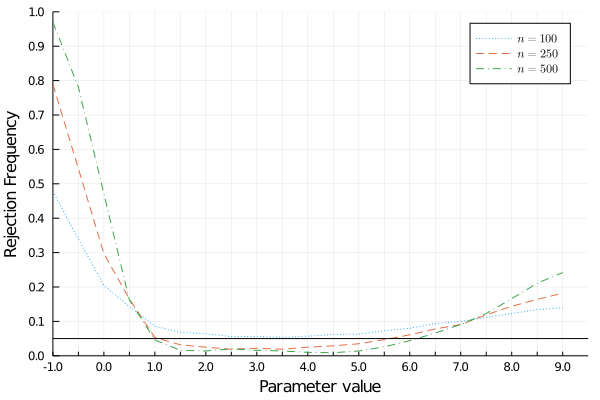}
\end{center}
\end{minipage}
\begin{minipage}[c]{0.5\hsize}
\begin{center}
{\small (d) DGP2 \& $\epsilon = 0.001$}
\includegraphics[scale=0.39]{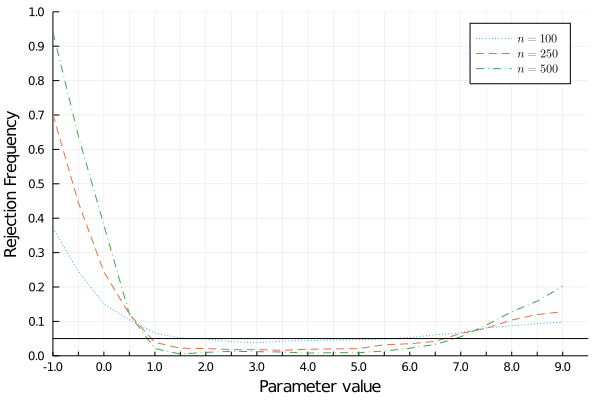}
\end{center}
\end{minipage}
\end{tabular}
\begin{tablenotes}\footnotesize
\item[] Notes: The dashed, dotted, dash-dotted lines in each figure indicate rejection frequencies for sample sizes of $100$, $250$, and $500$, respectively. The solid horizontal line in each figure indicates the significance level of 0.05. 
\end{tablenotes} \label{fig:rejection frequency}
\end{figure}

%%%%%%%%%%%%%%%%%%%%%%%%%%%%%%%%%%%%%%%%%%%%
%%%%%%%%%%%%%%%%%%%%%%%%%%%%%%%%%%%%%%%%%%%%
\bigskip
\begin{table}[h]
\caption{Rejection Frequencies for the Inference Method: Variation in Sample
Size and Choice of the Tuning Parameters}
\vspace*{-0.5cm}
\begin{center}
\scalebox{0.85}[0.85]{
\begin{tabular}{cccccc}
\hline 
 & \multicolumn{2}{c}{DGP1} && \multicolumn{2}{c}{DGP2} \tabularnewline
\cline{2-3} \cline{5-6} 
Case & {\small $\left(\beta_{1},\beta_{2}\right)=\left(1,3\right)$} & {\small $\left(\beta_{1},\beta_{2}\right)=\left(1,0\right)$}  & &
{\small $\left(\beta_{1},\beta_{2}\right)=\left(1,3\right)$}  & {\small $\left(\beta_{1},\beta_{2}\right)=\left(1,0\right)$}  \tabularnewline
\hline 
Baseline Case                 & \multirow{2}{*}{0.054} & \multirow{2}{*}{0.786} && \multirow{2}{*}{0.022} & \multirow{2}{*}{0.297}\tabularnewline
$((n,R,B_n,\kappa_n,\epsilon)=(250,5,B_{n}^{bc},\kappa_{n}^{bc},0.0001))$ &&&&&\tabularnewline
$\epsilon=0.001$          & 0.042 & 0.692 && 0.018 & 0.244\tabularnewline
$\epsilon=0.00001$        & 0.061 & 0.805 && 0.022 & 0.046\tabularnewline
$R=3$                     & 0.052 & 0.594 && 0.046 & 0.192\tabularnewline
$R=7$                     & 0.033 & 0.824 && 0.019& 0.322 \tabularnewline
$n=100,\epsilon=0.001$    & 0.064 & 0.345 && 0.038 & 0.150 \tabularnewline
$n=100,\epsilon=0.0001$   & 0.082 & 0.456 && 0.056 & 0.205 \tabularnewline
$n=100,\epsilon=0.00001$  & 0.089 & 0.489 && 0.058& 0.227 \tabularnewline
$n=500,\epsilon=0.001$    & 0.030 & 0.948 && 0.012 & 0.380 \tabularnewline
$n=500,\epsilon=0.0001$   & 0.030 & 0.971 && 0.016& 0.471 \tabularnewline
$n=500,\epsilon=0.00001$  & 0.032 & 0.970 && 0.018 & 0.492 \tabularnewline
$n=1000,\epsilon=0.001$   & 0.021 & 1.000 && 0.004 & 0.543 \tabularnewline
$n=1000,\epsilon=0.0001$  & 0.031 & 1.000 && 0.010 & 0.675 \tabularnewline
$n=1000,\epsilon=0.00001$ & 0.037 & 1.000 && 0.012 &0.694 \tabularnewline
$B_n=B_n^{bc} /2$ & 0.053 & 0.786 && 0.021& 0.297 \tabularnewline
$B_n=2B_n^{bc}$ & 0.054 & 0.786 && 0.022& 0.297 \tabularnewline
$\kappa_n=\kappa_n^{bc} /2$ & 0.293 & 0.944 && 0.148& 0.552 \tabularnewline
$\kappa_n=2\kappa_n^{bc}$ & 0.000 & 0.243 && 0.000& 0.045 \tabularnewline
$\left(B_{n},\kappa_{n}\right)=1/2\left(B_{n}^{bc},\kappa_{n}^{bc}\right)$ & 0.288 & 0.944 && 0.140& 0.550 \tabularnewline
$\left(B_{n},\kappa_{n}\right)=2\left(B_{n}^{bc},\kappa_{n}^{bc}\right)$ & 0.000 & 0.243 && 0.000& 0.045 \tabularnewline
$\kappa_{n}=\left(0.6\ln(n)\right)^{\frac{1}{2}}$ & 0.026 & 0.683 && 0.010& 0.217 \tabularnewline
\hline 
\end{tabular}
}
\label{tab:robustness check}
\end{center}
\end{table}
\bigskip

%%%%%%%%%%%%%%%%%%%%%%%%%%%%%%%%%%%%%%%%%%%%

\section{Empirical Illustration \label{sec:empirical illustration}}

We apply the proposed inference method to evaluate the effect of heart transplants on patients' survival duration using the Stanford Heart Transplant Data taken from \cite{Kalbfleisch_Prentice_1980}. The data set consists of survival times (in days) of 103 patients; an indicator of censoring, which takes the value one if the patient is dead (uncensored) or zero if the patient is censored; an indicator of receiving a heart transplant, which takes the value one if the patient receives a heart transplant or zero otherwise; and the age (in years) of patients at the time of acceptance into the program. Among the 103 patients, $27\%$ (28 patients) are censored due to attrition or administrative censoring. The censoring rates for the treated (receive a transplant) and
untreated (do not receive a transplant) groups are $35\%$ and $22\%$, respectively.
\par
We consider the following censored transformation model,
\begin{align*}
T\left(Y_{0i}\right) =  \min\left\{\beta_{cons} +  X_{i,age}\beta_{age}+X_{i,treat}\beta_{treat}+U_{i},T\left(C_{i}\right)\right\} ,
\end{align*}
where, for each patient $i$, $Y_{0i}$ is the observed survival time, $X_{i,age}$ is the age, $X_{i,treat}$ is the transplant indicator, $U_{i}$ is unobserved heterogeneity, $C_{i}$ is the censoring time, and $T$ is a strictly increasing function.
Applying the proposed method, we allow the censoring to be arbitrarily correlated with the patient's age and unobserved heterogeneity. Furthermore, we do not specify the transformation function or the distribution function of the patient's unobserved heterogeneity. For scale normalization, we set $\left|\beta_{age}\right|=1$; for location normalization, we set $\beta_{cons}=0$ and $T(\tilde{y})=0$ with $\tilde{y}=90$ being the median of $Y_{0}$ in the sample.
Our interest is in the normalized regression parameter $\beta_{treat}$ and the normalized transformation function $T$.
We compare the proposed method with the partial rank estimator (PRE) proposed by \cite{Khan_Tamer_2007}. This estimator is robust up to covariate-dependent censoring and consistently estimates the normalized regression parameters in the nonparametric transformation model.
\par
Table \ref{tab:empirical illustration} shows the inference results for $\beta_{treat}$. It presents the point estimate obtained from the PRE and 95\% confidence intervals obtained from the PRE and the proposed method. The confidence interval obtained from the PRE is computed based on 1,000 bootstrap pseudo samples from the data. For the proposed method, we set the tuning parameters to $R=5$, $B_{n}=\left(0.8\ln\left(n\right)/\ln\ln\left(n\right)\right)^{\frac{1}{2}}$, $\kappa_{n}=\left(\left(1-\hat{p}_{1-D}^{1/3}\right)^{2/5}\times0.6\ln(n)\right)^{\frac{1}{2}}$, and $\eta=10^{-6}$ as in the baseline case in the Monte Carlo simulation in the previous section. For $\epsilon$, we use both $\epsilon=0.0001$ and $\epsilon=0.001$. Setting $\epsilon=0.001$ is more conservative in a small sample like the Stanford Heart Transplant data set according to the Monte Carlo simulation results in the previous section. The confidence intervals obtained from the proposed method do not have finite upper bounds. This would be because age does not have sufficiently large support to derive a finite upper bound on $\beta_{treat}$ in $B_I$. The estimate obtained from the PRE is positive and is significantly different from zero. The 95\% confidence interval obtained from the proposed method is also entirely positive regardless of the choice of $\epsilon$. With the conservative choice of $\epsilon=0.001$, the $95\%$ confidence interval obtained from the proposed method covers the confidence interval obtained from the PRE. The inference results relating to the proposed method show that even if censoring is arbitrarily correlated with a patient's age or unobserved heterogeneity, a heart transplant has a positive effect on the patient's survival time. 
\par
%%%%%%%%%%%%%%%%%%%%%%%%%%%%%%%%%%%%%%%%%%%%
\bigskip
\begin{table}[h]
\caption{Empirical Results for $\beta_{treat}$}
\vspace*{-0.5cm}
\begin{center}
\begin{tabular}{cccc}
\hline 
 &  \multirow{2}{*}{PRE} & \multicolumn{2}{c}{Proposed Method}  \tabularnewline \cline{3-4}
 &   & $\epsilon=0.001$ & $\epsilon=0.0001$ \tabularnewline
\hline 
Estimate & 42.6 & - & -\tabularnewline
95\% Confidence Interval & {[}17.2, 57.3{]} & {[}10.4, +$\infty${]}& {[}31.3, +$\infty${]} \tabularnewline
\hline 
\end{tabular}
\label{tab:empirical illustration}
\end{center}
\end{table}
%%%%%%%%%%%%%%%%%%%%%%%%%%%%%%%%%%%%%%%%%%%%

We next apply the joint inference procedure for the regression parameters and transformation function presented in Appendix \ref{app:joint_inference}. Figure \ref{fig:empirics_CI of T} shows the marginal $95\%$-confidence set of $T(y)$ at each $y \in [0,900]$, which is the projection of the three-dimensional confidence interval of $(\beta_{age},\beta_{treat},T(y))$ on the one-dimension.
The same tuning parameters as those used in Table \ref{tab:empirical illustration} are used. Since $T_{I,\beta}(y)$ does not have a finite upper bound, neither do the confidence intervals. The estimated confidence intervals also do not have finite lower bounds for small values of $y$. The $95\%$-confidence lower bound on $T(y)$ increases rapidly with $y$ in the approximate interval $[100,250]$, but it changes slightly when $y$ is large.
%These confidence sets are still informative in that they give predictive lower bounds for $T(y)$ at each $y$ except for small $y$.

%%%%%%%%%%%%%%%%%%%%%%%%%%%%%%%%%%%%%%%%%%%%
\bigskip
\begin{figure}[h]
\caption{Empirical Results for the Transformation Function}
\begin{tabular}{cc}
\begin{minipage}[t]{0.5\hsize}
\begin{center}
{\small (a) $\epsilon = 0.001$}
\includegraphics[scale=0.4]{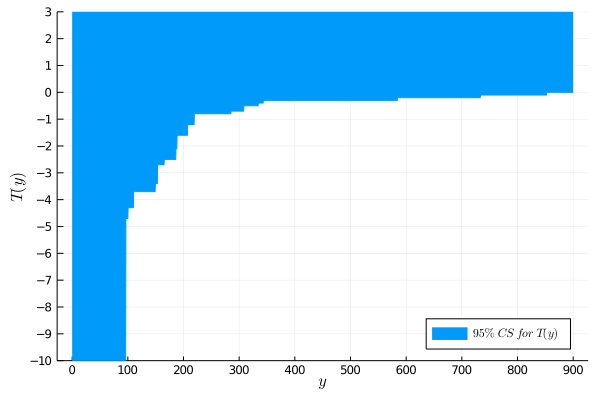}
\end{center}
\end{minipage}
\begin{minipage}[t]{0.5\hsize}
\begin{center}
{\small (b) $\epsilon = 0.0001$}
\includegraphics[scale=0.4]{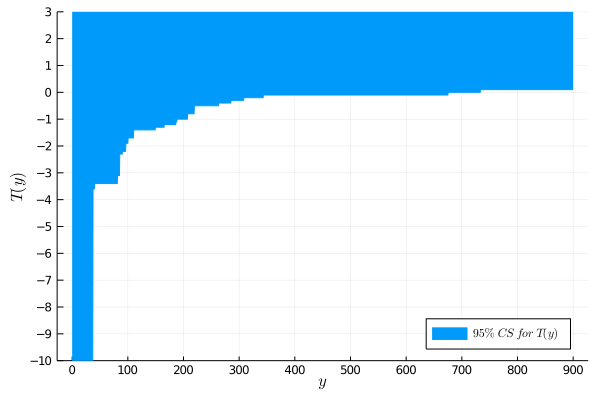}
\end{center}
\end{minipage}
\end{tabular}
\begin{tablenotes}\footnotesize
\item[] Notes: 
The shaded are in each figure is the computed confidence interval of $T(y)$ at each $y \in [0,900]$. The confidence intervals are computed on the range $[-10,3]$.
\end{tablenotes} \label{fig:empirics_CI of T}
\end{figure}
%%%%%%%%%%%%%%%%%%%%%%%%%%%%%%%%%%%%%%%%%%%%

\section{Concluding Remarks \label{sec:conclusion}}

In this paper, we propose a partial identification and inference approach for a nonparametric transformation model in the presence of endogenous censoring. We develop bounds on the regression parameters and the transformation function, each of which is characterized by conditional moment inequalities involving U-statistics. We also characterize the sharp identified set of the regression parameters, using concepts from random set theory, though this set is hard to compute. A comparison of the proposed set and the sharp identified set characterization makes it clear when the proposed set approaches the sharp set. Based on the identification result, we propose an inference method for the regression parameters by extending the inference approach for conditional moment inequality models, proposed by \cite{Andrews_Shi_2013}, to the U-statistics case. We also derive the asymptotic properties of this approach. A joint inference procedure for the regression parameters and transformation function is presented in Appendix \ref{app:joint_inference}.
Numerical examples illustrate the characteristics of the proposed sets for the regression parameters and transformation function, and the results of Monte Carlo experiments demonstrate the size and power properties of the proposed inference methods. As an empirical application, we apply the inference methods to evaluate the effect of heart transplants on patients' survival duration using data from the Stanford Heart Transplant Study, for which we find that heart transplants have a positive effect on patients' survival duration regardless of the censoring mechanism.

%%%%%%%%%%%%%%%%%%%%%%%%%%

%\newpage
\appendix

\part*{Appendix \label{app:appendix}}

In this appendix, Section \ref{app:proof_1} provides proofs of Theorems \ref{thm:partial identification} and \ref{thm:transformation function}. Section \ref{app:proof_2} provides a proof of Theorem \ref{thm:asymptotic property} with some auxiliary lemmas. Section \ref{app:joint_inference} presents a joint inference procedure for the regression parameters and the transformation function based on the identification results presented in Sections \ref{sec:partial identification} and \ref{sec:transformation function}. Some additional numerical studies and Monte Carlo simulation studies are presented in the supplementary material to this paper.

\section{Proofs of Theorems \ref{thm:partial identification} and \ref{thm:transformation function} \label{app:proof_1}}
This section provides proofs of Theorems \ref{thm:partial identification} and \ref{thm:transformation function}.

\begin{proof}[Proof of Theorem \ref{thm:partial identification}]
From the definitions of $Y_{1i}$ and $Y_{0i}$, the following holds for all $(x_{i},x_{j})\in\mathcal{X}^{2}$,
\begin{align*}
P(Y_{1i}\geq Y_{0j}\mid x_{i},x_{j}) \geq  P(Y_{i}^{\ast}\geq Y_{j}^{\ast}\mid x_{i},x_{j}).
\end{align*}
For the conditional multinomial distribution $P(Y_{i}^{\ast},Y_{j}^{\ast}\mid x_{i},x_{j})$,
it follows that
\begin{align*}
P(Y_{i}^{\ast}\geq Y_{j}^{\ast} \mid x_i,x_j) &= P(\Lambda(x_{i}^{\prime} \beta_0 + U_i) \geq \Lambda(x_{j}^{\prime} \beta_0 + U_j)) \\
&\geq P(x_{i}^{\prime} \beta_0 + U_i \geq x_{j}^{\prime} \beta_0 + U_j)\\
&= P(U_i - U_j \geq - (x_i - x_j)^{\prime} \beta_0 ). 
\end{align*}
The first line follows from Assumptions \ref{asm:iid} and \ref{asm:independence}. The second line follows because $\Lambda$ is a monotonically increasing function by Assumption \ref{asm:monotonicity}. When $(x_i - x_j)^{\prime} \beta_0 \geq 0$,
\begin{align*}
    P(U_i - U_j \geq -(x_i - x_j)^{\prime}\beta_0 )  \geq \frac{1}{2}
\end{align*}
holds because $U_i -U_j$ is symmetrically distributed about zero under Assumption \ref{asm:iid}. Therefore it follows that
\begin{align*}
x_{i}^{\prime}\beta_{0}\geq x_{j}^{\prime}\beta_{0} & \Rightarrow  P\left(Y_{i}^{\ast}\geq Y_{j}^{\ast}\mid x_{i},x_{j}\right)\geq \frac{1}{2}\\
 & \Rightarrow  P(Y_{1i}\geq Y_{0j}\mid x_{i},x_{j})\geq \frac{1}{2}
\end{align*}
for all $(x_{i},x_{j})\in\mathcal{X}^{2}$. This implies that $\beta_{0}\in B_{I}$ a.s.. 

We next show that $B_I$ is a proper subset of $B$. Let $(\tilde{x}_i,\tilde{x}_j) \in \widetilde{\MX}^2$. Then there exist $\beta \in B$ such that
\begin{align*}
    \tilde{x}_{i}^{\prime}\beta \geq \tilde{x}_{j}^{\prime}\beta \Rightarrow P(Y_{1i}\geq Y_{0j} \mid \tilde{x}_i,\tilde{x}_j) <\frac{1}{2}.
\end{align*}
Such a $\beta$ is not a member of $B_I$. Hence, because $P((X_i,X_j) \in \widetilde{\MX}^2)>0$ by Assumption \ref{asm:support condition}, $B_I \subset B$ holds a.s.
\end{proof}

\bigskip

\begin{proof}[Proof of Theorem \ref{thm:transformation function}]
Under Assumptions \ref{asm:iid}, \ref{asm:independence}, and \ref{asm:strict monotonicity}, $\beta_{0} \in B_{I}$ a.s. from Theorem \ref{thm:partial identification}. Then it suffices to show that $T\left(y\right) \in T_{I,\beta_{0}}\left(y\right)$ a.s. for any $y\in\mathbb{R}$.
\par
Let $y \in \Real$ be fixed. Note that from the definitions of $Y_{1i}$ and $Y_{0i}$,
\begin{align*}
P(Y_{1i}\geq y\mid x_{i})-P(Y_{0j}\geq\tilde{y}\mid x_{j})  \geq P(Y_{i}^{\ast}\geq y\mid x_{i})-P(Y_{j}^{\ast}\geq\tilde{y}\mid x_{j}).
\end{align*}
holds for all $(x_{i},x_{j})\in\mathcal{X}^{2}$.
\par
For the conditional multinomial distribution $P(Y_{i}^{\ast},Y_{j}^{\ast}\mid x_{i},x_{j})$, it follows that
\begin{align*}
 & P(Y_{i}^{\ast}\geq y\mid x_{i})-P(Y_{j}^{\ast}\geq\tilde{y}\mid x_{j})\\
=&\ P\left(x_{i}^{\prime}\beta_{0}+U_{i}\geq T\left(y\right)\mid x_{i}\right)-P\left(x_{j}^{\prime}\beta_{0}+U_{j}\geq0\mid x_{j}\right)\\
= &\ F_{U}\left(-x_{j}^{\prime}\beta_{0}\right)-F_{U}\left(T\left(y\right)-x_{i}^{\prime}\beta_{0}\right).
\end{align*}
The first equality follows from Assumptions \ref{asm:iid} and \ref{asm:strict monotonicity} and the fact that $T\left(\tilde{y}\right)=0$ for the location normalization. The second equality follows from Assumptions \ref{asm:iid} and \ref{asm:independence}. Since 
\begin{align*}
    x_{i}^{\prime}\beta_{0}-x_{j}^{\prime}\beta_{0}\geq T\left(y\right) \Rightarrow
    F_{U}\left(-x_{j}^{\prime}\beta_{0}\right) \geq F_{U}\left(T\left(y\right)-x_{i}^{\prime}\beta_{0}\right),
\end{align*}
we have
\begin{align*}
x_{i}^{\prime}\beta_{0}-x_{j}^{\prime}\beta_{0}\geq T\left(y\right) & \Rightarrow  P(Y_{i}^{\ast}\geq y\mid x_{i})-P(Y_{j}^{\ast}\geq\tilde{y}\mid x_{j})\geq0\\
 & \Rightarrow  P(Y_{1i}\geq y\mid x_{i})-P(Y_{0j}\geq\tilde{y}\mid x_{j})\geq0
\end{align*}
for all $(x_{i},x_{j})\in\mathcal{X}^{2}$. This implies that $T\left(y\right)\in T_{I,\beta_{0}}\left(y\right)$ a.s.
\end{proof}

\section{Proof of Theorem \ref{thm:asymptotic property} \label{app:proof_2}}

This section provides a proof of the uniform asymptotic probability results for the inference method presented in Section \ref{sec:inference}. The outline of the proof is same as that of the proofs of Theorems 2(b) and 3 in AS, but we modify them for the case of U-statistics. Let $\rightsquigarrow$ denote weak convergence of a stochastic process in the sense of \cite{Pollard_1990}. The following notations are similar to the notations introduced in AS, 
\begin{align*}
v_{n,P}\left(\beta,g\right)\equiv n^{\frac{1}{2}}\left(\bar{m}_{n}\left(\beta,g\right)-E_{P}\left[m(W_{i},W_{j},\beta,g)\right]\right)/\sigma_{p}\left(\beta\right)
\end{align*}
and
\begin{align*}
\hat{h}_{2,n,P}\left(\beta,g,g^{\ast}\right)  \equiv & \left\{ \frac{1}{n\left(n-1\right)\left(n-2\right)}\sum_{i\neq j\neq k}m(W_{i},W_{j},\beta,g)m(W_{i},W_{k},\beta,g^{\ast})\right.\\
   & \left.-E_{P}\left[m(W_{i},W_{j},\beta,g)\right]\cdot E_{P}\left[m\left(W_{i},W_{j},\beta,g^{\ast}\right)\right]\right\} /\sigma_{P}^{2}\left(\beta\right).
\end{align*}
Let $\left\{ v_{h_{2}}\left(g\right):g\in{\cal G}\right\} $ be a mean zero Gaussian process with some covariance kernel $h_{2}\left(\cdot,\cdot\right)$ on ${\cal G}\times{\cal G}$. Let $\{a_{n}:n\geq 1\}$ denote a subsequence of $\{n\}$.
\par
To prove Theorem \ref{thm:asymptotic property}, we first prove that the following two lemmas hold. Lemma \ref{lem:lemma_1} implies that Assumption EP in AS holds. Lemmas \ref{lem:lemma_2} implies that a version of Assumption CI in AS, which is modified for the case of U-statistics, holds.
\bigskip

\begin{lemma} \label{lem:lemma_1}
For any subsequence $\left\{ \left(\beta_{a_{n}},P_{a_{n}}\right)\in{\cal Q}:n\geq1\right\} $
such that
\begin{align*}
\underset{n\rightarrow\infty}{\lim}\underset{g,g^{\ast}\in{\cal G}}{\sup}\left\Vert h_{2,P_{a_{n}}}\left(\beta_{a_{n}},g,g^{\ast}\right)-h_{2}\left(g,g^{\ast}\right)\right\Vert =0
\end{align*}
for some covariance kernel $h_{2}(\cdot,\cdot)$ on ${\cal G}\times{\cal G}$,
we have\\
(a) $\sqrt{a_{n}}v_{a_{n},P_{a_{n}}}\left(\beta_{a_{n}},\cdot\right)\rightsquigarrow v_{h_{2}}\left(\cdot\right)$
as $n\rightarrow\infty$, and \\
(b) $\underset{\left(g,g^{\ast}\right)\in{\cal G}\times\mathcal{G}}{\sup}\left\Vert \hat{h}_{2,a_{n},P_{a_{n}}}\left(\beta_{a_{n}},g,g^{\ast}\right)-h_{2}\left(g,g^{\ast}\right)\right\Vert \underset{p}{\rightarrow}0$
as $n\rightarrow\infty$.
\end{lemma}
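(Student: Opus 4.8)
The plan is to reduce both parts to standard U-statistic asymptotics combined with the empirical process arguments of \cite{Andrews_Shi_2013} (AS). The key structural fact is that each $\sqrt{n}\bar m_n(\beta,g)$ is a non-degenerate U-statistic of order two, whose leading term in the Hoeffding decomposition is the linear (projection) term, while the degenerate second-order term is asymptotically negligible. Concretely, writing $m(W_i,W_j,\beta,g)$ symmetrized, I would use the Hoeffding decomposition
\begin{align*}
\bar m_n(\beta,g) - E_P[m(W_i,W_j,\beta,g)]
= \frac{2}{n}\sum_{i=1}^n \bigl(\mu_P(W_i,\beta,g)-E_P[m]\bigr) + R_n(\beta,g),
\end{align*}
where $\mu_P(w,\beta,g)\equiv E_P[m(W_i,W_j,\beta,g)\mid W_i=w]$ and $R_n$ is the canonical (degenerate) remainder. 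For part (a), the process $w\mapsto \mu_P(w,\beta,g)-E_P[m]$ is uniformly bounded by $1$ over $g\in\mathcal G$ (since $|m|\le 3/2$), so the class $\{\mu_P(\cdot,\beta,g):g\in\mathcal G\}$ is a uniformly bounded class of functions indexed by the VC-type class $\mathcal G$ of indicators of rectangles; hence it is a uniform Donsker class with an envelope and uniform entropy bounds that do not depend on $(\beta,P)$. Applying a triangular-array functional CLT (e.g. the version in \citeauthor{Pollard_1990}, \citeyear{Pollard_1990}, as used in AS), the linear term converges weakly to a mean-zero Gaussian process; the convergence of its covariance function to $h_2(\cdot,\cdot)$ is exactly the hypothesis of the lemma, after noting $h_{2,P}(\beta,g,g^\ast)/\sigma_P^2(\beta)$ equals the covariance of the normalized projection terms. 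The degenerate remainder $\sqrt n R_n(\beta,g)$ is $O_p(n^{-1/2})$ uniformly in $g$ by the usual variance bound for canonical U-statistics (its variance is $O(n^{-2})$ times a bounded quantity), so it does not affect the weak limit; this requires a maximal inequality over $\mathcal G$, which again follows from the VC-type entropy of $\mathcal G$.

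For part (b), I would show that $\hat h_{2,n,P}(\beta,g,g^\ast)$ is a consistent estimator of $h_{2,P}(\beta,g,g^\ast)/\sigma_P^2(\beta)$ uniformly over $(g,g^\ast)\in\mathcal G\times\mathcal G$, and then combine with the uniform convergence $h_{2,P_{a_n}}(\beta_{a_n},\cdot,\cdot)\to h_2(\cdot,\cdot)$ assumed in the lemma. The estimator $\hat h_{2,n,P}$ is built from a third-order U-statistic $\tfrac{1}{n(n-1)(n-2)}\sum_{i\ne j\ne k} m(W_i,W_j,\beta,g)m(W_i,W_k,\beta,g^\ast)$ minus a product of two second-order U-statistics. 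Each of these is a U-statistic with uniformly bounded kernel indexed by the product class $\mathcal G\times\mathcal G$, which is again VC-type with a uniform envelope; a uniform law of large numbers for U-statistics (a U-process Glivenko–Cantelli result) gives convergence in probability to the corresponding population moment, uniformly in $(g,g^\ast)$ and uniformly over $(\beta,P)\in\mathcal Q$. Subtracting the product term and recognizing $E_P[E_P[m\cdot m\mid W_i]] - E_P[m]E_P[m] = h_{2,P}$ yields the claim. One technical point is that $\hat\sigma_n^2(\beta)=\hat h_{2,n,P}(\beta,1,1)\cdot\sigma_P^2(\beta)$ appears in the denominator of the normalized version; I would need $\sigma_P^2(\beta)$ bounded away from zero uniformly over the relevant $(\beta,P)$, which holds because $m(W_i,W_j,\beta)$ can be chosen non-degenerate under the maintained conditions (this is implicit in the compactness restriction $h_{2,P}\in\mathcal H_{2,cpt}$ entering Theorem \ref{thm:asymptotic property}, and can be stated as part of $\mathcal Q$).

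The main obstacle I anticipate is establishing the weak convergence and the uniform LLN \emph{uniformly over the subsequence} $(\beta_{a_n},P_{a_n})$, i.e. handling the triangular-array nature of the problem rather than a fixed $P$. For ordinary empirical processes AS handle this via uniform manageability/uniform entropy of $\mathcal G$; for U-processes the analogous tool is a uniform-in-$P$ maximal inequality for (degenerate and non-degenerate) U-processes indexed by a VC-type class — this is available in the literature (decoupling plus chaining, e.g. arguments of the type in de la Peña–Giné), but it must be invoked carefully so that all entropy/envelope constants are $P$-free, which they are here precisely because $\mathcal G$ is a fixed class of rectangle-indicators and the kernels are bounded by an absolute constant. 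Once that uniform U-process machinery is in place, parts (a) and (b) follow by the Hoeffding decomposition as sketched; the remaining steps — matching covariance kernels, negligibility of the degenerate term, continuous mapping — are routine. A secondary, more mechanical obstacle is bookkeeping the normalization by $\sigma_P(\beta)$ and $\sigma_P^2(\beta)$ consistently between the definitions of $v_{n,P}$, $\hat h_{2,n,P}$, and the statement's $h_{2,P}$, so that the limit is exactly the stated $v_{h_2}$ with kernel $h_2$.
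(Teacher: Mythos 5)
Your proposal is correct in substance, but for part (a) it takes a different route from the paper. The paper does not perform the Hoeffding decomposition by hand: it verifies that the centered kernel class $\mathcal{F}_{1}$ (and the product class $\mathcal{F}_{2}$) are Euclidean with constant envelopes (Lemma \ref{lem:lemma_3}, via the VC property of the rectangle-indicator class $\mathcal{G}$ and \cite{Nolan_Pollard_1987}), checks the resulting $P$-free uniform entropy-integral conditions, and then invokes the functional CLT for U-processes, Theorem 5 of \cite{Nolan_Pollard_1988}, directly, finishing with the extended continuous mapping theorem to absorb $\beta_{a_n}\rightarrow\beta$ and $\sigma_{P_{a_n}}(\beta_{a_n})\rightarrow_{p}\sigma_{P}(\beta)$. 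Your plan -- projection term handled by a triangular-array Donsker theorem for the class $\{\mu_{P}(\cdot,\beta,g):g\in\mathcal{G}\}$ plus a separate maximal inequality showing the degenerate remainder is uniformly negligible -- is essentially an unpacking of what the Nolan--Pollard theorem packages; it is more self-contained and makes the negligibility of the canonical part explicit, at the cost of having to import decoupling/chaining bounds for degenerate U-processes and of the covariance bookkeeping you flag (the projection-term covariance carries a factor relative to $h_{2,P}$ as defined in (\ref{eq:covariance kernel}), so "matching kernels" is exactly where care is needed). For part (b) your route coincides with the paper's: the paper applies the uniform LLN for U-processes over Euclidean classes, Corollary 7 of \cite{Sherman_1994}, to the order-two and order-three U-statistics separately and then divides by $\sigma_{P}^{2}(\beta)$, which is the same ULLN argument you describe. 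Your concern about uniformity along the drifting sequence $(\beta_{a_n},P_{a_n})$ is well taken and is resolved exactly as you say -- the entropy and envelope constants are $P$-free because $\mathcal{G}$ is a fixed class of rectangle indicators and the kernels are bounded by absolute constants -- which is also how the paper's appeal to Euclidean classes with constant envelopes operates.
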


\bigskip

\begin{lemma} \label{lem:lemma_2}
For any $\beta\in B$ and any distribution function $P$ that satisfies Assumptions \ref{asm:monotonicity}--\ref{asm:independence}, let 
\begin{align*}
{\cal X}_{P}\left(\beta\right) = \left\{ (x_{i},x_{j})\in{\cal X}^{2}:E_{P}\left[m\left(W_{i},W_{j},\beta\right)\mid x_{i},x_{j}\right]<0\right\} .
\end{align*}
Then, for any $\beta\in B$ and $P$ that satisfies Assumptions \ref{asm:monotonicity}--\ref{asm:independence} and $P\left((x_{i},x_{j})\in{\cal X}_{P}\left(\beta\right)\right)>0,$
there exists some $g\in{\cal G}$ such that
\begin{align*}
E_{P}\left[m(W_{i},W_{j},\beta,g)\right]<0.
\end{align*}
\end{lemma}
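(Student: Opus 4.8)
The plan is to follow the argument of \cite{Andrews_Shi_2013} that a countable class of cube-indicator instruments loses no information relative to the full set of conditional moment inequalities, adapted to the present setting in which the conditioning variable is the pair $(X_i,X_j)\in\MX^2$ rather than a single covariate vector, so that the relevant instruments are indicators of products of two small hypercubes (one for $x_i$, one for $x_j$) intersected with two fixed points of $\MX_2$. Throughout, fix a version of $\bar m_\beta(x_i,x_j)\equiv E_P[m(W_i,W_j,\beta)\mid X_i=x_i,X_j=x_j]$, which is bounded because $m$ takes values in a bounded set, and note that for every $g\in\MG$ we have $E_P[m(W_i,W_j,\beta,g)]=E_P[\bar m_\beta(X_i,X_j)\,g(X_i,X_j)]$, since $g(X_i,X_j)$ is $\sigma(X_i,X_j)$-measurable.

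First I would localize the violation. By hypothesis $P((X_i,X_j)\in\mathcal{X}_P(\beta))=P(\bar m_\beta(X_i,X_j)<0)>0$, and since $\{\bar m_\beta<0\}=\bigcup_{k\ge1}\{\bar m_\beta\le-1/k\}$ there is a constant $\delta>0$ with $P(\bar m_\beta(X_i,X_j)\le-\delta)>0$. Writing $X=(X^{(1)},X^{(2)})$ with $X^{(1)}\in[0,1]^p$ the transformed first $p$ coordinates and $X^{(2)}\in\MX_2$ the finitely supported remainder, finiteness of $\MX_2$ yields a pair $(b,\tilde b)\in\MX_2^2$ such that $A\equiv\{\bar m_\beta\le-\delta\}\cap\{X_i^{(2)}=b,\ X_j^{(2)}=\tilde b\}$ has $P(A)>0$. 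Let $\nu$ be the finite Borel measure on $[0,1]^{2p}=[0,1]^p\times[0,1]^p$ given by $\nu(S)\equiv P((X_i^{(1)},X_j^{(1)})\in S,\ X_i^{(2)}=b,\ X_j^{(2)}=\tilde b)$, and let $f(s_1,s_2)\equiv\bar m_\beta((s_1,b),(s_2,\tilde b))$, which is bounded and Borel. On $\{X_i^{(2)}=b,X_j^{(2)}=\tilde b\}$ one has $\bar m_\beta(X_i,X_j)=f(X_i^{(1)},X_j^{(1)})$ identically, so $\nu(\{f\le-\delta\})=P(A)>0$; moreover, unfolding the definition of $\MJ$, for $g=g_{(a,b),(\tilde a,\tilde b),r}$ we get $E_P[m(W_i,W_j,\beta,g)]=\int_{Q_{a,\tilde a,r}}f\,d\nu$, where $Q_{a,\tilde a,r}\subset[0,1]^{2p}$ is the projection of $J_{(a,b),(\tilde a,\tilde b),r}$ onto the continuous coordinates, i.e. the product over the $2p$ coordinates of the intervals $((a_u-1)/2r,a_u/2r]$ and $((\tilde a_u-1)/2r,\tilde a_u/2r]$. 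Thus it suffices to exhibit a single cell $Q_{a,\tilde a,r}$ with $\int_{Q_{a,\tilde a,r}}f\,d\nu<0$.

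For this I would restrict attention to the dyadic scales $r=2^j$, $j\ge0$, and argue by contradiction. For each $j$ the cells $\{Q_{a,\tilde a,2^j}\}$ partition $(0,1]^{2p}$, which carries all of the mass of $\nu$ (after the transformation, as in \cite{Andrews_Shi_2013}, $X^{(1)}$ lies in the interior), into dyadic cubes of side $2^{-(j+1)}$, and these partitions are nested in $j$; let $\mathcal{F}_j$ be the $\sigma$-field they generate, so $\bigvee_j\mathcal{F}_j$ is the Borel $\sigma$-field of $(0,1]^{2p}$. Suppose, toward a contradiction, that $\int_{Q_{a,\tilde a,r}}f\,d\nu\ge0$ for every $a,\tilde a$ and every $r\ge1$, hence in particular for all $r=2^j$. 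Let $M_j$ equal $\nu(Q)^{-1}\int_Q f\,d\nu$ on each level-$j$ dyadic cell $Q$ with $\nu(Q)>0$ and $M_j\equiv0$ on the $\nu$-null cells; then $(M_j)_{j\ge0}$ is a version of the martingale $E_\nu[f\mid\mathcal{F}_j]$ and, by the assumption, $M_j\ge0$ $\nu$-everywhere. Since $f$ is bounded and $\nu$ finite, $f\in L^1(\nu)$, so the martingale convergence theorem gives $M_j\to E_\nu[f\mid\bigvee_j\mathcal{F}_j]=f$ both $\nu$-a.e. and in $L^1(\nu)$; hence $f\ge0$ $\nu$-a.e., contradicting $\nu(\{f\le-\delta\})>0$ with $\delta>0$. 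Therefore some dyadic cell $Q_{a,\tilde a,r}$ (with $r=2^j$, which is admissible since $r$ ranges over all positive integers in $\MJ$) satisfies $\int_{Q_{a,\tilde a,r}}f\,d\nu<0$, and then $g=g_{(a,b),(\tilde a,\tilde b),r}\in\MG$ satisfies $E_P[m(W_i,W_j,\beta,g)]<0$, which is the claim.

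The step I expect to be the main obstacle is the passage from ``all cube integrals are nonnegative'' to ``$f\ge0$ a.e.'': the conditional law of $(X_i^{(1)},X_j^{(1)})$ given the discrete coordinates need not be absolutely continuous with respect to Lebesgue measure, so one cannot simply invoke the classical Lebesgue differentiation theorem over arbitrary shrinking cubes. Restricting to the dyadic scales $r=2^j$ repairs this, because it turns the family of cells into a nested filtration and lets the martingale convergence theorem, which holds for the arbitrary finite Borel measure $\nu$, play the role of the differentiation theorem. The remaining points are routine: $m$ is bounded, so $f\in L^1(\nu)$; the dyadic cubes generate the Borel $\sigma$-field of $(0,1]^{2p}$; Assumptions \ref{asm:monotonicity}--\ref{asm:independence} enter only in ensuring that the conditional expectations above are well defined and consistent with the characterization of $B_I$ used elsewhere; and the identity $E_P[m(W_i,W_j,\beta,g_{(a,b),(\tilde a,\tilde b),r})]=\int_{Q_{a,\tilde a,r}}f\,d\nu$ is immediate from the definitions of $\MJ$, $\nu$, and $f$.
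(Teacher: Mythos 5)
Your proof is correct, but it takes a genuinely different route from the paper. The paper proves the contrapositive (nonnegativity of $E_{P}[m(W_i,W_j,\beta,g)]$ for all $g\in\MG$ implies nonnegativity of the conditional moment a.e.) by fixing $(b,\tilde b)\in\MX_2^2$, defining the set function $\mu(J_1)=E_{P}\bigl[m(W_i,W_j,\beta)\,I[(X_i^{(1)},X_j^{(1)})\in J_1]\mid (X_i^{(2)},X_j^{(2)})=(b,\tilde b)\bigr]$ on the semiring $\MJ_1$ of product cells, and invoking Lemma C1 of Andrews and Shi (2013) to extend nonnegativity from the semiring to the generated Borel $\sigma$-field. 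You instead localize the violation ($P(\bar m_\beta\le-\delta)>0$ for some $\delta>0$, then a pair $(b,\tilde b)$ with positive mass), reduce to the measure $\nu$ and section $f$ on $[0,1]^{2p}$, and replace the semiring-extension step by a dyadic martingale argument: along $r=2^j$ the cells form nested partitions, the cell averages are a bounded martingale converging $\nu$-a.e.\ to $f$, so nonnegativity of all cell integrals would force $f\ge0$ $\nu$-a.e., a contradiction. The paper's route buys economy (it reuses AS's lemma, works for the full non-dyadic family at once, and matches the other conditions already verified in AS); your route is self-contained and more elementary, shows that a violating instrument can already be found within the dyadic subfamily, and correctly sidesteps the absolute-continuity issue that would block a naive Lebesgue-differentiation argument. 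Both arguments share the same boundary technicality — the half-open cells miss mass on faces where a transformed coordinate equals $0$ — which you explicitly dispose of via the AS transformation into the interior, whereas the paper glosses over it in asserting $\sigma(\MJ_1)=\mathcal{B}(\Real^{2p})$; so this is not a gap relative to the paper's own treatment.
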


\bigskip

The next lemma with a proof is auxiliary to Lemma \ref{lem:lemma_1}.
\bigskip

\begin{lemma} \label{lem:lemma_3}
Let $(\beta,P) \in \mathcal{{Q}}$. Define classes of functions ${\cal F}_{1}=\left\{ f_{1}(\cdot,\cdot,\beta,g): g\in{\cal G}\right\}$ and ${\cal F}_{2}=\left\{ f_{2}(\cdot,\cdot,\cdot,\beta,g,g^{\ast}):\left(g,g^{\ast}\right)\in{\cal G}\times{\cal G}\right\}$, where
\begin{align*}
f_{1}(w_{i},w_{j},\beta,g)=m(w_{i},w_{j},\beta,g)-E_{P}\left[m(W_{i},W_{j},\beta,g)\right]
\end{align*}
and 
\begin{align*}
f_{2}(w_{i},w_{j},w_{k},\beta,g,g^{\ast}) =&\  m(w_{i},w_{j},\beta,g)\cdot m(w_{i},w_{k},\beta,g^{\ast})\\
& -E_{P}\left[m(W_{i},W_{j},\beta,g)\right]\cdot E_{P}\left[m\left(W_{i},W_{j},\beta,g^{\ast}\right)\right]. 
\end{align*}
Then $\mathcal{F}_{1}$ and $\mathcal{F}_{2}$ are Euclidean classes of functions for constant envelopes $1$ and $1/2$, respectively, in the sense of \cite{Nolan_Pollard_1987}.
\end{lemma}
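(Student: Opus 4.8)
The plan is to show that the two classes $\mathcal{F}_1$ and $\mathcal{F}_2$ have polynomial covering numbers (uniformly in the probability measure), which is exactly what "Euclidean class" in the sense of \cite{Nolan_Pollard_1987} means. The key structural observation is that every function in these classes is built, by finitely many products and sums, from indicator functions of half-spaces and indicator functions of boxes; such indicator classes are VC classes, and VC classes are Euclidean for a constant envelope. So the first step is to identify the VC building blocks: the functions $w\mapsto I[Y_{1i}\ge Y_{0j}]$, $w\mapsto I[X_i'\beta\ge X_j'\beta]$ (for fixed $\beta$), and $g(x_i,x_j)=I[(x_i,x_j)\in J]$ for $J\in\mathcal{J}$. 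For fixed $\beta$ the first two are single indicators (not classes), so the only genuine class is $\mathcal{G}=\{I[(\cdot,\cdot)\in J]:J\in\mathcal{J}\}$, whose sets $J$ are finite intersections of coordinate half-open intervals and singletons; hence $\{J:J\in\mathcal{J}\}$ is a VC class of sets in $\Real^{2k}$, so $\mathcal{G}$ is a VC (hence Euclidean) class of functions with envelope $1$.

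\textbf{Step for $\mathcal{F}_1$.} Write $f_1(w_i,w_j,\beta,g)=m(w_i,w_j,\beta)\,g(x_i,x_j)-E_P[m(W_i,W_j,\beta,g)]$. Since $\beta$ is fixed, $m(w_i,w_j,\beta)$ is a single fixed bounded function (values in $[-1/2,3/2]$, and in fact $|m|\le 1$ after accounting for the $-1/2$ and the disjointness of the two indicator events), and multiplying a fixed bounded function by the Euclidean class $\mathcal{G}$ preserves the Euclidean property with a rescaled constant envelope. Subtracting the constant $E_P[m(W_i,W_j,\beta,g)]$ is a translation by a bounded constant, which again preserves the Euclidean property (at the cost of at most doubling the envelope constant; one checks $|f_1|\le 1$ directly because $|m\cdot g|\le 1$ and the centering constant lies in $[-1,1]$). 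I would cite the standard stability lemmas for Euclidean classes — products with a fixed bounded function, and addition of a uniformly bounded function — from \cite{Nolan_Pollard_1987} (or \cite{Pakes_Pollard_1989}) to conclude that $\mathcal{F}_1$ is Euclidean for the constant envelope $1$.

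\textbf{Step for $\mathcal{F}_2$.} Here $f_2(w_i,w_j,w_k,\beta,g,g^\ast)=m(w_i,w_j,\beta)g(x_i,x_j)\cdot m(w_i,w_k,\beta)g^\ast(x_i,x_k)-E_P[m(W_i,W_j,\beta,g)]E_P[m(W_i,W_j,\beta,g^\ast)]$. This is a product of two functions, one from the class $\{m(w_i,w_j,\beta)g(x_i,x_j):g\in\mathcal{G}\}$ acting on the $(i,j)$-coordinates and one from the analogous class acting on the $(i,k)$-coordinates; both are Euclidean (by the $\mathcal{F}_1$ argument) with envelope $1$, and the product of two uniformly bounded Euclidean classes is Euclidean, here with envelope $1$. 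Subtracting the product of the two centering constants — each in $[-1,1]$, so the product is in $[-1,1]$ — again only translates by a bounded constant, and one verifies $|f_2|\le 2$, but a sharper bound $|f_2|\le 1/2$ can be read off by noting that $m(w_i,w_j,\beta)g\in\{-1/2,0,1/2\}$ pointwise (the two indicator events in $m$ are mutually exclusive, so $m\in\{-1/2,1/2\}$, and multiplying by $g\in\{0,1\}$ gives $\{-1/2,0,1/2\}$), hence the product term lies in $\{-1/4,0,1/4\}$ and the centering term lies in $[-1/4,1/4]$, so $|f_2|\le 1/2$. Thus $\mathcal{F}_2$ is Euclidean for the constant envelope $1/2$.

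\textbf{Main obstacle.} The only nonroutine point is establishing that $\mathcal{J}$ (equivalently $\mathcal{G}$) is a VC class with a VC index that does not blow up. The subtlety is that $\mathcal{J}$ is an \emph{infinite} union over the resolution parameter $r$ of finite grids, together with the finite set $\MX_2$ of discrete values; one must check that the collection of all such boxes-times-singletons, across all $r$, still shatters no set beyond a fixed finite size. This follows because each $J_{(a,b),(\tilde a,\tilde b),r}$ is an intersection of at most $2p$ coordinate intervals of the form $(c,d]$ plus $2(k-p)$ coordinate-equality constraints, and the class of all half-open intervals in a single coordinate is VC of index $2$; a bounded number of Boolean operations among bounded-index VC classes yields a VC class (with index controlled by Sauer's lemma / the standard combinatorial bounds, e.g.\ \citeauthor{van_der_Vaart_1998} \citeyearpar{van_der_Vaart_1998}, Ch.\ 2.6, or \cite{Pakes_Pollard_1989}). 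Since VC classes are Euclidean for any constant envelope, this gives the claim for $\mathcal{G}$, and the product/sum stability lemmas propagate it to $\mathcal{F}_1$ and $\mathcal{F}_2$. I would present the VC argument for $\mathcal{G}$ carefully and treat the stability steps by citation.
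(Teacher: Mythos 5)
Your proposal is correct and follows essentially the same route as the paper: show that $\mathcal{G}$ is a VC class because its sets are cells (boxes times singletons) in $\Real^{2k}$, propagate the VC/Euclidean property to $\mathcal{F}_1$ and $\mathcal{F}_2$ via the standard stability results for products with bounded functions and translation by constants (the paper cites \citeauthor{van_der_vaart_Wellner_1996}, Example 2.6.1 and Lemma 2.6.18, and Corollary 19 of \citeauthor{Nolan_Pollard_1987}), and read off the constant envelopes. Your envelope verification is in fact more explicit than the paper's (which asserts the envelopes $1$ and $1/2$ as obvious); just note that the bound $|f_1|\le 1$ requires the sharp fact $m\in\{-1/2,1/2\}$ that you state in the $\mathcal{F}_2$ step, not the looser $|m\cdot g|\le 1$ used in your $\mathcal{F}_1$ paragraph.
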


\begin{proof}[Proof of Lemma \ref{lem:lemma_3}]
We consider the class of function $\mathcal{G}$ defined in Section \ref{sec:test statistics}. 
For any $x \in \Real^{k}$, let $x=(x^{(1) \prime},x^{(2) \prime})$ where $x^{(1)}$ is a vector of the first $p$ elements of $x$ and $x^{(2)}$ is a vector of the remaining $k-p$ elements of $x$. 
$\MG$ is represented as
\begin{align*}
\mathcal{G}  = & \left\{I\left[\frac{a-\mathbf{1}_{p}}{2r}<x_{i}^{(1)}\leq \frac{a}{2r} , x_{i}^{(2)}=b\right]\cdot
I\left[\frac{\tilde{a}-\mathbf{1}_{p}}{2r}<x_{j}^{(1)}\leq \frac{\tilde{a}}{2r}, x_{j}^{(2)}=\tilde{b}\right]:\right.\\
 & a = (a_{1},\ldots,a_{p})^{\prime},\ \tilde{a} =\left(\tilde{a}_{1},\ldots,\tilde{a}_{p}\right)^{\prime},\ \left(a_{u},\tilde{a}_{u}\right)\in\left\{ 1,2,\ldots,2r\right\} ^{2}\\
& \left. \mbox{for }u=1,\ldots,p\ \mbox{and }r=1,2,\ldots , \mbox{ and }
(b, \tilde{b}) \in \MX_{2}^{2}
\right\},
\end{align*}
where $\mathbf{1}_{p}$ is a $p$-dimensional vector of ones.
Because the collection of cells in $\Real^{2k}$ is a Vapnik-Chervonenkis (VC) class of sets (see \citeauthor{van_der_vaart_Wellner_1996} (\citeyear{van_der_vaart_Wellner_1996}, Example 2.6.1)), the collection of all subgraphs, $\left\{ \left(x_{i},x_{j},t\right):t<g(x_{i},x_{j})\right\} $, of the function in $\mathcal{G}$ forms a VC class of sets in $\mathcal{X}^{2}\times\mathbb{R}$. Hence, $\mathcal{G}$ is a VC class of functions. Combining this result with Lemma 2.6.18 in \cite{van_der_vaart_Wellner_1996}, $\mathcal{F}_{1}$ and $\mathcal{F}_{2}$ are VC-classes of functions. Thus, from Corollary 19 in \cite{Nolan_Pollard_1987}, $\mathcal{F}_{1}$ and $\mathcal{F}_{2}$ are Euclidean classes of functions. $\mathcal{F}_{1}$ and $\mathcal{F}_{2}$ obviously have the constant envelopes $1$ and $1/2$, respectively, from their definitions.
\end{proof}

\bigskip

We provide proofs of Lemmas \ref{lem:lemma_1} and \ref{lem:lemma_2} and Theorem \ref{thm:asymptotic property} below. 
\bigskip

\begin{proof}[Proof of Lemma \ref{lem:lemma_1}.(a)]
While Lemma \ref{lem:lemma_1} is stated in terms of a subsequence $\left\{ a_{n}\right\}$, for notational simplicity, we prove it for the sequence $\left\{ n\right\}$. All of the arguments in this and the next proofs proceed with $\left\{ a_{n}\right\} $ instead of $\left\{ n\right\}$.
\par
We use Theorem 5 in \cite{Nolan_Pollard_1988} to show that the weak convergence result in Lemma \ref{lem:lemma_1}.(a) holds. Let $(\beta,P)$ be the limit of $(\beta_{n},P_{n})$ and $h_{2}(\cdot,\cdot)=h_{2,P}(\beta,\cdot,\cdot)$. Let $N_{p}\left(\epsilon,Q,{\cal F},F\right)$ denote the $L_{p}\left(Q\right)$-covering number of radius $\epsilon$ for a functional space ${\cal F}$ with a envelope function $F$ where $Q$ is some probability measure on $\MX$. For any distribution $Q$ on $\MX$, let $Q{\cal F}_{1} \equiv \{Qf_{1}(x,\cdot,\beta,g): g \in \MG\}$ be the class of functions $Qf_{1}\left(x,\cdot,\beta,g\right)$ on ${\cal X}$, where $f_{1}$ and $\mathcal{F}_{1}$ are defined in Lemma \ref{lem:lemma_3} and $Qf_{1}(x,\cdot,\beta,g)\equiv \int_{\MX} f_{1}(x,x_2,\beta,g)dQ_{X}(x_2)$. 
\par
To apply Theorem 5 in \cite{Nolan_Pollard_1988}, it suffices to show that the following conditions hold:
\begin{description}
\item [{(i)}] $\sup_{Q}\intop_{0}^{1}\log N_{2}\left(\epsilon,Q\times Q,{\cal F}_{1},F\right)d\epsilon<\infty$, $\sup_{Q}\left(\intop_{0}^{1}\log N_{2}\left(\epsilon,Q\times Q,\MF_{1},F\right)d\epsilon\right)^{2}<\infty$, and
$\sup_{Q}\left(\intop_{0}^{1}\log N_{2}\left(\epsilon,Q,P\MF_{1},PF\right)d\epsilon\right)^{2}<\infty$;
\item [{(ii)}] as $\gamma\searrow 0$, 
\begin{align*}
\sup_{Q}\intop_{0}^{\gamma}\log N_{2}\left(\epsilon,Q,P{\cal F}_{1},PF\right) \rightarrow 0,
\end{align*}
\end{description}
where $Q$ is any probability measure on $\MX$. We show below that these two conditions are satisfied. 
\par
We first consider condition (i). From Lemma \ref{lem:lemma_3}, the class of functions $\mathcal{F}_{1}$ is Euclidean with the constant envelope $F=1$. Then, from Corollary 21 in \cite{Nolan_Pollard_1987}, the class of functions $P{\cal F}_{1}$ is also a Euclidean class with the constant envelope $1$. From page 789 in \cite{Nolan_Pollard_1987}, if a Euclidean class has a constant envelope function, then the upper bound on the $L_{p}(Q)$-covering number of radius $\epsilon$ for it is uniform in any probability measure $Q$ on $\MX$. Therefore, since $\MF_{1}$ and $P{\cal F}_{1}$ are Euclidean classes with constant envelopes, for any $0<\epsilon\leq1$, there exist some constants
$K_{2}$, $K_{2}^{\ast}$, $V_{2}$, and $V_{2}^{\ast}$ such that $N_{2}\left(\epsilon,Q\times Q,{\cal F}_{1},F\right) \leq K_{2}\epsilon^{-2V_{2}}$
and $N_{2}\left(\epsilon,Q,P{\cal F}_{1},PF\right) \leq K_{2}^{\ast}\epsilon^{-2V_{2}^{\ast}}$, for any probability measure $Q$ on $\MX$. Then it follows that
\begin{align*}
\sup_{Q}\intop_{0}^{1}\log N_{2}\left(\epsilon,Q\times Q,{\cal F}_{1},F\right)d\epsilon&\leq\intop_{0}^{1}\left(\log K_{2}\epsilon^{-2V_{2}}\right)d\epsilon<\infty,\\
\sup_{Q}\left(\intop_{0}^{1}\log N_{2}\left(\epsilon,Q\times Q,{\cal F}_{1},F\right)d\epsilon\right)^{2}&\leq\left(\intop_{0}^{1}\left(\log K_{2}\epsilon^{-2V_{2}}\right)d\epsilon\right)^{2}<\infty,
\end{align*}
and
\begin{align*}
\sup_{Q}\left(\intop_{0}^{1}\log N_{2}\left(\epsilon,Q,P{\cal F}_{1},PF\right)d\epsilon\right)^{2}\leq\left(\intop_{0}^{1}\left(\log K_{2}^{\ast}\epsilon^{-2V_{2}^{\ast}}\right)d\epsilon\right)^{2}<\infty.
\end{align*}
These imply that condition (i) is satisfied. 
\par
Next, as $\gamma\searrow0$,
\begin{align*}
\sup_{Q}\intop_{0}^{\gamma}\log N_{2}\left(\epsilon,Q,P{\cal F}_{1},PF\right)d\epsilon & \leq \intop_{0}^{\gamma}\log (K_{2}^{\ast}\epsilon^{-2V_{2}^{\ast}})d\epsilon\\
 & = \gamma\log K_{2}^{\ast}-2V_{2}^{\ast}\gamma(\log\gamma - 1)\\
 & \rightarrow 0.
\end{align*}
This implies that condition (ii) is satisfied. Therefore, from Theorem
5 in \cite{Nolan_Pollard_1988} and the fact that $\beta_n \rightarrow \beta$ and $\sigma_{P_{n}}\left(\beta_{n}\right)\underset{p}{\rightarrow}\sigma_{P}\left(\beta\right)$, 
Lemma \ref{lem:lemma_1}.(a) holds by the extended continuous mapping theorem (Theorem 1.11.1 in \cite{van_der_vaart_Wellner_1996}). 
\end{proof}

\bigskip

\begin{proof}[Proof of Lemma \ref{lem:lemma_1}.(b)]
Let $(\beta,P)$ be the limit of $(\beta_{n},P_{n})$. Since $\mathcal{F}_{1}$ and $\mathcal{F}_{2}$ are Euclidean classes with constant envelopes from Lemma \ref{lem:lemma_3}, by applying Corollary 7 in \cite{Sherman_1994}, it follows that
\begin{eqnarray*}
\sup_{g \in {\cal G}} \left\Vert \frac{1}{n\left(n-1\right)}\sum_{i\neq j}f_{1}(W_{i},W_{j},\beta,g)\right\Vert  \underset{p}{\rightarrow}0
\end{eqnarray*}
and
\begin{eqnarray*}
\sup_{(g,g^{*})\in \MG\times \MG} \left\Vert \frac{1}{n\left(n-1\right)\left(n-2\right)}\sum_{i\neq j\neq k}f_{2}(W_{i},W_{j},W_{k},\beta,g,g^{\ast})\right\Vert  \underset{p}{\rightarrow}0.
\end{eqnarray*}
Therefore, letting $h_{2}\left(\beta,g,g^{\ast}\right)=h_{2,P}\left(\beta,g,g^{\ast}\right)$ be given by (\ref{eq:covariance kernel}) and further dividing by $\sigma_{P}^{2}\left(\beta\right)$, as $\beta_{n}\rightarrow \beta$ and $\sigma_{P_{n}}\left(\beta_{n}\right)\underset{p}{\rightarrow}\sigma_{P}\left(\beta\right)$,
we have 
\begin{align*}
 &\ \underset{(g,g^{\ast})\in \MG\times\MG}{\sup}\left\Vert \hat{h}_{2,a_{n},P_{n}}\left(\beta_{n},g,g^{\ast}\right)-h_{2}\left(\beta,g,g^{\ast}\right)\right\Vert \\
\leq &\ \sup_{(g,g^{\ast})\in \MG\times\MG}\left\Vert \frac{1}{n\left(n-1\right)\left(n-2\right)}\sum_{i\neq j\neq k}f_{2}(W_{i},W_{j},W_{k},\beta,g,g^{\ast})/\sigma_{P}\left(\beta\right)\right \Vert \\
 &\ +\left\{ \sup_{g \in{\cal G}}\left\Vert \frac{1}{n\left(n-1\right)\left(n-2\right)}\sum_{i\neq j}f_{1}(W_{i},W_{j},\beta,g)/\sigma_{P}\left(\beta\right)\right \Vert \right\} ^{2}+o_{p}\left(1\right)\\
\underset{p}{\rightarrow} & \ 0.
\end{align*}
\end{proof}

\bigskip

\begin{proof}[Proof of Lemma \ref{lem:lemma_2}]
It suffices to show that
\begin{align}
 & E_{P}\left[m(W_{i},W_{j},\beta,g)\right]\geq0\ \ \forall g\in{\cal G} \notag \\
\Rightarrow\ &E_{P}\left[m\left(W_{i},W_{j},\beta\right)\mid x_{i},x_{j}\right]\geq0\ \ \mbox{for almost every $(x_i,x_j) \in \MX^2$.} \label{eq:giv inequality}
\end{align}
Let $X=(X^{(1)\prime},X^{(2)\prime})^{\prime}$ where $X^{(1)}$ and $X^{(2)}$ are distributed on $\MX_{1}$ and $\MX_{2}$, respectively. Let $(b,\tilde{b}) \in \MX_{2}^{2}$ be fixed, and define a class of instrumental functions
\begin{align*}
\widetilde{\MG}  =  \left\{ g\left(x_{i}^{(1)},x_{j}^{(1)}\right)=I\left[\left(x_{i}^{(1)},x_{j}^{(1)}\right) \in J_{1}\right]\ \mbox{for}\ J_{1}\in\MJ_{1} \right\} ,
\end{align*}
where
\begin{align*}
\mathcal{J}_{1}  \equiv & \left\{ J_{(a,\tilde{a},r)}=\left(\vartimes_{u=1}^{p}\left(\frac{a_{u}-1}{2r},\frac{a_u}{2r} \right]\right) \times \left(\vartimes_{u=1}^{p}\left(\frac{\tilde{a}_{u}-1}{2r},\frac{\tilde{a}_{u}}{2r} \right]\right):\right.\\
   &\ \left. a=(a_{1},\ldots,a_{p}),\ \tilde{a}=(\tilde{a}_{1},\ldots,\tilde{a}_{p}), \ a_u\in\left\{ 1,2,\ldots,2r\right\}, \ \tilde{a}_u\in\left\{ 1,2,\ldots,2r\right\} \right. \\ 
   &\ \left.\mbox{for}\ u=1,\ldots,p\ \mbox{and}\ r=1,2,\ldots \right\} .
\end{align*}
Then, to prove (\ref{eq:giv inequality}), it suffices to show that 
\begin{align}
 & E_{P}\left[m(W_{i},W_{j},\beta,g) \mid (X_{i}^{(2)},X_{j}^{(2)})=(b,\tilde{b}) \right]\geq0\ \ \forall g\in \widetilde{\MG} \notag \\
\Rightarrow\ &E_{P}\left[m\left(W_{i},W_{j},\beta\right)\mid (X_{i}^{(1)},X_{j}^{(1)})=(x_{i}^{(1)},x_{j}^{(1)}), (X_{i}^{(2)},X_{j}^{(2)})=(b,\tilde{b})  \right]\geq0 \notag \\
& \mbox{ for almost every $(x_{i}^{(1)},x_{j}^{(1)}) \in \MX_{1}^{2}$}. 
\label{eq:giv inequality_2}
\end{align}

We invoke Lemma C1 in AS. Let ${\cal R}$ be a semiring of subsets of $\mathbb{R}^{2p}$ and
\begin{align*}
\mu(J_1)  \equiv  E_{P}\left[m\left(W_{i},W_{j},\beta\right)\cdot I\left[\left(X_{i}^{(1)},X_{j}^{(1)}\right)\in J_{1}\right] \mid (X_{i}^{(2)},X_{j}^{(2)})=(b,\tilde{b})\right]
\end{align*}
for $J_{1}\in\sigma\left(\mathcal{J}_{1} \right)={\cal B}\left(\mathbb{R}^{2p}\right)$, where $\sigma\left(\mathcal{J}_{1}\right)$ denotes the $\sigma$-field generated by $\mathcal{J}_{1}$ and ${\cal B}\left(\mathbb{R}^{2p}\right)$ is the Borel $\sigma$-field on $\mathbb{R}^{2p}$. $\sigma\left(\mathcal{J}_{1}\right)={\cal B}\left(\mathbb{R}^{2p}\right)$ is a well known result. We show that all conditions of Lemma C1 in AS are satisfied. Then condition (\ref{eq:giv inequality_2}) holds from Lemma C1 in AS. 
\par
First, $\mathcal{J}_{1} $ is a semiring of subsets of $\mathbb{R}^{2p}$. Since $m\left(W_{i},W_{j},\beta\right)$ and $I\left[\left(X_{i}^{(1)},X_{j}^{(1)}\right)\in J_{1} \right]$ are bounded functions, $\mu(\cdot)$ satisfies the boundedness condition of Lemma C1 in AS. The other conditions of Lemma C1 in AS also hold by the same argument as the proof of Lemma 3 in AS. Thus, by applying Lemma C1 in AS, the fact that $\mu(J_{1})\geq0$ for all $J_{1} \in \mathcal{J}_{1}$ implies that $E_{P}\left[m\left(W_{i},W_{j},\beta\right)\cdot I\left[\left(X_{i}^{(1)},X_{j}^{(1)}\right)\in J_1 \mid (X_{i}^{(2)},X_{j}^{(2)})=(b,\tilde{b})\right]\right]\geq0$ for all $J_1 \in \sigma(\mathcal{J}_{1})$, which is equivalent to ${\cal B}\left(\mathbb{R}^{2p}\right)$. This implies that the result of Lemma \ref{lem:lemma_2} holds.
\end{proof}

\bigskip

Let $Q$ be a measure on $\MG$ such that, for $g_{(a,b),(\tilde{a},\tilde{b}),r} \in \MG$,
\begin{align*}
    Q(g_{(a,b),(\tilde{a},\tilde{b}),r}) = (r^{2}+100)^{-1} \cdot ((2r)^{p}\cdot |\MX_{2}|)^{-2}.
\end{align*}
Then $T_{n}(\beta)$ is expressed as 
\begin{align}
    T_{n}(\beta) = \int_{g \in \MG} S(n^{\frac{1}{2}}\bar{m}_{n}(\beta,g),\bar{\sigma}_{n}(\beta,g)) dQ(g),
    \label{eq:AR test statistics}
\end{align}
where $S(m,\sigma)\equiv [m/\sigma]_{-}^{2}$.

The following lemma shows that $Q$ satisfies Assumption Q in AS.

\begin{lemma}\label{lem:Assumption Q}
Suppose that Assumption \ref{asm:iid} holds. Define the pseud-metric $\rho_X$ on $\MG$ by
\begin{align*}
    \rho_{X}(g,g^{\ast}) \equiv (E_{P_X}[|g(X_i,X_j)-g^{\ast}(X_i,X_j)|])^{1/2} \mbox{ for }(g,g^{*}) \in \MG\times\MG,
\end{align*}
where $E_{P_X}[\cdot]$ denotes the expectation under $X \sim P_X$. Let $\MB_{\rho_X}(g,\delta)$ denote an open $\rho_{X}$-ball in $\MG$ centered at $g$ with radius $\delta$. 
Then the support of $Q$ under $\rho_{X}$ is $\MG$. That is, for all $\delta > 0$, $Q(\MB_{\rho_{X}}(g,\delta))>0$ for all $g \in \MG$.
\end{lemma}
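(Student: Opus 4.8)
The plan is to argue directly that $Q$, which is a discrete (counting-type) measure putting strictly positive mass on every single instrumental function $g_{(a,b),(\tilde a,\tilde b),r}\in\MG$, automatically has full support under $\rho_X$: once every atom of $\MG$ carries positive $Q$-mass, any $\rho_X$-ball $\MB_{\rho_X}(g,\delta)$ around a point $g\in\MG$ contains at least that point $g$ itself, and hence has $Q$-measure at least $Q(\{g\})>0$. So the crux is simply to verify that $Q(\{g\})>0$ for every $g\in\MG$.

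First I would recall the definition of $Q$: for $g=g_{(a,b),(\tilde a,\tilde b),r}$ we set $Q(g)=(r^2+100)^{-1}\cdot((2r)^p\cdot|\MX_2|)^{-2}$. Since $r$ is a positive integer and $|\MX_2|$ is finite (the discrete part of the covariate has finite support by construction in Section \ref{sec:test statistics}), both factors are finite and strictly positive, so $Q(\{g\})>0$ for each $g\in\MG$. Summing over all $(a,\tilde a)\in\{1,\dots,2r\}^{2p}$ and $(b,\tilde b)\in\MX_2^2$ for fixed $r$ gives exactly $(r^2+100)^{-1}$, and $\sum_{r=1}^\infty (r^2+100)^{-1}<\infty$, so $Q$ is a finite measure on $\MG$ (it can be normalized to a probability measure if desired); this is the finiteness half of Assumption Q and is immediate.

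Next, fix $g\in\MG$ and $\delta>0$. By the definition of the pseudometric, $\rho_X(g,g)=(E_{P_X}[|g(X_i,X_j)-g(X_i,X_j)|])^{1/2}=0<\delta$, so $g\in\MB_{\rho_X}(g,\delta)$. Therefore
\[
Q(\MB_{\rho_X}(g,\delta))\ \geq\ Q(\{g\})\ =\ (r^2+100)^{-1}\cdot\big((2r)^p\cdot|\MX_2|\big)^{-2}\ >\ 0,
\]
where $r$ is the index attached to $g$. Since $g\in\MG$ and $\delta>0$ were arbitrary, the support of $Q$ under $\rho_X$ is all of $\MG$, which is precisely the claim. (Assumption \ref{asm:iid} is invoked only to ensure that $\rho_X$ is a well-defined pseudometric, i.e.\ that $E_{P_X}[|g-g^\ast|]$ exists; since $|g-g^\ast|\leq 1$ this expectation is always finite, so this is not a real restriction.)

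I do not anticipate a genuine obstacle here: the statement is essentially a bookkeeping check that the weights defining $Q$ are strictly positive on each atom, together with the trivial observation that a ball always contains its center. The only place demanding a word of care is making sure $\MG$ is genuinely countable and atomic under $\rho_X$ — two distinct index tuples could in principle induce the same function $\rho_X$-a.e., but even then the mass only accumulates, never vanishes, so the full-support conclusion is unaffected.
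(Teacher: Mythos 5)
Your proof is correct, but it takes a different route from the paper's. The paper proves this lemma by reduction: it first restricts attention to the case $p=k$, constructs a one-to-one map $\Pi$ from $\MG$ to the index set $AR=\{(a,\tilde a,r)\}$, writes $Q=\Pi^{-1}Q_{AR}$ with $Q_{AR}((a,\tilde a,r))=(r^{2}+100)^{-1}(2r)^{-2k}$, and then invokes Lemma 4 of Andrews and Shi (2013), which establishes Assumption Q for precisely this kind of weight on the countable hypercube class. You instead verify the stated property directly: $\MG$ is countable, every atom $g_{(a,b),(\tilde a,\tilde b),r}$ carries the strictly positive weight $(r^{2}+100)^{-1}\left((2r)^{p}\cdot|\MX_{2}|\right)^{-2}$, and since $\rho_{X}(g,g)=0<\delta$ the center lies in $\MB_{\rho_{X}}(g,\delta)$, whence $Q(\MB_{\rho_{X}}(g,\delta))\geq Q(\{g\})>0$. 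That is a complete proof of the lemma as stated, and your side remarks are handled correctly: the summability computation showing $\sum_{r}(r^{2}+100)^{-1}<\infty$ (so $Q$ is finite, though not needed for the support claim), the observation that distinct index tuples inducing $\rho_{X}$-equivalent functions can only add mass, and the point that Assumption \ref{asm:iid} plays no substantive role beyond making $\rho_{X}$ well defined. The trade-off is that your argument is self-contained and more elementary, while the paper's one-line reduction keeps the proof aligned with the Andrews--Shi machinery it already imports elsewhere and disposes of the $p<k$ case by remark; both establish the same fact.
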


\begin{proof}
We prove the statement in the case of $p=k$. Then the result for the other case $p<k$ immediately follows. Let $AR \equiv \{(a,\tilde{a},r): (a,\tilde{a}) \in \{1,\ldots,2r \}^{2k} \mbox{ and } r=1,2,\ldots\}$. Then there is a one-to-one mapping $\Pi:\MG \rightarrow AR$. The weight function $Q$ can be expressed as $Q=\Pi^{-1}Q_{AR}$, where $Q_{AR}$ is a measure on $2^{AR}$ such that $Q((a,\tilde{a},r))= (r^{2}+100)^{-1} \cdot (2r)^{-2k}$ for any $(a,\tilde{a},r) \in AR$. Then the statement follows by applying Lemma 4 in AS to $Q=\Pi^{-1} Q_{AR}$.
\end{proof}

\bigskip

\begin{proof}[Proof of Theorem \ref{thm:asymptotic property}]
To show that Theorem \ref{thm:asymptotic property} holds, it suffices to show that all required conditions in Theorems 2(b) and 3 in AS are satisfied.
We first check that the functions $S$ and $Q$ in (\ref{eq:AR test statistics}) satisfy the required conditions.
Since the function $S$ corresponds to the modified method moments function in AS (Equation (3.8) in AS), Lemma 1 in AS guarantees that the function $S$ in (\ref{eq:AR test statistics}) satisfies Assumptions S1--4 in AS.
Lemma \ref{lem:Assumption Q} shows that the weight function $Q$ in (\ref{eq:AR test statistics}) satisfies Assumption Q in AS.
Lemma \ref{lem:lemma_1} implies that Assumption EP in AS holds. Then Lemma B3 in AS guarantees that Assumption GMS2(a) in AS holds.
Assumption \ref{asm:tuning parameters} implies that the tuning parameters in the GMS function satisfy Assumptions GMS1, GMS2(b), and GMS2(c) in AS. 
Therefore, Theorem \ref{thm:asymptotic property}(a) follows from Theorem 2(b) in AS.

For result (b), Lemma \ref{lem:lemma_2} in this appendix has the same role as Assumption CI in AS.
The assumption $P((X_i,X_j) \in \MX_{P}^{2}(\tilde{\beta}))>0$ in Theorem \ref{thm:asymptotic property}(b) corresponds to Assumption FA(a) in AS.
Note that Lemma \ref{lem:lemma_1} applies for $(\beta_{a_n},P_{a_n})=(\tilde{\beta},P)$ by the same argument as in the proof of Lemma \ref{lem:lemma_1}.
Therefore, by the same arguments as in Section 14.2 in AS, where we replace Lemma A1 (in AS) for $(\beta_{a_n},P_{a_n})=(\tilde{\beta},P)$ by Lemma \ref{lem:lemma_1} for $(\beta_{a_n},P_{a_n})=(\tilde{\beta},P)$, Theorem \ref{thm:asymptotic property}(b) holds.
\end{proof}

%%%%%%%%%%%%%%%%%%%%%%%%%%%%%%%%%%%%%%%%%%%%%%%%
%%%%%%%%%%%%%%%%%%%%%%%%%%%%%%%%%%%%%%%%%%%%%%%%
\bigskip
\section{Inference for the Transformation Function \label{app:joint_inference}}

This section provides a joint inference procedure for $\beta_{0}$ and $\{T\left(y\right)\}_{y \in \By}$, where $\By$ is a finite set of $y$. Let $q$ be the dimension of $\By$ and denote $\By = (y_{1},\ldots,y_{q})^{\prime}$.
We first note that with fixed $y \in \Real$ and $\beta$, $T_{I,\beta}(y)$ is equivalent to the set of all values $t \in \Real$ that satisfy the following conditional moment inequality condition:
\begin{align}
    E_{P}\left[m^{\dagger}(W_{i},W_{j},\beta,y,t)\mid x_{i},x_{j}\right] \geq 0 
    \mbox{ for all $(x_{i},x_{j}) \in \MX^{2}$}, 
    \label{eq:conditional moment inequality_T}
\end{align}
where $m^{\dagger}(W_{i},W_{j},\beta,y,t)$ is a moment function defined as
\begin{align*}
m^{\dagger}\left(W_{i},W_{j},\beta,y,t\right) & \equiv  \left(I\left[Y_{1i}\geq y\right]-I\left[Y_{0j}\geq\tilde{y}\right]\right)\cdot I[X_{i}^{\prime}\beta-X_{j}^{\prime}\beta\geq t]\\
 & +\left(I\left[Y_{1j}\geq y\right]-I\left[Y_{0i}\geq\tilde{y}\right]\right)\cdot I[X_{j}^{\prime}\beta-X_{i}^{\prime}\beta\geq t].
\end{align*}
This is obvious from the definition of $T_{I,\beta}(y)$. 
The joint inference procedure for $\beta_0$ and $\{T\left(y\right)\}_{y \in \By}$
is then constructed by applying the conditional moment inequality inference approach presented in Section \ref{sec:inference} to the conditional moment inequalities (\ref{eq:conditional moment inequality}) and (\ref{eq:conditional moment inequality_T}).
\par

Fix $y \in \By$. Let $m^{\dagger}\left(W_{i},W_{j},\beta,y,t,g\right) \equiv m^{\dagger}\left(W_{i},W_{j},\beta,y,t\right)\cdot g(x_{i},x_{j})$
for any $g\in\mathcal{G}$, where $\MG$ is the set of instrumental functions defined in Section \ref{sec:test statistics}. Then $T_{I,\beta}\left(y\right)$ is equivalent to
\begin{equation*}
\left\{ t\in\mathbb{R}:\ E_{P}\left[m^{\dagger}\left(W_{i},W_{j},\beta,y,t,g\right)\right]\geq0\ \mbox{for all}\ g\in\mathcal{G}\right\} .
\end{equation*}

Define the sample moment function and sample variance function, respectively, by 
\begin{align*}
\bar{m}_{n}^{\dagger}\left(\beta,y,t,g\right) \equiv \frac{1}{n\left(n-1\right)}\sum_{i\neq j}m^{\dagger}\left(W_{i},W_{j},\beta,y,t,g\right)
\end{align*}
and 
\begin{align*}
\hat{\sigma}_{n}^{\dagger2}\left(\beta,y,t,g\right) \equiv & \left\{ \frac{1}{n\left(n-1\right)\left(n-2\right)}\sum_{i\neq j\neq k}m^{\dagger}\left(W_{i},W_{j},\beta,y,t,g\right)m^{\dagger}\left(W_{i},W_{k},\beta,y,t,g\right)\right.\\
 &  \left.-\left(\frac{1}{n(n-1)}\sum_{i\neq j}m^{\dagger}\left(W_{i},W_{j},\beta,y,t,g\right)\right)^{2}\right\} .
\end{align*}
Since the function $\bar{m}_{n}^{\dagger}\left(\beta,y,t,g\right)$ is a U-statistic of order two, the estimator of its asymptotic variance, $\hat{\sigma}_{n}^{\dagger2}\left(\beta,y,t,g\right)$, is constructed with a similar form to $\hat{\sigma}_{n}^{2}\left(\beta,g\right)$ in Section \ref{sec:test statistics}. In practice, we use the modified sample variance function: 
\begin{align*}
\bar{\sigma}_{n}^{\dagger2}\left(\beta,y,t,g\right)  \equiv  \hat{\sigma}_{n}^{\dagger2}\left(\beta,y,t,g\right)+\epsilon\hat{\sigma}_{n}^{\dagger2}(y),
\end{align*}
where $\hat{\sigma}_{n}^{\dagger2}(y)\equiv\hat{\sigma}_{n}^{\dagger2}\left(\beta,y,t,1\right)$
and $\epsilon$ is the regularization parameter (e.g., $\epsilon=0.0001$).
\par
Then, for any $\beta \in B$ and $\Bt=(t_1,\ldots,t_{q})^{\prime} \in \Real^{q}$, an approximate test statistic for a null hypothesis:
\begin{align*}
    H_{0}:&\ E_{P}\left[m(W_{i},W_{j},\beta)\mid x_{i},x_{j}\right] \geq 0 \mbox{ and }
    E_{P}\left[m^{\dagger}(W_{i},W_{j},\beta,y_{\ell},t_{\ell}) \mid x_{i}, x_{j}\right] \geq 0 \\
    & \mbox{for all }\ell=1,\ldots,q \mbox{ and } (x_{i},x_{j}) \in \MX^{2}
\end{align*}
is constructed as
\begin{align*}
T_{\By,n,R}^{\dagger}(\beta,\Bt) \equiv &\ \sum_{r=1}^{R}\left(r^{2}+100\right)^{-1}
\sum_{\left(a,\tilde{a}\right)\in \left\{ 1,\cdots,2r\right\}^{2p}}
\sum_{(b,\tilde{b})\in \MX_{2}^{2}}
\frac{1}{(2r)^{2p} \cdot |\MX_{2}|^{2}}
\left(\left[\frac{n^{\frac{1}{2}}\bar{m}_{n}\left(\beta,g_{(a,b),(\tilde{a},\tilde{b}),r}\right)}{\bar{\sigma}_{n}\left(\beta,g_{(a,b),(\tilde{a},\tilde{b}),r}\right)}\right]_{-}^{2}\right.\\
 &  \left.+ \sum_{j=1}^{q}\left[\frac{n^{\frac{1}{2}}\bar{m}_{n}^{\dagger}\left(\beta,y_{j},t_{j},g_{(a,b),(\tilde{a},\tilde{b}),r}\right)}{\bar{\sigma}_{n}^{\dagger}\left(\beta,y_{j},t_{j},g_{(a,b),(\tilde{a},\tilde{b}),r}\right)}\right]_{-}^{2}\right),
\end{align*}
where $R$ is a truncation integer chosen by the researcher. Note that this test statistic comprises a normalized sample moment function for $\beta$, $\bar{m}_{n}\left(\beta,g_{(a,b),(\tilde{a},\tilde{b}),r}\right)/\bar{\sigma}_{n}\left(\beta,g_{(a,b),(\tilde{a},\tilde{b}),r}\right)$, and $q$ normalized sample moment functions for $T(y_j)$, $\bar{m}_{n}^{\dagger}\left(\beta,y_{j},t,g_{(a,b),(\tilde{a},\tilde{b}),r}\right)/\bar{\sigma}_{n}^{\dagger}\left(\beta,y_{j},t,g_{(a,b),(\tilde{a},\tilde{b}),r}\right)$, with $j=1,\ldots,q$.
\par
We can compute the critical value for $T_{\By,n,R}^{\dagger}(\beta,\Bt)$ as a simulated quantile of
\begin{align*}
T_{\By,n,R}^{\dagger,Asy}(\beta,\Bt)  \equiv &\ \sum_{r=1}^{R}\left(r^{2}+100\right)^{-1}
\sum_{\left(a,\tilde{a}\right)\in \left\{ 1,\cdots,2r\right\}^{2p}}
\sum_{(b,\tilde{b})\in \MX_{2}^{2}}
\frac{1}{(2r)^{2p} \cdot |\MX_{2}|^{2}} \\
 & \times \left(\left[\frac{v_{n}\left(\beta,g_{(a,b),(\tilde{a},\tilde{b}),r}\right)+\varphi_{n}\left(\beta,g_{(a,b),(\tilde{a},\tilde{b}),r}\right)}{\bar{\sigma}_{n}\left(\beta,g_{(a,b),(\tilde{a},\tilde{b}),r}\right)}\right]_{-}^{2}\right.\\
   & \left.+\sum_{j=1}^{q}\left[\frac{v_{n}^{\dagger}\left(\beta,y_{j},t_{j},g_{(a,b),(\tilde{a},\tilde{b}),r}\right)+\varphi_{n}^{\dagger}\left(\beta,y_{j},t_{j},g_{(a,b),(\tilde{a},\tilde{b}),r}\right)}{\bar{\sigma}_{n}^{\dagger}\left(\beta,y_{j},t_{j},g_{(a,b),(\tilde{a},\tilde{b}),r}\right)}\right]_{-}^{2}\right),
\end{align*}
where $\left(v_{n}^{\dagger}\left(\beta,y,t,g\right)\right)_{g\in{\cal G}}$ is a zero mean Gaussian process with a covariance kernel evaluated by
\begin{eqnarray*}
\hat{h}_{2}^{\dagger}\left(\beta,y,t,g,g^{\ast}\right) & \equiv & \left\{ \frac{1}{n\left(n-1\right)\left(n-2\right)}\sum_{i\neq j\neq k}m^{\dagger}\left(W_{i},W_{j},\beta,y,t,g\right)m^{\dagger}\left(W_{i},W_{k},\beta,y,t,g^{\ast}\right)\right.\\
 &  & \left.-\left[\frac{1}{n\left(n-1\right)}\sum_{i\neq j}m^{\dagger}\left(W_{i},W_{j},\beta,y,t,g\right)\right]\cdot\left[\frac{1}{n\left(n-1\right)}\sum_{i\neq j}m^{\dagger}\left(W_{i},W_{j},\beta,y,t,g^{\ast}\right)\right]\right\},
\end{eqnarray*}
and $\varphi_{n}^{\dagger}\left(\beta,y,t,g_{(a,b),(\tilde{a},\tilde{b}),r}\right)$ is a GMS function given by
\begin{equation*}
\varphi_{n}^{\dagger}\left(\beta,y,t,g_{(a,b),(\tilde{a},\tilde{b}),r}\right)\equiv\hat{\sigma}_{n}^{\dagger2}(y)B_{n}I\left[\kappa_{n}^{-1}n^{\frac{1}{2}}\bar{m}_{n}^{\dagger}\left(\beta,y,t,g_{(a,b),(\tilde{a},\tilde{b}),r}\right)/\bar{\sigma}_{n}^{\dagger}\left(\beta,y,t,g_{(a,b),(\tilde{a},\tilde{b}),r}\right)>1\right].
\end{equation*}
Here $B_{n}$ and $\kappa_{n}$ are two tuning parameters that should satisfy Assumption \ref{asm:tuning parameters}. (e.g., $\kappa_{n}=\left(\left(1-\hat{p}_{1-D}^{1/3}\right)^{2/5}\times0.6\ln(n)\right)^{\frac{1}{2}}$
and $B_{n}=\left(0.8\ln\left(n\right)/\ln\ln\left(n\right)\right)^{\frac{1}{2}}$).
For a significance level of $\alpha<1/2$, let $\hat{c}_{\By,\eta,1-\alpha}\left(\beta,\Bt\right)$ be the $1-\alpha+\eta$ sample quantile of $T_{\By,n,R}^{Asy}(\beta,\Bt)$ given an arbitrarily small value $\eta$ (e.g., $\eta=10^{-6}$). Then
the $\left(1-\alpha\right)$-level confidence set for $\left(\beta_{0},T\left(y_{1}\right),\ldots,T\left(y_{q}\right)\right)$
is computed as
\begin{equation*}
\left\{ \left(\beta,\Bt\right)\in B\times\Real^{q}:T_{\By,n,R}(\beta,\Bt)\leq\hat{c}_{\By,\eta,1-\alpha}\left(\beta,\Bt\right)\right\} .
\end{equation*}
The size and power properties stated in Theorem \ref{thm:asymptotic property} should apply to this confidence set. Monte Carlo simulation in Section D in the supplementary material examines the finite sample performance of this joint inference procedure.

%%%%%%%%%%%%%%%%%%%%%%%%%%%%%%%%%%%%%%

%%%%%%%%%%%%%%%%%%%%%%%%%%%%%%%%%%%%%%%%%%
\bigskip

\bibliographystyle{ecta}
\bibliography{ref_censoring}

\begin{thebibliography}{44}
\newcommand{\enquote}[1]{``#1''}
\expandafter\ifx\csname natexlab\endcsname\relax\def\natexlab#1{#1}\fi

\bibitem[\protect\citeauthoryear{Andrews and Shi}{Andrews and
  Shi}{2013}]{Andrews_Shi_2013}
\textsc{Andrews, D. and X.~Shi} (2013): \enquote{{Inference based on
  conditional moment inequalities},} \emph{Econometrica}, 81, 609--666.

\bibitem[\protect\citeauthoryear{Andrews and Shi}{Andrews and
  Shi}{2014}]{Andrews_Shi_2014}
---\hspace{-.1pt}---\hspace{-.1pt}--- (2014): \enquote{{Nonparametric inference
  based on conditional moment inequalities},} \emph{Journal of Econometrics},
  179, 31--45.

\bibitem[\protect\citeauthoryear{Andrews and Shi}{Andrews and
  Shi}{2017}]{Andrews_Shi_2017}
---\hspace{-.1pt}---\hspace{-.1pt}--- (2017): \enquote{{Inference based on many
  conditional moment inequalities},} \emph{Journal of Econometrics}, 196,
  275--287.

\bibitem[\protect\citeauthoryear{Armstrong}{Armstrong}{2014}]{Armstrong_2014}
\textsc{Armstrong, T.} (2014): \enquote{{Weighted KS statistics for inference
  on conditional moment inequalities.}} \emph{Journal of Econometrics}, 181,
  92--116.

\bibitem[\protect\citeauthoryear{Armstrong}{Armstrong}{2015}]{Armstrong_2015}
---\hspace{-.1pt}---\hspace{-.1pt}--- (2015): \enquote{{Asymptotically exact
  inference in conditional moment inequality models},} \emph{Journal of
  Econometrics}, 186, 51--65.

\bibitem[\protect\citeauthoryear{Beresteanu, Molchanov, and
  Molinari}{Beresteanu et~al.}{2011}]{Beresteanu_et_al_2011}
\textsc{Beresteanu, A., I.~Molchanov, and F.~Molinari} (2011): \enquote{{Sharp
  identification regions in models with convex moment predictions},}
  \emph{Econometrica}, 79, 1785--1821.

\bibitem[\protect\citeauthoryear{Beresteanu, Molchanov, and
  Molinari}{Beresteanu et~al.}{2012}]{Beresteanu_et_al_2012}
---\hspace{-.1pt}---\hspace{-.1pt}--- (2012): \enquote{{Partial identification
  using random set theory},} \emph{Journal of Econometrics}, 166, 17--92.

\bibitem[\protect\citeauthoryear{Blevins}{Blevins}{2011}]{Blevins_2011}
\textsc{Blevins, J.} (2011): \enquote{{Partial identification and inference in
  binary choice and duration panel data models},} Working Paper.

\bibitem[\protect\citeauthoryear{Chen}{Chen}{2002}]{Chen_2002}
\textsc{Chen, S.} (2002): \enquote{{Rank estimator of transformation models},}
  \emph{Econometrica}, 70, 1683--1697.

\bibitem[\protect\citeauthoryear{Chernozhukov, Chetverikov, and
  Kato}{Chernozhukov et~al.}{2019}]{Chernozhukov_et_al_2019}
\textsc{Chernozhukov, V., D.~Chetverikov, and K.~Kato} (2019):
  \enquote{Inference on causal and structural parameters using many moment
  inequalities,} \emph{Review of Economic Studies}, 86, 1867--1900.

\bibitem[\protect\citeauthoryear{Chernozhukov, Lee, and Rosen}{Chernozhukov
  et~al.}{2013}]{Chernozhukov_et_al_2013}
\textsc{Chernozhukov, V., S.~Lee, and A.~Rosen} (2013): \enquote{{Inference
  with intersection bounds},} \emph{Econometrica}, 81, 667--737.

\bibitem[\protect\citeauthoryear{Chiappori, Komunjerb, and
  Kristensen}{Chiappori et~al.}{2015}]{Chiappori_et_al_2015}
\textsc{Chiappori, P.~A., I.~Komunjerb, and D.~Kristensen} (2015):
  \enquote{Nonparametric identification and estimation of transformation
  models,} \emph{Journal of Econometrics}, 188, 22--39.

\bibitem[\protect\citeauthoryear{Cox}{Cox}{1972}]{Cox_1972}
\textsc{Cox, D.} (1972): \enquote{{Regression models and life tables},}
  \emph{Journal of the Royal Statistical Society (Series B)}, 34, 187--220.

\bibitem[\protect\citeauthoryear{Fan and Liu}{Fan and Liu}{2018}]{Fan_Liu_2018}
\textsc{Fan, Y. and R.~Liu} (2018): \enquote{{Partial identification and
  inference in censored quantile regression},} \emph{Journal of Econometrics},
  206, 1--38.

\bibitem[\protect\citeauthoryear{Han}{Han}{1987}]{Han_1987}
\textsc{Han, A.} (1987): \enquote{{Non-parametric analysis of a generalized
  regression model},} \emph{Journal of Econometrics}, 35, 303--316.

\bibitem[\protect\citeauthoryear{Hong and Tamer}{Hong and
  Tamer}{2003}]{Hong_Tamer_2003}
\textsc{Hong, H. and E.~Tamer} (2003): \enquote{{Inference in censored models
  with endogenous regressors},} \emph{Econometrica}, 71, 905--932.

\bibitem[\protect\citeauthoryear{Honor\'e, Khan, and Powell}{Honor\'e
  et~al.}{2002}]{Honore_et_al_2002}
\textsc{Honor\'e, B., S.~Khan, and J.~Powell} (2002): \enquote{{Quantile
  regression under random censoring},} \emph{Journal of Econometrics}, 109,
  67--105.

\bibitem[\protect\citeauthoryear{Honor\'e and Lleras-Muney}{Honor\'e and
  Lleras-Muney}{2006}]{Honore_Lleras_Muney_2006}
\textsc{Honor\'e, B. and A.~Lleras-Muney} (2006): \enquote{{Bounds in competing
  risks models and the war on cancer},} \emph{Econometrica}, 74, 1675--1698.

\bibitem[\protect\citeauthoryear{Honor{\'e} and Hu}{Honor{\'e} and
  Hu}{2020}]{Honore_Hu_2020}
\textsc{Honor{\'e}, B.~E. and L.~Hu} (2020): \enquote{Selection without
  exclusion,} \emph{Econometrica}, 88, 1007--1029.

\bibitem[\protect\citeauthoryear{Horowitz}{Horowitz}{2009}]{Horowitz_2009}
\textsc{Horowitz, J.} (2009): \emph{{Semiparametric and Nonparametric Methods
  in Econometrics}}, New York: Springer-Verlag.

\bibitem[\protect\citeauthoryear{Kalbfleisch and Prentice}{Kalbfleisch and
  Prentice}{1980}]{Kalbfleisch_Prentice_1980}
\textsc{Kalbfleisch, J. and R.~Prentice} (1980): \emph{{The Statistical
  Analysis of Failure Time Data}}, New York: Wiley.

\bibitem[\protect\citeauthoryear{Khan, Ponomareva, and Tamer}{Khan
  et~al.}{2011}]{Khan_et_al_2011}
\textsc{Khan, S., M.~Ponomareva, and E.~Tamer} (2011): \enquote{{Sharpness in
  randomly censored linear models},} \emph{Economics Letters}, 113, 23--25.

\bibitem[\protect\citeauthoryear{Khan, Ponomareva, and Tamer}{Khan
  et~al.}{2016}]{Khan_et_al_2016}
---\hspace{-.1pt}---\hspace{-.1pt}--- (2016): \enquote{{Identification of panel
  data models with endogeneous censoring},} \emph{Journal of Econometrics},
  194, 57--75.

\bibitem[\protect\citeauthoryear{Khan and Tamer}{Khan and
  Tamer}{2007}]{Khan_Tamer_2007}
\textsc{Khan, S. and E.~Tamer} (2007): \enquote{{Partial rank estimation of
  duration models with general forms of censoring},} \emph{Journal of
  Econometrics}, 136, 251--280.

\bibitem[\protect\citeauthoryear{Khan and Tamer}{Khan and
  Tamer}{2009}]{Khan_Tamer_2009}
---\hspace{-.1pt}---\hspace{-.1pt}--- (2009): \enquote{{Inference on
  endogenously censored regression models using conditional moment
  inequalities},} \emph{Journal of Econometrics}, 152, 104--119.

\bibitem[\protect\citeauthoryear{Kim}{Kim}{2018}]{Kim_2018}
\textsc{Kim, D.} (2018): \enquote{{Partially identifying competing risks
  models: Applications to the war on cancer and unemployment spells},} Job
  Market Paper.

\bibitem[\protect\citeauthoryear{Komarova}{Komarova}{2013}]{Komarova_2013}
\textsc{Komarova, T.} (2013): \enquote{{Binary choice models with discrete
  regressors: Identification and misspecification},} \emph{Journal of
  Econometrics}, 177, 14--33.

\bibitem[\protect\citeauthoryear{Lancaster}{Lancaster}{1979}]{Lancaster_1979}
\textsc{Lancaster, T.} (1979): \enquote{{Econometric methods for the duration
  of unemployment},} \emph{Econometrica}, 47, 939--957.

\bibitem[\protect\citeauthoryear{Lancaster}{Lancaster}{1990}]{Lancaster_1990}
---\hspace{-.1pt}---\hspace{-.1pt}--- (1990): \emph{{The Econometric Analysis
  of Transition Data}}, Cambridge: Cambridge University Press.

\bibitem[\protect\citeauthoryear{Li and Oka}{Li and Oka}{2015}]{Li_Oka_2015}
\textsc{Li, T. and T.~Oka} (2015): \enquote{{Set identification of the censored
  quantile regression model for short panels with fixed effects},}
  \emph{Journal of Econometrics}, 188, 363--377.

\bibitem[\protect\citeauthoryear{Magnac and Maurin}{Magnac and
  Maurin}{2008}]{Magnac_Maurin_2008}
\textsc{Magnac, T. and E.~Maurin} (2008): \enquote{{Partial identification in
  monotone binary models: Discrete regressors and interval data},} \emph{Review
  of Economic Studies}, 75, 835--864.

\bibitem[\protect\citeauthoryear{Menzel}{Menzel}{2014}]{Menzel_2014}
\textsc{Menzel, K.} (2014): \enquote{{Consistent estimation with many moment
  inequalities},} \emph{Journal of Econometrics}, 182, 329--350.

\bibitem[\protect\citeauthoryear{Molchanov}{Molchanov}{2005}]{Molchanov_2005}
\textsc{Molchanov, I.} (2005): \emph{{Theory of Random Sets}}, London:
  Springer-Verlag.

\bibitem[\protect\citeauthoryear{Nolan and Pollard}{Nolan and
  Pollard}{1987}]{Nolan_Pollard_1987}
\textsc{Nolan, D. and D.~Pollard} (1987): \enquote{{U-processes: Rates of
  convergence},} \emph{Annals of Statistics}, 15, 780--799.

\bibitem[\protect\citeauthoryear{Nolan and Pollard}{Nolan and
  Pollard}{1988}]{Nolan_Pollard_1988}
---\hspace{-.1pt}---\hspace{-.1pt}--- (1988): \enquote{{Functional limit
  theorems for U-processes},} \emph{Annals of Probability}, 16, 1291--1298.

\bibitem[\protect\citeauthoryear{Pollard}{Pollard}{1990}]{Pollard_1990}
\textsc{Pollard, D.} (1990): \enquote{{Empirical Process: Theory and
  Application},} in \emph{NSF-CBMS Regional Conference Series in Probability
  and Statistics}, Hayward: Institute of Mathematical Statistics, vol.~II.

\bibitem[\protect\citeauthoryear{Powell}{Powell}{1984}]{Powell_1984}
\textsc{Powell, J.} (1984): \enquote{{Least absolute deviations estimation for
  the censored regression model},} \emph{Journal of Econometrics}, 25,
  303--325.

\bibitem[\protect\citeauthoryear{Sherman}{Sherman}{1994}]{Sherman_1994}
\textsc{Sherman, R.} (1994): \enquote{{Maximal inequalities for degenerate
  U-processes with applications to optimization estimators},} \emph{Annals of
  Statistics}, 22, 439--459.

\bibitem[\protect\citeauthoryear{Szyd{\l}owski}{Szyd{\l}owski}{2019}]{Szydlowski_2019}
\textsc{Szyd{\l}owski, A.} (2019): \enquote{{Endogenous censoring in the mixed
  proportional hazard model with an application to optimal unemployment
  insurance},} \emph{Journal of Applied Econometrics}, 34, 1086--1011.

\bibitem[\protect\citeauthoryear{van~den Berg}{van~den
  Berg}{2001}]{van_den_Berg_2001}
\textsc{van~den Berg, G.} (2001): \enquote{{Duration models: Specification,
  identification and multiple durations},} in \emph{Handbook of Econometrics},
  ed. by J.~Heckman and E.~Leamer, Amsterdam: Elsevier, vol.~5, 3381--3460.

\bibitem[\protect\citeauthoryear{van~der Vaart}{van~der
  Vaart}{1998}]{van_der_Vaart_1998}
\textsc{van~der Vaart, A.} (1998): \emph{{Asymptotic Statistics}}, Cambridge:
  Cambridge University Press.

\bibitem[\protect\citeauthoryear{van~der Vaart and Wellner}{van~der Vaart and
  Wellner}{1996}]{van_der_vaart_Wellner_1996}
\textsc{van~der Vaart, A. and J.~Wellner} (1996): \emph{{Weak Convergence and
  Empirical Processes}}, New York: Springer.

\bibitem[\protect\citeauthoryear{Yang}{Yang}{1999}]{Yang_1999}
\textsc{Yang, S.} (1999): \enquote{{Censored median regression using weighted
  empirical survival and hazard functions},} \emph{Journal of the American
  Statistical Association}, 94, 137--145.

\bibitem[\protect\citeauthoryear{Ying, Jung, and Wei}{Ying
  et~al.}{1995}]{Ying_et_al_1995}
\textsc{Ying, Z., S.~Jung, and L.~Wei} (1995): \enquote{{Survival analysis with
  median regression models},} \emph{Journal of the American Statistical
  Association}, 90, 178--184.

\end{thebibliography}

%\newpage

\end{spacing}

\end{document}